\documentclass[12pt,onecolumn, journal]{IEEEtran}
%% INFOCOM 2011 addition for page numbers
%\makeatletter
%\def\ps@headings{%
%\def\@oddhead{\mbox{}\scriptsize\rightmark \hfil \thepage}%
%\def\@evenhead{\scriptsize\thepage \hfil \leftmark\mbox{}}%
%\def\@oddfoot{}%
%\def\@evenfoot{}}
%\makeatother
%\pagestyle{headings}
\usepackage{layout,calc}
\usepackage{times,amssymb,amsmath,epsfig,amsthm}
\usepackage{citesort}
\usepackage{subfigure}
\usepackage{setspace}
\usepackage{url}
\usepackage{cite}
 \usepackage{setspace}
\usepackage[usenames,dvipsnames,svgnames,table]{xcolor}

%% *** useful commands for editing ***
\usepackage{color}
\newcommand{\hide}[1]{}
\definecolor{gray}{rgb}{0.5,0.5,0.5}

\newtheorem{theorem}{Theorem}
   \newtheorem{lemma}[theorem]{Lemma}
   \newtheorem{corollary}[theorem]{Corollary}
   \newtheorem{definition}{Definition}

\usepackage{subfigure}
\doublespacing
\IEEEoverridecommandlockouts
%\IEEEpubid{\makebox[\columnwidth]{978-1-4673-0298-2/12/\$31.00~\copyright~2012
%IEEE \hfill} \hspace{\columnsep}\makebox[\columnwidth]{ }}
\begin{document}
\title{On the Capacity Region and the Generalized Degrees of Freedom Region for the MIMO Interference Channel with Feedback}
\author{Mehdi Ashraphijuo, Vaneet Aggarwal and Xiaodong Wang \thanks{

The material in this paper was presented in part at the IEEE International Symposium on Information Theory, Istanbul, Turkey, July 2013.

M. Ashraphijuo and X. Wang are with the Electrical Engineering Department, Columbia University, New York, NY 10027, email: \{mehdi,wangx\}@ee.columbia.edu. V. Aggarwal is with AT\&T Labs-Research, Florham Park, NJ 07932, email: vaneet@research.att.com}}

\maketitle
\begin{abstract}
In this paper, we study the effect of feedback on the two-user MIMO interference channel. The capacity region of the MIMO interference channel with feedback is characterized within a constant number of bits, where this constant is independent of the channel matrices. Further, it is shown that the capacity region of the MIMO interference channel with feedback and its reciprocal interference channel are within a constant number of bits. Finally, the generalized degrees of freedom region for the MIMO interference channel with feedback is characterized.
\end{abstract}

\newpage

\section{Introduction}
Wireless networks with multiple users are interference-limited rather than noise-limited. The interference channel (IC) is a good starting point for understanding the performance limits of the interference limited communications \cite{Etkin,araja,sason,Varanasi,gdof,gdof.tarokh,Jafar}. Feedback can be employed in the ICs to achieve an improvement in the data rates \cite{Tse,Tuninetti1,Tuninetti2,Achal1st,Achal,Alireza}. However, most of the existing works on the ICs with feedback are limited to discrete memoryless channels, or the single-input single-output (SISO) channels. This paper analyzes the multiple-input multiple-output (MIMO) Gaussian IC with feedback. %In particular, we focus on a two-user MIMO IC with feedback.

{
In this paper, we consider the two-user MIMO IC with perfect channel state knowledge at the transmitters and receivers. In large wireless networks, having global knowledge of the channel state is infeasible and thus the authors of \cite{Lozano} found a saturation effect in the system capacity.  In this paper, we assume that all the nodes know the channel state information of all the links to find the impact of feedback to the transmitters, which is a fundamental question on its own. While the overhead of gathering global channel state information must not be neglected, it has been repeatedly shown (cf. \cite{Caire1,Caire2}) that this overhead is manageable in the presence of a reduced number of users. This overhead increases as the number of users increases, and thus some authors have considered knowledge of channel state in a local neighborhood \cite{Vaneet2,Kanes}. With the local network connectivity and channel state information, sub-networks can be scheduled where each sub-network is operated using an information-theoretic optimal scheme \cite{Pedro,Pedro2}. Thus, even with the knowledge of the local channel state information, understanding of small networks can help improve throughput of large networks.}
{

%The authors of \cite{Vaneet2} considered the impact of knowing the local channel state information and proposed a sub-network scheduling scheme, where a sum-capacity achieving scheme is used on the sub-network, and a graph scheduling algorithm is performed over the sub-networks. They showed that there is no saturation effect for some classes of graph connectivity even with very limited channel state knowledge. For instance, for a Z-chain or a folded-chain with constant receiver degree,  graph scheduling can be used to give unbounded capacity (with an increasing signal to noise ratio) even with a few hops of channel state information\cite{Kanes,Vaneet3}. The above scheduling algorithm has been extended to the case when the connectivity information is known only in the local neighborhood \cite{Pedro}. For a large fully connected interference network with a local channel state knowledge of one hop, the saturation effect holds. However, large cellular networks do not have large number of connections to each receiver in general \cite{Wyner}. Thus, even with the knowledge of the local channel state information, understanding of small networks can help to improve throughput of large networks.

Finding a capacity achieving scheme for an IC with more than two users is an open problem, and assumptions like treating interference as noise have been used \cite{Lozano,kkkkk,kkkkkKK}. An approximate capacity region  for the two-user SISO IC was given in \cite{Etkin}, which has been further extended to the MIMO IC in \cite{Varanasi}. Even an approximate capacity region is an open problem beyond two-user IC, although capacity regions have been found in some special cases like double-Z \cite{vaneet}, one-to-many \cite{bresler}, many-to-one \cite{bresler}, and cyclic \cite{cyc} ICs. In the presence of feedback, an approximate capacity region for the two-user SISO IC was recently given in \cite{Tse}, where the capacity region is characterized within two bits.} It was shown that the capacity regions of Gaussian ICs increase unboundedly with feedback unlike the Gaussian multiple-access channel where the gains are bounded \cite{ozarow}. The degrees of freedom for a symmetric SISO Gaussian IC with feedback is also found in \cite{Tse}. In this paper, we find an outer bound and an inner bound for the capacity region that differ by a constant number of bits, and also evaluate the generalized degrees of freedom (GDoF) region for a general MIMO IC with feedback.

The first main result of the paper is the characterization of the capacity region of a MIMO IC with feedback within $N_1+N_2+\max(N_1,N_2)$ bits, where $N_1$ and $N_2$ are the numbers of receive antennas at the two receivers. An outer-bound is obtained by first outer bounding the covariance matrices of both input signals and representing the outer bound as a region in terms of the covariance matrix between the two input signals. This is further outer-bounded by a larger region that does not involve the covariance matrix. The achievability strategy is  based on block Markov encoding, backward decoding, and Han-Kobayashi message-splitting. This achievable rate and the outer bound are within  $N_1+N_2+\max(N_1,N_2)$ bits of each other thus characterizing the capacity region of the two-user IC within constant number of bits where the constant is independent of the channel matrices. The achievability scheme that is used to prove the constant gap result assumes that the transmitted signals from the two transmitters in a time-slot are uncorrelated, unlike \cite{Tse} where the signals were assumed correlated in the achievability. Thus, our achievable rate region is within $3$ bits rather than $2$ bits as in \cite{Tse} of the capacity region of a SISO IC with feedback. An achievability scheme without correlated inputs was also shown to achieve within constant gap of the capacity region in \cite{Achal} for a SISO IC with feedback. However, our gap between the inner and the outer bounds is smaller as compared to \cite{Achal}.

We note that the achievability strategies for a SISO IC in \cite{Tse,Achal} emphasize that the private part from a transmitter using the Han-Kobayashi message splitting is such that it is received at the other receiver at the noise floor. However for a MIMO IC with feedback, it is not clear what its counterpart would be. The Han-Kobayashi message splitting used in this paper gives the notion of receiving the signal at the noise floor for a MIMO IC with feedback. Many matrix based results are derived in this paper to show a constant gap between the outer and the inner bounds of the capacity region of a MIMO IC with feedback, which may be of independent interest.

The second main result of the paper is to show that the capacity region of a MIMO IC with feedback and that of its corresponding reciprocal channel are within constant number of bits of each other, where the constant is independent of channel matrices. The reciprocal IC was considered in \cite{Varanasi}, where the authors showed that the capacity region of a MIMO IC without feedback is within constant number of bits of its corresponding reciprocal IC. This paper shows that the constant gap between a MIMO IC and its reciprocal channel also holds in the presence of feedback.

{
Most developments on the IC take place in the high-power regime, and the GDoF region characterizes the capacity region in the limit of high-power. Thus, we further extend our results to high power regime to get more understanding on the improvement in the capacity region with feedback. The GDoF region has been characterized in the symmetric case without feedback \cite{K3} and with feedback \cite{Moh} for a $K$-user SISO IC. For a general MIMO IC without feedback, the GDoF region is found for a two-user IC in \cite{gdof}.}

The third main result of the paper is a complete characterization of the GDoF region of a general MIMO IC with feedback when the average signal quality of each link, say $\rho_{ij}$ for link from transmitter $i$ to receiver $j$, varies with a  base signal-to-noise ratio (SNR) parameter, say $\mathsf{SNR}$, as $\lim_{\mathsf{SNR}\to\infty} \frac{\log{\rho_{ij}}}{\log \mathsf{SNR}} = \alpha_{ij}$, where $\alpha_{ij}$ can be different for each link with $i,j\in \{1,2\}$. In other words, the average link quality of each link can potentially have different exponents of a base SNR. As a special case, we consider a symmetric IC where the number of antennas at both transmitters is the same, the number of antennas at both receivers is the same, and the SNRs for the direct links and the cross links are $\mathsf{SNR}$ and $\mathsf{SNR}^\alpha$, $\alpha\ge 0$, respectively. We find the GDoF (the maximum symmetric point in the GDoF region) for a given $\alpha$ and show that the GDoF is a ``V"-curve rather than a ``W"-curve corresponding to the GDoF without feedback as in \cite{gdof}. Similar result was obtained for a SISO IC in \cite{Tse} while this paper extends it to a MIMO system.

The remainder of the paper is organized as follows. Section II introduces the model for a MIMO IC with feedback, reciprocal IC and the GDoF region. Sections III and IV describe our results on the capacity region and the GDoF region respectively. Section V concludes the paper. The detailed proofs of various results are given in Appendices A-E.

%A different line of research was initiated by Etkin et al. \cite{Etkin}, where the authors find an approximation of the capacity region of the two-user scalar Gaussian IC where the criterion of approximation is to specify the capacity region to within a constant gap independently of SNR and the direct and cross channel coefficients. Moreover, they obtain that result through a simple HK scheme, i.e., by identifying a single, channel parameter dependent, joint distribution of input and auxiliary random variables among the infinitely many possible specifications including time-sharing that together contribute to the general HK rate region. The key feature of this simple HK scheme is that each user employs independent Gaussian superposition coding of private and public messages with the private message power set so that it reaches the unintended receiver at the noise level. A 1 bit gap to capacity was proved in \cite{Etkin}. Thus, their result characterizes the capacity region to within a constant gap that is independent of the SNR and all channel coefficients. Moreover, it identifies a simple HK scheme that has this property thereby also providing an explicit expression for the achievable rate region in terms of channel parameters.

\section{Channel Model and Preliminaries}
In this section, we describe the channel model considered in this paper. A two-user MIMO IC consists of two transmitters and two receivers. Transmitter $i$ is labeled as $\mathsf{T}_i$ and receiver $j$ is labeled as $\mathsf{D}_j$ for $i,j \in \{1,2\}$. Further, we assume $\mathsf{T}_i$ has $M_i$ antennas and $\mathsf{D}_i$ has $N_i$ antennas, $i \in \{1,2\}$. Henceforth, such a MIMO IC will be referred to  as the $(M_1,N_1,M_2,N_2)$ MIMO IC. We assume that the channel matrix between transmitter $\mathsf{T}_i$ and receiver $\mathsf{D}_j$ is denoted by  $H_{ij}\in \mathbb{C}^{N_j\times M_i}$, for $i,j \in \{1,2\}$. We shall consider a time-invariant or fixed channel where the channel matrices remain fixed for the entire duration of communication. At each discrete time instance, indexed by $t=1, 2, \cdots$, transmitter $\mathsf{T}_i$ transmits a vector $X_{i}[t]\in \mathbb{C}^{M_i\times 1}$ over the channel with a power constraint ${{\rm tr}(\mathbb{E}(X_{i}X^{\dagger }_{i}))}\le 1$ ($A^\dagger$ denotes the conjugate transpose of the matrix $A$).

Let $Q_{ij}=\mathbb{E}(X_{i}X^{\dagger }_{j})$ for $i,j\in \{1,2\}$. We say $A\preceq B$ if $B-A$ is a positive semi-definite (p.s.d.) matrix and we say $A\succeq B$ if $B\preceq A$. The identity matrix of size $s\times s$ is denoted by $I_{s}$. Further, we define $x^+\triangleq \max\{x,0\}$. We also note that $0 \preceq Q_{ii}\preceq I$ according to Theorem $7.7.3.$ of \cite{Q} since ${{\rm tr}(\mathbb{E}(X_{i}X^{\dagger }_{i}))}\le 1$. By definition of $Q_{ij}$, we see that $Q_{ij}=Q_{ji}^\dagger$. Moreover, we have $0\preceq Q_{ij}Q_{ij}^\dagger\preceq I$, where $0\preceq Q_{ij}Q_{ij}^\dagger$ results from the fact that every matrix in the form of $AA^{\dagger}$ is p.s.d. and $Q_{ij}Q_{ij}^\dagger\preceq I$ results from ${{\rm tr}(Q_{ij}{Q_{ij}}^{\dagger})}={{\rm tr}(Q_{ii})}{{\rm tr}(Q_{jj})}\le1$ which gives $Q_{ij}{Q_{ij}}^{\dagger}\preceq I$ with a similar argument as we had for $Q_{ii}$.
We will sometimes denote $Q=Q_{12}$ when it does not lead to confusion.

We also incorporate a non-negative power attenuation factor, denoted as ${\rho}_{ij}$, for the signal transmitted from $\mathsf{T}_i$ to $\mathsf{D}_j$. The received signal at receiver $\mathsf{D}_i$ at discrete time instance $t$ is denoted as $Y_{i}[t]$ for $i\in \{1,2\}$, and can be written as
\begin{eqnarray}
Y_{1}[t]&=&\sqrt{{\rho }_{11}}H_{11}X_{1}[t]+\sqrt{{\rho }_{21}}H_{21}X_{2}[t]+Z_{1}[t],\\
Y_{2}[t]&=&\sqrt{{\rho }_{12}}H_{12}X_{1}[t]+\sqrt{{\rho }_{22}}H_{22}X_{2}[t]+Z_{2}[t],
\end{eqnarray}
where $Z_{i}[t]\in \mathbb{C}^{N_i\times 1}$ is independent and identically distributed (i.i.d.) $\mathsf{CN}(0,I_{N_i})$ (complex Gaussian noise), ${\rho }_{ii}$ is the received SNR at $\mathsf{D}_i$ and ${\rho }_{ij}$ is the received interference-to-noise-ratio at $\mathsf{D}_j$ for $i, j\in \{1,2\},$ $i \ne j$. A MIMO IC is fully described by three parameters. The first is the number of antennas at each transmitter and receiver, namely $(M_1,N_1,M_2,N_2)$. The second is the set of channel gains, $\overline{H}=\{H_{11},H_{12},H_{21},H_{22}\}$. The third is the set of average link qualities of all the channels,  ${\overline{\rho }}=\{{\rho }_{11}{,{\rho }_{12},\rho }_{21},{\rho }_{22}\}$. We assume that these parameters are known to all transmitters and receivers.

For MIMO IC with feedback, the transmitted signal $X_{i}[t]$ at $\mathsf{T}_i$ is a function of the message $W_i$ and the previous channel outputs at $\mathsf{D}_i$ for $i \in \{1,2\}$. Thus, the encoding functions of the two transmitters are given as
\begin{eqnarray}
X_i[t]&=&f_{it}(W_i,Y^{t-1}_i), \ \ \ i \in \{1,2\},
\end{eqnarray}
where $f_{it}$ is the encoding function of $\mathsf{T}_i$, $W_i$ is the message of $\mathsf{T}_i$ and $Y^{t-1}_i=(Y_i[1],..., Y_i[t-1])$. Similarly, we denote $X_i^{t}=(X_i[1],..., X_i[t])$. Let us assume that $\mathsf{T}_i$ transmits information at a rate of $R_i$ to $\mathsf{D}_i$ using the codebook $C_{i,n}$ of length-$n$ codewords with $|C_{i,n}|=2^{nR_i}$. Given a message $m_i\in \{1,\dots ,2^{nR_i}\}$, the corresponding codeword $X_i^n\in C_{i,n}$ satisfies the power constraint mentioned before. From the received signal $Y^n_i$, the receiver obtains an estimate $\widehat{m_i}$ of the transmitted message $m_i$ using a decoding function. Let the average probability of error be denoted by $e_{i,n}={\rm Pr}({\rm \ }\widehat{m_i}\ne m_i)$.

A rate pair $(R_1,R_2)$ is achievable if there exists a family of codebooks $C_{i,n}$ and decoding functions such that ${{\rm max}}_i\{e_{i,n}\}$ goes to zero as the block length $n$ goes to infinity. The capacity region $\mathbb{C}(\overline{H},\overline{\rho })$ of the IC with parameters $\overline{H}$ and $\overline{\rho }$ is defined as the closure of the set of all achievable rate pairs.

Consider a two-dimensional rate region $\mathbb{C}$. Then, the region $\mathbb{C}\oplus ([0,a]\times [0,b])$ denotes the region formed by $\{(R_1,R_2): R_1, R_2\ge 0, ((R_1-a)^+, (R_2-b)^+)\in \mathbb{C}\}$ for some $a,b\ge 0$. Similarly, the region $\mathbb{C}\ominus ([0,a]\times [0,b])$ denotes the region formed by $\{(R_1,R_2): R_1, R_2\ge 0, ((R_1+a)^+, (R_2+b)^+)\in \mathbb{C}\}$  for some $a,b\ge 0$. Further, we define the notion of an achievable rate region that is within a constant number of bits of the capacity region as follows.

\begin{definition}
An achievable rate region $\mathbb{A}(\overline{H},\overline{\rho })$ is said to be within $b$ bits of the capacity region if $\mathbb{A}(\overline{H},\overline{\rho }) \subseteq \mathbb{C}(\overline{H},\overline{\rho })$ and $\mathbb{A}(\overline{H},\overline{\rho })\oplus ([0,b]\oplus[0,b]) \supseteq \mathbb{C}(\overline{H},\overline{\rho })$.
\end{definition}

In this paper, we will use the GDoF region to characterize the capacity region of the MIMO IC with feedback in the limit of high SNR. This notion generalizes the  conventional degrees of freedom (DoF) region metric by additionally emphasizing the signal level as a signaling dimension. It characterizes the simultaneously accessible fractions of spatial and signal-level dimensions (per channel use) by the two users when all the average channel coefficients vary as exponents of a nominal SNR parameter. Thus, we assume that
\begin{eqnarray}
{\mathop{{\lim}}_{{\log  \mathsf{SNR} \to \infty \ }} \frac{{\log  {\rho }_{ij}\ }}{{\log  \mathsf{SNR} \ }}\ }={\alpha }_{ij},
\end{eqnarray}
where ${\alpha }_{ij}\in {\mathbb R}^+$ for all $i,j\in \{1,\ 2\}$. In the limit of high SNR, the capacity region diverges.

The GDoF region is defined as the region formed by the set of all $(d_1,d_2)$ such that $(d_1 \log\mathsf{SNR}-o(\log\mathsf{SNR}), d_2 \log\mathsf{SNR}-o(\log\mathsf{SNR}))$\footnote{$a=o(\mathsf{\log SNR})$ indicates that $\lim_{\mathsf{SNR}\to\infty}\frac{a}{\log\mathsf{SNR}}=0$.} is inside the capacity region. Thus, the GDoF is a function of link quality scaling exponents $\alpha_{ij}$. We note that since the channel matrices are of full ranks with probability 1, we will have the GDoF with probability 1 over the randomness of channel matrices.

%
%% As mentioned earlier, this technique of varying different SNRs and INRs was first introduced in \cite{Etkin} to characterize the DoF region if the SISO 2-user IC and the corresponding DoF region was called the \textit{generalized} DoF (GDOF) region.

The property of maintaining the same performance even if the direction of information flow is reversed is known as the \textit{reciprocity} of the channel. For a MIMO IC with parameters $(M_1,N_1,M_2,N_2)$, $\overline{H}=\{H_{11},H_{12},H_{21},H_{22}\}$, and ${\overline{\rho }}=\{{\rho }_{11}{,{\rho }_{12},\rho }_{21},{\rho }_{22}\}$, the reciprocal MIMO IC has parameters $(N_1,M_1,N_2,M_2)$, $\overline{H}^R=\{H^T_{11},H^T_{21},H^T_{12},H^T_{22}\}$, and ${\overline{\rho }}^R=\{\rho_{11},\rho_{21},\rho_{12},\rho_{22}\}$.

%Now we define some notations that we will use in the Theorems.

%\begin{definition}
%$\mathcal{R}\ominus ([0,A]\times [0,B])$ is equal to the intersection of the regions $\{(x-A,y-B)|(x,y)\in\mathcal{R}\}$ and $\{(x,y)|x,y\ge0\}$.
%\end{definition}
%
%\begin{definition}
%If $\mathcal{R}_{ab}\triangleq\{(x+a,y+b)|(x,y)\in\mathcal{R}\}$, $\mathcal{R} \oplus ([0,A]\times [0,B])$ is the convex hull of all the $\mathcal{R}_{ab}$ that $0\le a\le A$ and $0\le b\le B$.
%\end{definition}

\section{Capacity Region of MIMO IC with Feedback}\label{main_results}
In this section, we will describe our results on the capacity region of the two-user MIMO IC with feedback.

Our first result gives an outer bound on the capacity region of the two-user MIMO IC with feedback. Let $\mathcal{R}_o(Q)$ be the region formed by $(R_1,R_2)$ satisfying the following constraints for some covariance matrix $Q=\mathbb{E}[X_1X^{\dagger }_2]$:

\begin{eqnarray}
R_1&\le& {\log  {\det  (I_{N_1}+{\rho }_{11}H_{11}H^{\dagger }_{11}+{\rho }_{21}H_{21}H^{\dagger }_{21}+\sqrt{{\rho }_{11}{\rho }_{21}}H_{11}QH^{\dagger }_{21}+\sqrt{{\rho }_{11}{\rho }_{21}}H_{21}Q^{\dagger }H^{\dagger }_{11})}},\label{roqeq1}\\
R_2&\le& {\log  {\det  (I_{N_2}+{\rho }_{22}H_{22}H^{\dagger }_{22}+{\rho }_{12}H_{12}H^{\dagger }_{12}+\sqrt{{\rho }_{22}{\rho }_{12}}H_{22}Q^{\dagger }H^{\dagger }_{12}+\sqrt{{\rho }_{22}{\rho }_{12}}H_{12}QH^{\dagger }_{22})}},\\
R_1&\le& \log  \det  \left(I_{N_2}+{\rho }_{12}H_{12}H^{\dagger }_{12}-{\rho }_{12}H_{12}QQ^{\dagger }H^{\dagger }_{12}\right)+\log  \det \Bigg(I_{N_1}+{\rho }_{11}H_{11}H^{\dagger }_{11}-\nonumber\\
&&\left[ \begin{array}{cc}
\sqrt{{\rho }_{11}{\rho }_{12}}H_{11}H^{\dagger }_{12} & \sqrt{{\rho }_{11}}H_{11}Q \end{array}
\right]
{\left[ \begin{array}{cc}
I_{N_2}+{\rho }_{12}H_{12}H^{\dagger }_{12} & \sqrt{{\rho }_{12}}H_{12}Q \\
\sqrt{{\rho }_{12}}Q^{\dagger }H^{\dagger }_{12} & I_{M_2} \end{array}
\right]}^{-1}\nonumber\\
&&\left[ \begin{array}{c}
\sqrt{{\rho }_{11}{\rho }_{12}}H_{12}H^{\dagger }_{11} \\
\sqrt{{\rho }_{11}}Q^{\dagger }H^{\dagger }_{11} \end{array}
\right]\Bigg),\\
R_2&\le& \log  \det  \left(I_{N_1}+{\rho }_{21}H_{21}H^{\dagger }_{21}-{\rho }_{21}H_{21}Q^{\dagger }QH^{\dagger }_{21}\right)
+\log \det  \Bigg(I_{N_2}+{\rho }_{22}H_{22}H^{\dagger }_{22}-\nonumber\\
&&\left[ \begin{array}{cc}
\sqrt{{\rho }_{22}{\rho }_{21}}H_{22}H^{\dagger }_{21} & \sqrt{{\rho }_{22}}H_{22}Q^{\dagger } \end{array}
\right]{\left[ \begin{array}{cc}
I_{N_1}+{\rho }_{21}H_{21}H^{\dagger }_{21} & \sqrt{{\rho }_{21}}H_{21}Q^{\dagger } \\
\sqrt{{\rho }_{21}}QH^{\dagger }_{21} & I_{M_1} \end{array}
\right]}^{-1}\nonumber\\
&&\left[ \begin{array}{c}
\sqrt{{\rho }_{22}{\rho }_{21}}H_{21}H^{\dagger }_{22} \\
\sqrt{{\rho }_{22}}QH^{\dagger }_{22} \end{array}
\right]\Bigg),\\
R_1+R_2&\le& \log \det  \left(I_{N_2}+{\rho }_{22}H_{22}H^{\dagger }_{22}+{\rho }_{12}H_{12}H^{\dagger }_{12}+\sqrt{{\rho }_{22}{\rho }_{12}}H_{22}Q^{\dagger }H^{\dagger }_{12}+\sqrt{{\rho }_{22}{\rho }_{12}}H_{12}QH^{\dagger }_{22}\right)\nonumber\\
&&+\log\det  \Bigg(I_{N_1}+{\rho }_{11}H_{11}H^{\dagger }_{11}-\left[ \begin{array}{cc}
\sqrt{{\rho }_{11}{\rho }_{12}}H_{11}H^{\dagger }_{12} & \sqrt{{\rho }_{11}}H_{11}Q \end{array}
\right]\nonumber\\
&&{\left[ \begin{array}{cc}
I_{N_2}+{\rho }_{12}H_{12}H^{\dagger }_{12} & \sqrt{{\rho }_{12}}H_{12}Q \\
\sqrt{{\rho }_{12}}Q^{\dagger }H^{\dagger }_{12} & I_{M_2} \end{array}
\right]}^{-1}\left[ \begin{array}{c}
\sqrt{{\rho }_{11}{\rho }_{12}}H_{12}H^{\dagger }_{11} \\
\sqrt{{\rho }_{11}}Q^{\dagger }H^{\dagger }_{11} \end{array}
\right]\Bigg),\\
R_1+R_2&\le& \log  \det  \left(I_{N_1}+{\rho }_{11}H_{11}H^{\dagger }_{11}+{\rho }_{21}H_{21}H^{\dagger }_{21}+\sqrt{{\rho }_{11}{\rho }_{21}}H_{11}QH^{\dagger }_{21}+\sqrt{{\rho }_{11}{\rho }_{21}}H_{21}Q^{\dagger }H^{\dagger }_{11}\right)\nonumber\\
&&+\log \det  \Bigg(I_{N_2}+{\rho }_{22}H_{22}H^{\dagger }_{22}-\left[ \begin{array}{cc}
\sqrt{{\rho }_{22}{\rho }_{21}}H_{22}H^{\dagger }_{21} & \sqrt{{\rho }_{22}}H_{22}Q^{\dagger } \end{array}
\right]\nonumber\\
&&{\left[ \begin{array}{cc}
I_{N_1}+{\rho }_{21}H_{21}H^{\dagger }_{21} & \sqrt{{\rho }_{21}}H_{21}Q^{\dagger } \\
\sqrt{{\rho }_{21}}QH^{\dagger }_{21} & I_{M_1} \end{array}
\right]}^{-1}\left[ \begin{array}{c}
\sqrt{{\rho }_{22}{\rho }_{21}}H_{21}H^{\dagger }_{22} \\
\sqrt{{\rho }_{22}}QH^{\dagger }_{22} \end{array}
\right]\Bigg).\label{roqeql}
\end{eqnarray}
Further, let $\mathcal{R}_o$ be the convex hull of $\mathcal{R}_o(Q)$ for all covariance matrices $Q$. The following theorem outer bounds the capacity region of the two-user MIMO IC with feedback.

\begin{theorem}
The capacity region of the two-user MIMO IC with perfect feedback $\mathbb{C}_{FB}$ is bounded from above as follows
\begin{equation}
\mathbb{C}_{FB}\subseteq \mathcal{R}_o.
\end{equation}
\label{outer_capacity}
\end{theorem}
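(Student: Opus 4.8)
The plan is to start from Fano's inequality and turn each of the six constraints into an upper bound on a (conditional) differential entropy that is then maximized by a Gaussian input and rewritten through the per‑symbol input covariances $Q_{11}[t],Q_{22}[t],Q_{12}[t]$, where $Q_{ij}[t]=\mathbb{E}[X_i[t]X_j^\dagger[t]]$. For a reliable code Fano gives $nR_i\le I(W_i;Y_i^n)+n\epsilon_n$ with $\epsilon_n\to0$. The feature that makes the feedback case tractable is the identity $h(Y_1^n,Y_2^n\mid W_1,W_2)=n\,h(Z_1)+n\,h(Z_2)$: conditioned on both messages the map $(Z_1^n,Z_2^n)\mapsto(Y_1^n,Y_2^n)$ is causal, deterministic, and invertible with unit Jacobian, since $X_i[t]=f_{it}(W_i,Y_i^{t-1})$ lets one strip off $X_1[t],X_2[t]$ and recover $Z_1[t],Z_2[t]$ step by step. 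I will use this identity to evaluate the negative entropy terms exactly. Throughout, write $\widetilde Y_1[t]=\sqrt{\rho_{11}}H_{11}X_1[t]+Z_1[t]$ and $\widetilde Y_2[t]=\sqrt{\rho_{12}}H_{12}X_1[t]+Z_2[t]$ for the parts of the two outputs carrying only $X_1$.

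For the two single‑user constraints no genie is needed. Since $h(Y_1^n\mid W_1)\ge h(Y_1^n\mid W_1,X_1^n,X_2^n)=h(Z_1^n)$, Fano yields $nR_1\le h(Y_1^n)-n\,h(Z_1)\le\sum_t\big(h(Y_1[t])-h(Z_1[t])\big)$. Maximizing each term over Gaussian inputs of the prescribed covariance replaces $h(Y_1[t])$ by $\log\det(\pi e\,\Sigma_{Y_1[t]})$, where $\Sigma_{Y_1[t]}=I_{N_1}+\rho_{11}H_{11}Q_{11}[t]H_{11}^\dagger+\rho_{21}H_{21}Q_{22}[t]H_{21}^\dagger+\sqrt{\rho_{11}\rho_{21}}\big(H_{11}Q_{12}[t]H_{21}^\dagger+H_{21}Q_{12}^\dagger[t]H_{11}^\dagger\big)$. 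Averaging over $t$, concavity of $\log\det$ lets me replace every $Q_{ij}[t]$ by the time average $\bar Q_{ij}=\tfrac1n\sum_tQ_{ij}[t]$, and I set $Q:=\bar Q_{12}$. Because $\bar Q_{ii}\preceq I$, the matrix in \eqref{roqeq1} dominates $\Sigma_{Y_1}$ in the p.s.d.\ order, so monotonicity of $\log\det$ yields \eqref{roqeq1} exactly; the bound on $R_2$ is symmetric.

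The remaining four constraints are genie‑aided. To bound $R_1$, I give receiver $1$ the side information $(W_2,Y_2^n)$. By message independence, $nR_1\le I(W_1;Y_1^n,Y_2^n\mid W_2)+n\epsilon_n=h(Y_1^n,Y_2^n\mid W_2)-n\,h(Z_1)-n\,h(Z_2)$, the negative term evaluated by the feedback identity. Since $X_2^n$ is a deterministic function of $(W_2,Y_2^{n-1})$, conditioning on it is free and then dropping $W_2$ gives $h(Y_1^n,Y_2^n\mid W_2)\le h(Y_1^n,Y_2^n\mid X_2^n)=h(\widetilde Y_1^n,\widetilde Y_2^n\mid X_2^n)$, the last step because with $X_2^n$ known one subtracts the $X_2$‑contributions from both outputs. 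Single‑letterizing, maximizing by Gaussians, subtracting the noise entropies, and applying the block‑determinant identity
\[
\det\begin{bmatrix}A & B\\ B^\dagger & C\end{bmatrix}=\det(C)\,\det\!\left(A-BC^{-1}B^\dagger\right),
\]
the conditional‑covariance determinant of $(\widetilde Y_1,\widetilde Y_2)$ given $X_2$ splits into precisely the two $\log\det$ factors of the $R_1$ constraint; the $R_2$ constraint follows by exchanging users. For the sum rate I add the plain bound $nR_2\le h(Y_2^n)-h(Y_2^n\mid W_2)$ to the genie bound; writing $h(Y_1^n,Y_2^n\mid W_2)=h(Y_2^n\mid W_2)+h(Y_1^n\mid Y_2^n,W_2)$ cancels the terms $\pm h(Y_2^n\mid W_2)$ and leaves $n(R_1+R_2)\le\big[h(Y_2^n)-n\,h(Z_2)\big]+\big[h(\widetilde Y_1^n\mid\widetilde Y_2^n,X_2^n)-n\,h(Z_1)\big]$, whose two brackets produce the full‑output determinant and the Schur‑complement determinant of \eqref{roqeql}.

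The routine parts are the chain‑rule bookkeeping and the Gaussian maximization. The genuine obstacle is the final passage from the true input covariances to $\mathcal{R}_o(Q)$, where $\bar Q_{11},\bar Q_{22}$ are set to $I$ while only $Q$ is kept. For the single‑user constraints this is plain p.s.d.\ monotonicity, but in the Schur‑complement terms the substitution $\bar Q_{ii}\to I$ enters through subtracted quantities and is not a priori monotone; one must show that enlarging $\bar Q_{ii}$ to $I$ can only increase each conditional‑covariance determinant (in the scalar case this is the elementary inequality $a-|q|^2/b\le 1-|q|^2$ for $0\le a,b\le 1$ and $|q|^2\le ab$, with $a=Q_{11},\,b=Q_{22},\,q=Q$). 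This is exactly where the matrix inequalities advertised in the introduction are required. A second point is consistency: the same averaged $Q=\bar Q_{12}$ must serve in all six constraints at once so the achieved pair lies in a single $\mathcal{R}_o(Q)$; taking the convex hull over all admissible $Q$ (those with $QQ^\dagger\preceq I$) then gives $\mathbb{C}_{FB}\subseteq\mathcal{R}_o$. I expect the matrix‑monotonicity step for the Schur‑complement constraints to be the most delicate part.
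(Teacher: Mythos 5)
Your proposal is correct and follows essentially the same route as the paper: a genie-aided entropy-region outer bound (the paper imports it as Lemma \ref{lemma_conditional} from \cite{Tse}, with genie $S_i=\sqrt{\rho_{ij}}H_{ij}X_i+Z_j$, which is exactly your $\widetilde Y_j$), followed by Gaussian maximization, the block-determinant/Schur-complement split, and a matrix-monotonicity argument that replaces $Q_{11},Q_{22}$ by $I$ while retaining only $Q=Q_{12}$. The single step you flag but do not prove---that $W\mapsto W-WS(I+S^{\dagger}WS)^{-1}S^{\dagger}W$ is monotone in the p.s.d.\ order---is precisely the paper's Lemma \ref{L}, established there via the Woodbury identity plus a continuity argument at singular $K$, so your plan closes without modification.
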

\begin{proof}
The proof is given in Appendix \ref{apdx_outer}.
\end{proof}

From the definition of $\mathcal{R}_o(Q)$, by substituting $Q=0$ and after some simplifications, we get that $\mathcal{R}_o(0)$ is the region formed by $(R_1,R_2)$ satisfying the following
\begin{eqnarray}
R_1&\le& {\log  {\det  (I_{N_1}+{\rho }_{11}H_{11}H^{\dagger }_{11}+{\rho }_{21}H_{21}H^{\dagger }_{21})}},\label{ro0eq1}\\
R_2&\le& {\log  {\det  (I_{N_2}+{\rho }_{22}H_{22}H^{\dagger }_{22}+{\rho }_{12}H_{12}H^{\dagger }_{12})}},\\
R_1&\le& \log  \det  \left(I_{N_2}+{\rho }_{12}H_{12}H^{\dagger }_{12}\right)+\log  \det (I_{N_1}+{\rho }_{11}H_{11}H^{\dagger }_{11}-\nonumber\\
&& \sqrt{{\rho }_{11}{\rho }_{12}}H_{11}H^{\dagger }_{12}
({I_{N_2}+{\rho }_{12}H_{12}H^{\dagger }_{12}})^{-1}
\sqrt{{\rho }_{11}{\rho }_{12}}H_{12}H^{\dagger }_{11}),\\
R_2&\le& \log  \det  \left(I_{N_1}+{\rho }_{21}H_{21}H^{\dagger }_{21}\right)
+\log \det (I_{N_2}+{\rho }_{22}H_{22}H^{\dagger }_{22}-\nonumber\\
&&\sqrt{{\rho }_{22}{\rho }_{21}}H_{22}H^{\dagger }_{21}
({I_{N_1}+{\rho }_{21}H_{21}H^{\dagger }_{21}})^{-1}
\sqrt{{\rho }_{22}{\rho }_{21}}H_{21}H^{\dagger }_{22}),\\
R_1+R_2&\le& \log \det  \left(I_{N_2}+{\rho }_{22}H_{22}H^{\dagger }_{22}+{\rho }_{12}H_{12}H^{\dagger }_{12}\right)
+\log  \det (I_{N_1}+{\rho }_{11}H_{11}H^{\dagger }_{11}-\nonumber\\
&& \sqrt{{\rho }_{11}{\rho }_{12}}H_{11}H^{\dagger }_{12}
({I_{N_2}+{\rho }_{12}H_{12}H^{\dagger }_{12}})^{-1}
\sqrt{{\rho }_{11}{\rho }_{12}}H_{12}H^{\dagger }_{11}),\\
R_1+R_2&\le& \log  \det  \left(I_{N_1}+{\rho }_{11}H_{11}H^{\dagger }_{11}+{\rho }_{21}H_{21}H^{\dagger }_{21}\right)
+\log \det (I_{N_2}+{\rho }_{22}H_{22}H^{\dagger }_{22}-\nonumber\\
&&\sqrt{{\rho }_{22}{\rho }_{21}}H_{22}H^{\dagger }_{21}
({I_{N_1}+{\rho }_{21}H_{21}H^{\dagger }_{21}})^{-1}
\sqrt{{\rho }_{22}{\rho }_{21}}H_{21}H^{\dagger }_{22}).\label{ro0eql}
\end{eqnarray}
%And we define it's bounds $I_1,...,I_6$ respectively. Following Theorem for inner bound results:

The following result gives an inner bound to the capacity region of the two-user MIMO IC with feedback.

\begin{theorem}
The capacity region for the two-user MIMO IC with perfect feedback $\mathbb{C}_{FB}$ is bounded from below as
\begin{equation}
\mathbb{C}_{FB} \supseteq  \mathcal{R}_o(0) \ominus ([0,N_1+N_2]\times [0,N_1+N_2]).
\end{equation}
\label{inner_capacity}
\end{theorem}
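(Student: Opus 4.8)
The plan is to exhibit an explicit Han--Kobayashi block-Markov coding scheme with backward decoding and uncorrelated inputs, to compute the rate region it achieves, and then to verify that this region loses at most $N_1+N_2$ bits per user relative to the region $\mathcal{R}_o(0)$ defined by (\ref{ro0eq1})--(\ref{ro0eql}). I first fix the transmit strategy. I operate over $B$ consecutive blocks of length $n$ and split each message $W_i$ into a private part $W_{ip}$ and a common part $W_{ic}$. The transmitted vector of $\mathsf{T}_i$ in block $b$ is a superposition of three mutually independent Gaussian codewords: a fresh private codeword $X_{ip}[b]$, a fresh common codeword $X_{ic}[b]$, and a cooperative codeword $U[b]$ that both transmitters generate from the pair of common messages $(W_{1c}[b-1],W_{2c}[b-1])$ of the previous block. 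Feedback is what makes the cooperative codeword realizable: after block $b-1$ each transmitter can reconstruct the other transmitter's common message from its own channel output and its own transmitted signal, so both transmitters agree on $U[b]$ and jointly transmit it. Crucially, I take the two transmit signals to be uncorrelated (the $Q_{12}=0$ choice emphasized in the introduction), and I choose the private covariances $Q_{ip}$ to realize the MIMO analogue of the ``received at the noise floor'' condition, namely $\rho_{ij}H_{ij}Q_{ip}H_{ij}^{\dagger}\preceq I_{N_j}$ for $j\neq i$, while keeping the power split feasible, $Q_{ip}+Q_{ic}+Q_{U,i}\preceq I$; feasibility of the noise-floor choice uses that $H_{ij}$ is full rank with probability $1$.

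Decoding proceeds by backward decoding: starting from the last block and moving toward the first, receiver $\mathsf{D}_i$ jointly decodes its own common and private messages together with the other user's common message, treating the already-resolved cooperative codeword $U$ in that block as known side information. Standard joint-typicality arguments then yield a system of MAC-type constraints on $(R_{ic},R_{ip},R_{jc})$ in terms of conditional mutual informations, which for Gaussian inputs become $\log\det$ expressions. I would next apply Fourier--Motzkin elimination to project out the auxiliary rates (with $R_i=R_{ic}+R_{ip}$) and obtain an explicit inner region in the $(R_1,R_2)$ plane.

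The heart of the proof is to show that this projected region contains $\mathcal{R}_o(0)\ominus([0,N_1+N_2]\times[0,N_1+N_2])$. I would match each of the six inequalities (\ref{ro0eq1})--(\ref{ro0eql}) defining $\mathcal{R}_o(0)$ against the corresponding achievable constraint and bound the difference by $N_1+N_2$ bits. The bounds on $R_1$ each factor into $\log\det$ terms of dimensions $N_2$ and $N_1$; the noise-floor choice $\rho_{ij}H_{ij}Q_{ip}H_{ij}^{\dagger}\preceq I$ means that treating the cross private stream as noise inflates the relevant noise covariance by at most a factor $2I$, which costs at most one bit per antenna in each factor, i.e.\ at most $N_2$ bits from the first factor and $N_1$ bits from the second, for a total of $N_1+N_2$; the situation for $R_2$ and for the sum-rate constraints is symmetric. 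Establishing this rigorously is where the matrix determinant inequalities mentioned in the introduction enter: I would need monotonicity of $\log\det$ under the order $\preceq$ together with bounds of the form $\log\det(I+A+B)-\log\det(I+A)\le \mathrm{rank}(B)$ applied to the Schur-complement expressions that appear in $\mathcal{R}_o(0)$, so as to relate them to the achievable determinants after replacing the full covariance $Q_{ii}$ by the split $Q_{ic}+Q_{ip}$.

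I expect the main obstacle to be exactly this last step: the bookkeeping that converts the per-link, per-antenna losses into the claimed aggregate gap of $N_1+N_2$ per user. In particular, the two delicate points are verifying that the cooperative (feedback) codeword, once backward decoding has removed it, introduces \emph{no} additional gap in the matching, and that the noise-floor private covariance can be chosen simultaneously with the common and cooperative components while respecting $Q_{ii}\preceq I$. Controlling these two effects, rather than the routine random-coding and Fourier--Motzkin arguments, is where the technical weight of the proof lies.
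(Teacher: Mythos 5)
Your overall architecture --- Han--Kobayashi splitting with block Markov encoding and backward decoding, a private covariance received at the other receiver at or below the noise floor, and a per-antenna accounting of the residual-interference entropy --- is the same as the paper's, which invokes the single-letter region of Lemma 1 of \cite{Tse} (so no fresh Fourier--Motzkin is needed) and then evaluates it with a specific Gaussian power split. However, two points in your proposal are genuine gaps rather than bookkeeping. First, you simultaneously include a cooperative codeword $U[b]$ that \emph{both} transmitters generate and jointly transmit, and claim the inputs are uncorrelated with $Q_{12}=0$. These are incompatible: a common component transmitted by both users with nonzero power forces $\mathbb{E}[X_1X_2^\dagger]\neq 0$, and then $h(Y_i)$, $h(Y_i|X_i)$, etc.\ no longer evaluate to the $Q=0$ expressions you are trying to match against $\mathcal{R}_o(0)$. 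The paper resolves this by setting the cooperative auxiliary to zero in the cited region (feedback still enters because it is what makes the block-Markov/backward-decoding rates of that region achievable at all); your scheme must either drop $U[b]$ or carry the cross-covariance through the entire matching, which is a different (and harder) calculation.

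Second, constraining the private covariance only by the noise-floor inequality $\rho_{ij}H_{ij}Q_{ip}H_{ij}^{\dagger}\preceq I_{N_j}$ does not determine it, and a generic choice satisfying that inequality will not close the gap. The paper takes the explicit choice
\begin{equation*}
K_{X_{ip}}=I_{M_i}-\sqrt{\rho_{ij}}\,H_{ij}^{\dagger}\bigl(I_{N_j}+\rho_{ij}H_{ij}H_{ij}^{\dagger}\bigr)^{-1}\sqrt{\rho_{ij}}\,H_{ij},
\end{equation*}
whose point is not only that $\rho_{ij}H_{ij}K_{X_{ip}}H_{ij}^{\dagger}\preceq I_{N_j}$ (so each residual-interference entropy costs at most $\log\det(2I_{N_i})=N_i$), but that $\log\det\bigl(I_{N_i}+\rho_{ii}H_{ii}K_{X_{ip}}H_{ii}^{\dagger}\bigr)$ equals \emph{exactly} the Schur-complement term appearing in the corresponding bound of $\mathcal{R}_o(0)$. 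This exact identity is what concentrates the entire loss into the $N_i$ terms and yields the stated $N_1+N_2$ gap; with only the noise-floor inequality you would additionally have to bound the mismatch between $\log\det(I+\rho_{ii}H_{ii}Q_{ip}H_{ii}^{\dagger})$ and that Schur complement, and your budget is already exhausted. (Relatedly, feasibility of this choice, $0\preceq K_{X_{ip}}\preceq I_{M_i}$, holds for arbitrary $H_{ij}$ by positive semidefiniteness of $S(I+S^{\dagger}S)^{-1}S^{\dagger}$; no full-rank assumption is needed or used.) The step you flag as the main obstacle is indeed the crux, and it is precisely this explicit covariance choice, together with the determinant bound $\det\bigl(I+S^{\dagger}S-S^{\dagger}S(I+S^{\dagger}S)^{-1}S^{\dagger}S\bigr)\le 2^{N_j}$, that the paper uses to overcome it.
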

\begin{proof}
The proof is provided in Appendix \ref{apdx_inner}.
\end{proof}

The inner bound uses the achievable region for a two-user discrete memoryless IC with feedback as in \cite{Tse}. The achievability scheme employs block Markov encoding, backward decoding, and Han-Kobayashi message-splitting. This result for a discrete memoryless channel is extended to MIMO IC with feedback using a specific message splitting by power allocation. The transmitted signal $X_i$ from $\mathsf{T}_i$ is given as
\begin{eqnarray}
X_i=X_{ip}+X_{iu},
\end{eqnarray}
where $X_{ip}$ and $X_{iu}$ denote the private and public messages of $\mathsf{T}_i$, respectively. We assume that $X_{ip}$ and $X_{iu}$ are independent for $i=1,2$. However, these transmitted signals are correlated over time due to block Markov encoding. The private signal $X_{ip}$ is chosen to be $X_{ip}\sim \mathsf{CN}\left(0,K_{X_{ip}}\right)$, and the public signal $X_{iu}$ is chosen to be $X_{iu}\sim \mathsf{CN}\left(0,K_{X_{iu}}\right)$, where
\begin{eqnarray}\label{ipi}
K_{X_{ip}}
=I_{M_i}-
\sqrt{{\rho }_{ij}}H^{\dagger }_{ij}
(I_{N_j}+{\rho }_{ij}H_{ij}H^{\dagger }_{ij})^{-1}
\sqrt{{\rho }_{ij}}H_{ij},
\end{eqnarray}
and
\begin{eqnarray}\label{iui}
K_{X_{iu}}=I_{M_i}-K_{X_{ip}},
\end{eqnarray}
for $i\in \{1,2\}$.

We will show in Appendix \ref{apdx_inner} that the power allocation is feasible by showing $K_{X_{ip}}\succeq 0$ and $K_{X_{iu}}\succeq 0$. Further, this message split is such that the private signal is received at the other receiver with power bounded by a constant. More specifically we have $\rho_{ij}H_{ij}K_{X_{ip}} H_{ij}^\dagger \preceq I_{N_j}$, thus showing that the effective received signal covariance matrix at $\mathsf{D}_j$ corresponding to the private signal from $\mathsf{T}_i$ is at or below the noise floor.

This power allocation is different from that given in \cite{Tse} even for a SISO channel. Note that the power split levels in the achievability scheme of \cite{Tse} do not sum to $1$ and thus do not satisfy the total power constraint. For the special case of SISO IC with feedback, the above gives a fix to the results in \cite{Tse}. This power allocation assumes uncorrelated signals transmitted by the two users at each time-slot. The authors of \cite{Achal} also used uncorrelated signals for SISO but had a larger gap between the inner and outer bounds for SISO IC with feedback than that achieved by our achievability strategy.

%The power-allocation strategy described in this paper gives a power-allocation for the MIMO IC with feedback accounting for the covariance matrix between the transmit signals from the two transmitters. The power allocation given above is proved to be feasible (all power levels $P_U$, $P_{X_{ip}}$, and $P_{X_{iu}}$ having traces between 0 and $M_i$). The common message is the part that is correlated among the two transmitters and the expression for the private message comes with the intuition from the expressions in the outer-bound.

%Although the above region gives the capacity region within $N_1+N_2$ bits, finding the generalized degrees of freedom region from the expression is hard due to the cross covariance matrix $Q$. This is because finding the optimal covariance matrix $Q$ is in general not straightforward. We will next show that using cross-covariance matrix $Q=0$ is approximately optimal for the capacity region of two-user MIMO IC with feedback.

Having considered the inner and outer bounds for the capacity region of the two-user IC with feedback, the next result shows that the inner bound and the outer bound are within $N_1+N_2+\max(N_1,N_2)$ bits thus finding the capacity region of the two-user IC with feedback, approximately.

\begin{theorem}
The capacity region for the two-user MIMO IC with perfect feedback $\mathbb{C}_{FB}$ is bounded from above and below as
 \begin{equation}
 \mathcal{R}_o(0) \ominus ([0,N_1+N_2]\times [0,N_1+N_2]) \subseteq \mathbb{C}_{FB}\subseteq \mathcal{R}_o(0)\oplus ([0,N_1]\times [0,N_2]),
 \end{equation}
 where the inner and outer bounds are within $N_1+N_2+\max{(N_1,N_2)}$ bits. \label{outer_inner_capacity_reciprocal}
 \end{theorem}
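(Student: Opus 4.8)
The plan is to prove the two inclusions separately and then convert them into a bit count. The left inclusion is nothing but Theorem~\ref{inner_capacity}, so I would only need the right inclusion $\mathbb{C}_{FB}\subseteq\mathcal{R}_o(0)\oplus([0,N_1]\times[0,N_2])$. By the outer bound of Theorem~\ref{outer_capacity}, $\mathbb{C}_{FB}\subseteq\mathcal{R}_o$, and since $\mathcal{R}_o$ is the convex hull of the $\mathcal{R}_o(Q)$, it suffices to establish the single containment
\[
\mathcal{R}_o(Q)\subseteq\mathcal{R}_o(0)\oplus([0,N_1]\times[0,N_2])\qquad\text{for every }Q,
\]
and to note that the right-hand region is convex. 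Convexity follows because every region involved is downward-closed in the nonnegative orthant (all defining constraints upper bound $R_1$, $R_2$, or $R_1+R_2$), so $\oplus$ with a box is an ordinary Minkowski sum and hence preserves convexity; the per-$Q$ containment then passes to the convex hull $\mathcal{R}_o$.

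For the single containment I would reduce to facet inequalities. Because the regions are downward-closed, $\mathcal{R}_o(0)\oplus([0,N_1]\times[0,N_2])$ is exactly $\mathcal{R}_o(0)$ with each $R_1$-bound relaxed by $N_1$, each $R_2$-bound by $N_2$, and each $R_1+R_2$-bound by $N_1+N_2$. Writing $\mathrm{RHS}(Q)$ for the right-hand side of each of the six constraints \eqref{roqeq1}--\eqref{roqeql}, it therefore suffices to show $\mathrm{RHS}(Q)\le\mathrm{RHS}(0)+N_1$ for the two $R_1$-constraints, $\mathrm{RHS}(Q)\le\mathrm{RHS}(0)+N_2$ for the two $R_2$-constraints, and $\mathrm{RHS}(Q)\le\mathrm{RHS}(0)+N_1+N_2$ for the two sum-constraints; nonnegativity of each determinant summand makes the $(\cdot)^+$ bookkeeping automatic.

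The facet inequalities split into two types. For the plain single-rate bound \eqref{roqeq1} (and its $R_2$ twin) I would write the determinant argument as $I+[A\ B]\,\Sigma\,[A\ B]^{\dagger}$ with $\Sigma=\left[\begin{smallmatrix}I&Q\\Q^{\dagger}&I\end{smallmatrix}\right]$ and $A,B$ the scaled channel matrices of the two transmitters; since $QQ^{\dagger}\preceq I$ forces $0\preceq\Sigma\preceq 2I$, monotonicity of $\log\det$ gives $\mathrm{RHS}(Q)\le\log\det(I+2M)$ with $M=[A\ B][A\ B]^{\dagger}$, and $\log\det(I+2M)-\log\det(I+M)=\log\det\!\big(I+M(I+M)^{-1}\big)\le N_1$ (resp.\ $N_2$) because $M(I+M)^{-1}\preceq I$. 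For the Schur-complement single-rate bounds I would use $\det\!\left[\begin{smallmatrix}P&C\\C^{\dagger}&S\end{smallmatrix}\right]=\det S\cdot\det(P-CS^{-1}C^{\dagger})$ to recognize the sum of the two $\log\det$ terms as $\log\det\mathrm{cov}\!\big(Y_1',Y_2'\mid X_2\big)$ for the auxiliary Gaussian variables $Y_1'=\sqrt{\rho_{11}}H_{11}X_1+Z_1$ and $Y_2'=\sqrt{\rho_{12}}H_{12}X_1+Z_2$ (and symmetrically for the $R_2$ facet), while the $Q=0$ value equals $\log\det\mathrm{cov}(Y_1',Y_2')$; since $(Y_1',Y_2')$ depends only on $X_1$, whose covariance is $I$ regardless of $Q$, and since conditioning on $X_2$ only shrinks the covariance in the Loewner order, I get $\mathrm{RHS}(Q)\le\mathrm{RHS}(0)$, which is even stronger than required. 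Finally each sum-rate constraint \eqref{roqeql} and its partner decompose exactly as [an $R_2$-type term]$+$[a Schur-complement term already bounded], so summing the two estimates gives $\mathrm{RHS}(Q)\le\mathrm{RHS}(0)+N_2$ (resp.\ $+N_1$), a fortiori within $N_1+N_2$.

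It remains to count bits. The inner region is $\mathcal{R}_o(0)$ shrunk by $N_1+N_2$ in each coordinate and the outer region is $\mathcal{R}_o(0)$ expanded by $(N_1,N_2)$; to expand the inner region by a common $b$ in both coordinates so as to cover the outer region facet by facet forces $b\ge 2N_1+N_2$ from the $R_1$-facets, $b\ge N_1+2N_2$ from the $R_2$-facets, and the weaker $b\ge\tfrac32(N_1+N_2)$ from the sum-facets, so $b=N_1+N_2+\max(N_1,N_2)$, matching the stated gap. I expect the Schur-complement facets to be the main obstacle: the raw determinant expressions are cumbersome, and the step that makes them tractable is the recognition that each is a joint-minus-conditional Gaussian covariance whose $Q$-free part has a $Q$-independent law, so that replacing $Q$ by $0$ is precisely a Loewner-monotone conditioning step; reproducing this purely by matrix algebra is where most of the bookkeeping would lie.
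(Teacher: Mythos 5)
Your proposal is correct and follows the same overall skeleton as the paper: the inner bound is quoted from Theorem~\ref{inner_capacity}, and the outer bound is obtained by showing $\mathcal{R}_o(Q)\subseteq\mathcal{R}_o(0)\oplus([0,N_1]\times[0,N_2])$ facet by facet, with exactly the same per-facet slack assignment ($N_1$ and $N_2$ for the two plain single-rate bounds, $0$ for the two Schur-complement bounds, $N_2$ and $N_1$ for the two sum-rate bounds) that the paper establishes in \eqref{r1}--\eqref{r6} of Appendix~\ref{Appendix4}. Where you genuinely diverge is in the treatment of the Schur-complement facets. The paper proves these by explicit matrix algebra: it computes the block inverse by hand, reduces the expression to $L(I-QQ^{\dagger},\sqrt{\rho_{ij}}H_{ij}^{\dagger})$ as in \eqref{m1}, and invokes Lemma~\ref{L}, whose proof requires the Woodbury identity together with an $\epsilon$-perturbation and continuity argument to handle the non-invertibility of $K$ at the boundary. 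You instead recognize each Schur-complement expression as $\log\det\mathrm{cov}\bigl((Y_1',Y_2')\mid X_2\bigr)$ for an auxiliary jointly Gaussian triple with $\mathrm{cov}(X_1)=I$, $\mathrm{cov}(X_2)=I$, $\mathbb{E}[X_1X_2^{\dagger}]=Q$ (a legitimate covariance since $QQ^{\dagger}\preceq I$), observe that the unconditional covariance $\mathrm{cov}(Y_1',Y_2')$ is $Q$-free, and conclude by Loewner monotonicity of conditioning; this subsumes both \eqref{m1} and \eqref{m2} in one step and avoids Lemma~\ref{L} entirely, at the cost of being less self-contained as pure matrix algebra. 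For the plain facets your bound $\Sigma\preceq 2I$ followed by $\log\det(I+2M)\le\log\det(I+M)+N_i$ is equivalent to the paper's \eqref{m3}. Your final bit count ($b\ge 2N_1+N_2$ from the $R_1$-facets, $b\ge N_1+2N_2$ from the $R_2$-facets, the sum-facets being slack) correctly justifies the stated gap $N_1+N_2+\max(N_1,N_2)$, which the paper asserts without spelling out; your convexity remark for passing from the per-$Q$ containment to the convex hull $\mathcal{R}_o$ fills a small gap the paper leaves implicit.
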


\begin{proof}
The inner bound follows from Theorem \ref{inner_capacity}. For outer bound, we outer-bound the region $\mathcal{R}_o(Q)$ as  $\mathcal{R}_o(Q)\subseteq \mathcal{R}_o(0)\oplus ([0,N_1]\times [0,N_2])$ in Appendix \ref{Appendix4}. Hence, $\mathcal{R}_o\subseteq \mathcal{R}_o(0)\oplus ([0,N_1]\times [0,N_2])$. Thus, using $Q=0$ in $\mathcal{R}_o(Q)$ gives an approximate capacity region with the approximation gap as in the statement of the theorem.
\end{proof}

The authors of \cite{Tse} found the capacity region for the SISO IC with feedback within 2 bits. The above theorem generalizes the result to find the capacity region of MIMO IC with feedback within $N_1+N_2+\max(N_1,N_2)$ bits. Note that the approximate capacity region without feedback in \cite{Varanasi} involves bounds on $2R_1+R_2$ which do not appear in our approximate capacity region with feedback. In addition, in \cite{Tse}, the approximate capacity region for the SISO IC with feedback involves the covariance matrix of the inputs in the inner and outer bounds, whereas our approximate capacity region for the MIMO IC with feedback does not.

Figure \ref{fig:uib} gives a pictorial representation for the result of Theorem \ref{outer_inner_capacity_reciprocal}. The inner and the outer bounds for the capacity region for MIMO IC with feedback are within a constant number of bits from the region $\mathcal{R}_o(0)$ and thus the inner and outer bound regions are within a constant number of bits of each other.
%Note that obviously this is true for inner bound if the rates in $A$ and $B$ are large enough that their respective points on inner bound are in positive region.

\begin{figure}[ht]
\centering{
	\includegraphics[width=8.5cm]{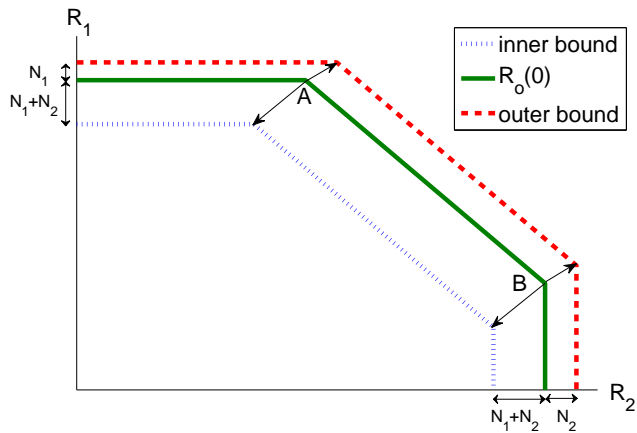}

}

\caption[Optional caption for list of figures]{Inner and outer bounds for the capacity region of MIMO IC with feedback are within a constant number of bits. The arrows from the corners $A$ and $B$ in $\mathcal{R}_o(0)$ toward their respective corners on outer bound have vertical length of $N_1$ and horizontal length of $N_2$. The arrows from the corners $A$ and $B$ in $\mathcal{R}_o(0)$ toward their respective corners on inner bound have the vertical and horizontal length of $N_1+N_2$ each.
}
\label{fig:uib}
\end{figure}

\begin{figure}[ht]
\centering{
	\includegraphics[width=8.5cm]{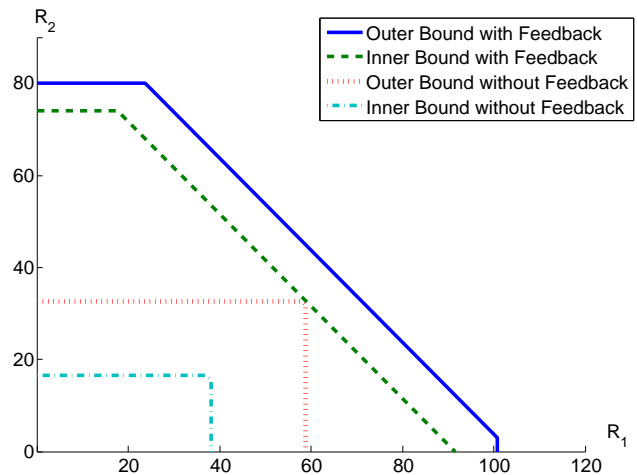}
}
\caption[Optional caption for list of figures]{Inner and outer bounds for the capacity region of MIMO IC with feedback and without feedback.}
\label{fig:uibmm}
\end{figure}

{In Figure \ref{fig:uibmm}, we see the improvement in the capacity region for a MIMO IC with feedback. The parameters chosen for the IC are $M_1=5$, $M_2=4$, $N_1=6$, $N_2=3$, $\rho_{11}=\rho_{22}=10^4$, $\rho_{12}=\rho_{21}=10^{8}$,
\begin{eqnarray}\label{hhhhhh}
H_{11}=\left[\begin{array}{ccccc}
0.30 & 0.19 & 0.10 & 0.68 & 0.65\\
0.30 & 0.44 & 0.38 & 0.60 & 0.94\\
0.35 & 0.65 & 0.98 & 0.58 & 0.65\\
0.56 & 0.14 & 0.82 & 0.92 & 0.72\\
0.28 & 0.42 & 0.19 & 0.39 & 0.28\\
0.46 & 0.89 & 0.49 & 0.20 & 0.72
\end{array}\right],
H_{22}=\left[\begin{array}{cccc}
0.97 & 0.67 & 0.67 & 0.65\\
0.60 & 0.94 & 0.51 & 0.53\\
0.44 & 0.67 & 0.50 & 0.36
\end{array}\right],\nonumber\\
H_{21}=\left[\begin{array}{cccc}
0.89 & 0.95 & 0.41 & 0.69\\
0.81 & 0.59 & 0.65 & 0.98\\
0.61 & 0.44 & 0.60 & 0.37\\
0.82 & 0.16 & 0.83 & 0.72\\
0.10 & 0.82 & 0.92 & 0.28\\
0.87 & 0.43 & 0.91 & 0.21
\end{array}\right], \text{ and }
H_{12}=\left[\begin{array}{ccccc}
0.11 & 0.71 & 0.61 & 0.31 & 0.30\\
0.61 & 0.23 & 0.61 & 0.44 & 0.31\\
0.48 & 0.71 & 0.27 & 0.61 & 0.61
\end{array}\right].
\end{eqnarray}
The inner and outer bounds without feedback are taken from \cite{Varanasi}. We note that the inner bound with feedback contains the outer bound without feedback.}

Having characterized the approximate capacity region for the MIMO IC with feedback, we next explore the relation of capacity region of the MIMO IC with feedback with that of the corresponding reciprocal MIMO IC with feedback. The next theorem shows that the capacity region of the MIMO IC with feedback is approximately the same as that of its corresponding reciprocal channel with feedback.

%  We first note that the reciprocal channel and the main channel would have the same expression for the region formed by $\mathcal{R}_o(0)$. This proves the following result.

 % {\bf Note:} Replacing $H$ and $\rho$ by $H^R$ and $\rho^R$ gives the same expressions in $\mathcal{R}_o(0)$ which is proved in appendix D.

 \begin{theorem}
 The capacity region for the two-user MIMO IC with feedback $\mathbb{C}_{FB}$ and the capacity region for its corresponding reciprocal IC with feedback, $\mathbb{C}_{FB}^R$, are within constant gaps from each other. More precisely, the following expressions holds:
 \begin{eqnarray}
 \mathcal{R}_o(0) \ominus ([0,N_1+N_2]\times [0,N_1+N_2]) & \subseteq \mathbb{C}_{FB}\subseteq & \mathcal{R}_o(0)\oplus ([0,N_1]\times [0,N_2]),\label{rr1}\\
 \mathcal{R}_o(0) \ominus ([0,M_1+M_2]\times [0,M_1+M_2]) & \subseteq \mathbb{C}_{FB}^R\subseteq & \mathcal{R}_o(0)\oplus ([0,M_1]\times [0,M_2])\label{rr2}.
 \end{eqnarray}
 Then, we get
 \begin{eqnarray}
 \mathbb{C}_{FB}^R \ominus ([0,N_1+N_2+M_1]\times [0,N_1+N_2+M_2]) & \subseteq \mathbb{C}_{FB}\subseteq &\nonumber\\
 \mathbb{C}_{FB}^R\oplus ([0,M_1+M_2+N_1]\times [0,M_1+M_2+N_2]),&&\label{rr3}\\
 \mathbb{C}_{FB} \ominus ([0,M_1+M_2+N_1]\times [0,M_1+M_2+N_2]) & \subseteq \mathbb{C}_{FB}^R\subseteq &\nonumber\\
 \mathbb{C}_{FB}\oplus ([0,N_1+N_2+M_1]\times [0,N_1+N_2+M_2]).\label{rr4}&&
 \end{eqnarray}\label{thm:reciprocity}
 \end{theorem}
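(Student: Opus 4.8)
The plan is to treat \eqref{rr1} as already in hand — it is exactly Theorem \ref{outer_inner_capacity_reciprocal} — and to obtain \eqref{rr2} by invoking the same theorem on the \emph{reciprocal} channel, after establishing that the region $\mathcal{R}_o(0)$ is itself invariant under the reciprocity transformation. Equations \eqref{rr3} and \eqref{rr4} then become pure region-algebra consequences of \eqref{rr1} and \eqref{rr2}. So the argument splits into three pieces: transcribing Theorem \ref{outer_inner_capacity_reciprocal} to the reciprocal channel, proving the reciprocal invariance of $\mathcal{R}_o(0)$, and the combination.

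First I would record the reciprocal substitutions: the reciprocal channel has antenna tuple $(N_1,M_1,N_2,M_2)$ together with $H_{ij}^R=H_{ji}^T$ and $\rho_{ij}^R=\rho_{ji}$. Because the transmit and receive antenna counts are exchanged, Theorem \ref{outer_inner_capacity_reciprocal} applied to this channel yields the sandwich of \eqref{rr2}, but with the reciprocal region $\mathcal{R}_o^R(0)$ in place of $\mathcal{R}_o(0)$ and with $M_i$ replacing $N_i$ in the gap boxes (so the inner gap becomes $[0,M_1+M_2]^2$ and the outer gap becomes $[0,M_1]\times[0,M_2]$). Thus \eqref{rr2} reduces to the single key claim $\mathcal{R}_o^R(0)=\mathcal{R}_o(0)$.

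The core of the proof — and the step I expect to be hardest — is this reciprocal invariance, because the six bounds defining $\mathcal{R}_o(0)$ in \eqref{ro0eq1}--\eqref{ro0eql} are \emph{permuted}, not individually preserved, by reciprocity. I would verify the required identities using $\det(I+AA^\dagger)=\det(I+A^\dagger A)$, the Schur-complement (block-determinant) factorization, and $\det M=\det M^T$. The representative computation is the pair of $R_1$-bounds: stacking $\sqrt{\rho_{11}}H_{11}$ on top of $\sqrt{\rho_{12}}H_{12}$ into a matrix $W$, the second $R_1$-bound is the Schur complement of a block matrix equal to $I_{N_1+N_2}+WW^\dagger$, hence equals $\log\det(I_{M_1}+W^\dagger W)=\log\det(I_{M_1}+\rho_{11}H_{11}^\dagger H_{11}+\rho_{12}H_{12}^\dagger H_{12})$, which is precisely the reciprocal channel's MAC-type $R_1$-bound \eqref{ro0eq1} under $H_{ij}\mapsto H_{ji}^T$, $\rho_{ij}\mapsto\rho_{ji}$ once $\det M=\det M^T$ strips the transposes. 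Symmetrically the original MAC bound \eqref{ro0eq1} matches the reciprocal channel's second $R_1$-bound, so the two $R_1$-bounds simply swap. The same pairing must be checked for the two $R_2$-bounds and the two sum-rate bounds; the bookkeeping of tracking which $\rho$ and which $H$ land where, and confirming the transpose and conjugate factors cancel inside each determinant, is where the real effort lies. Establishing all six identities shows the two regions have identical facets, so $\mathcal{R}_o^R(0)=\mathcal{R}_o(0)$ and \eqref{rr2} follows.

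Finally, \eqref{rr3} and \eqref{rr4} follow by combining \eqref{rr1} and \eqref{rr2} through the fact that $\mathbb{C}_{FB}$ and $\mathbb{C}_{FB}^R$ are now sandwiched by the \emph{common} region $\mathcal{R}_o(0)$. I would use the monotonicity of $\oplus$ and $\ominus$ together with their additive composition, $(\mathbb{C}\oplus([0,a]\times[0,b]))\oplus([0,c]\times[0,d])=\mathbb{C}\oplus([0,a+c]\times[0,b+d])$, to chain the inclusions: for the right inclusion of \eqref{rr3}, bound $\mathbb{C}_{FB}$ above by $\mathcal{R}_o(0)\oplus([0,N_1]\times[0,N_2])$ via \eqref{rr1} and then bound $\mathcal{R}_o(0)$ in terms of $\mathbb{C}_{FB}^R$ via the inner half of \eqref{rr2}, so the $N_i$ gap of \eqref{rr1} adds to the $M_1+M_2$ gap of \eqref{rr2} to give the stated $M_1+M_2+N_i$; the analogous chaining with the two channels interchanged gives \eqref{rr4}. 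This last step is essentially a triangle inequality for the ``within constant bits'' relation and is routine once the common intermediate region is in hand.
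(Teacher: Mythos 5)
Your proposal is correct and follows essentially the same route as the paper: \eqref{rr1}--\eqref{rr2} are obtained from Theorem \ref{outer_inner_capacity_reciprocal} applied to both channels once the invariance $\mathcal{R}_o^R(0)=\mathcal{R}_o(0)$ is established via the pairwise swap of the six bounds (exactly the paper's $I_1=I_3'$, $I_3=I_1'$, $I_5=I_6'$, etc.\ in Appendix \ref{Appendix2}), and \eqref{rr3}--\eqref{rr4} follow by the same $\oplus$/$\ominus$ chaining. Your one-shot verification of the key identity --- writing the Schur-complement pair as $\log\det(I_{N_1+N_2}+WW^\dagger)=\log\det(I_{M_1}+W^\dagger W)$ for the stacked matrix $W$ and stripping transposes with $\det M=\det M^T$ --- is a slightly more compact packaging of the paper's longer telescoping computation and its separate complex-conjugation step, but it is the same argument in substance.
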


\begin{proof}
In Appendix \ref{Appendix2}, we show that the region $\mathcal{R}_o(0)$ for the MIMO IC is the same as the corresponding region $\mathcal{R}_o^R(0)$ for the corresponding reciprocal MIMO IC. Thus, (\ref{rr1})-(\ref{rr2}) follow from Theorem \ref{outer_inner_capacity_reciprocal}. Moreover, (\ref{rr3})-(\ref{rr4}) follow from simple manipulations on (\ref{rr1})-(\ref{rr2}).
\end{proof}
\begin{figure}[ht]
\centering{
	\includegraphics[width=8.5cm]{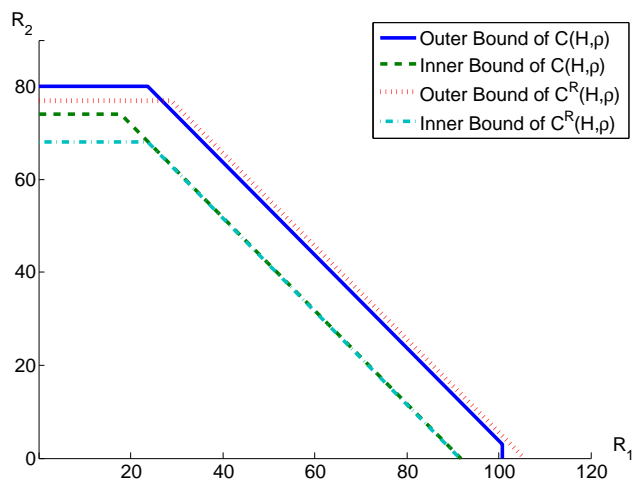}

}

\caption[Optional caption for list of figures]{Inner and outer bounds for the capacity region of MIMO IC with feedback specified in \eqref{hhhhhh} and inner and outer bounds for its reciprocal channel.
}
\label{fig:uibtt}
\end{figure}

Thus, we see that the capacity region of a two-user MIMO IC with feedback and the  corresponding reciprocal channel with feedback are within $N_1+N_2+M_1+M_2+\max{(N_1+M_1,N_2+M_2)}$ bits.

In Figure \ref{fig:uibtt},  we compare the inner and outer bounds for the capacity region of the MIMO IC with feedback specified in \eqref{hhhhhh}, and inner and outer bounds for its reciprocal channel. For this figure, the parameters for the IC are the same as those used for Figure \ref{fig:uibmm}. We note that the capacity region of the MIMO IC with feedback and that of its reciprocal channel with feedback are within a constant gap.

%In the next section we compare our achievements with other results for SISO and non-feedback models.
%Such a characterization involves establishing a rate region and showing that no rate pair in the capacity region can be further from all the points in the achievable region by more than this constant. Such a characterization of the capacity region will sometimes be referred as the approximate capacity of the channel and the constant as the \textit{gap} of approximation. A coding scheme which can achieve a rate region that is within a constant number of bits will be called an approximate capacity optimal (or constant gap to capacity optimal) coding scheme.

%In what follows, we shall obtain a set of explicit upper bounds to the capacity region in terms of the channel matrices. We then give an operational interpretation of these bounds which in turn helps us identify a particular input distribution and linear superposition scheme (by specifying the covariance matrices for the private and public message of each user and the correlation of the transmissions as the common transmission) leading to a HK coding scheme. The achievable rate region of this coding scheme and the corresponding gap to approximate capacity is computed in section B.

%Finally, in section C, comparing these set of upper and lower bounds we prove that the two bounds are within  $N_1+N_2$ bits of each other (thus proving the constant gap capacity result).

%\section{Special Cases}
%In this section, we will consider some special cases of the results described in the previous section.

\section{GDoF Region of MIMO IC with Feedback}

This section describes our results on the GDoF region of the two-user MIMO IC with feedback. The GDoF gives the high SNR characterization of the capacity region. Since the inner and outer-bounds on the capacity region are within a constant gap, we characterize the exact GDoF region of the MIMO IC with feedback.

Define
\begin{eqnarray}\label{f}
f(u,\left({a}_{1},u_1\right),\left({a}_{2},u_2\right))\triangleq
\begin{cases}
\min{(u,u_1)}a^+_1+\min{((u-u_1)^+,u_2)}a_2^+, & \text{ if } a_1\ge a_2\\
\min{(u,u_2)}a^+_2+\min{((u-u_2)^+,u_1)}a_1^+, & \text{ otherwise }
\end{cases}.
\end{eqnarray}

The following result characterizes the GDoF for general MIMO IC with feedback for general power scaling parameters $\alpha_{ij}$.

\begin{theorem}
The GDoF region of the two-user MIMO IC with feedback is given by the set of $(d_1,d_2)$ satisfying:
\begin{eqnarray}
{\alpha }_{11}d_1&\le& f(N_1,\left({\alpha }_{11},M_1\right),\left({\alpha }_{21},M_2\right)),\label{gdofeq1}\\
{\alpha }_{22}d_2&\le& f(N_2,\left({\alpha }_{22},M_2\right),\left({\alpha }_{12},M_1\right)),\label{gdofeq2}\\
{\alpha }_{11}d_1&\le& {\alpha }_{12}{\min  \left(M_1,N_2\right)\ }+{\alpha }_{11}{\min  \left({\left(M_1-N_2\right)}^+,N_1\right)\ }+\nonumber\\
&&{\left({\alpha }_{11}-{\alpha }_{12}\right)}^+({\min  \left(M_1,N_1\right)\ }-{\min  \left({\left(M_1-N_2\right)}^+,N_1\right)\ }),\label{gdofeq3}\\
{\alpha }_{22}d_2&\le& {\alpha }_{21}{\min  \left(M_2,N_1\right)\ }+{\alpha }_{22}{\min  \left({\left(M_2-N_1\right)}^+,N_2\right)\ }+\nonumber\\
&&{\left({\alpha }_{22}-{\alpha }_{21}\right)}^+\left({\min  \left(M_2,N_2\right)\ }-{\min  \left({\left(M_2-N_1\right)}^+,N_2\right)\ }\right),\label{gdofeq4}\\
{\alpha }_{11}d_1+{\alpha }_{22}d_2&\le& f\left(N_2,\left({\alpha }_{22},M_2\right),\left({\alpha }_{12},M_1\right)\right)+{\alpha }_{11}{\min  \left({\left(M_1-N_2\right)}^+,N_1\right)\ }+\nonumber\\
&&{\left({\alpha }_{11}-{\alpha }_{12}\right)}^+\left({\min  \left(M_1,N_1\right)\ }-{\min  \left({\left(M_1-N_2\right)}^+,N_1\right)\ }\right),\label{gdofeq5}\\
{\alpha }_{11}d_1+{\alpha }_{22}d_2&\le& f\left(N_1,\left({\alpha }_{11},M_1\right),\left({\alpha }_{21},M_2\right)\right)+{\alpha }_{22}{\min  \left({\left(M_2-N_1\right)}^+,N_2\right)\ }+\nonumber\\
&&{\left({\alpha }_{22}-{\alpha }_{21}\right)}^+\left({\min  \left(M_2,N_2\right)\ }-{\min  \left({\left(M_2-N_1\right)}^+,N_2\right)\ }\right). \label{gdofeq6}
\end{eqnarray}

\label{thm_gdof}
\end{theorem}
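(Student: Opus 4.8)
The plan is to reduce the whole computation to the single region $\mathcal{R}_o(0)$ and then read off the high-power exponent of each of its defining log-determinant bounds. By Theorem \ref{outer_inner_capacity_reciprocal}, the capacity region $\mathbb{C}_{FB}$ is squeezed between $\mathcal{R}_o(0)\ominus([0,N_1+N_2]\times[0,N_1+N_2])$ and $\mathcal{R}_o(0)\oplus([0,N_1]\times[0,N_2])$, and both offsets are constants independent of $\mathsf{SNR}$. Dividing every rate by $\log\mathsf{SNR}$ and letting $\mathsf{SNR}\to\infty$, these offsets contribute only $o(\log\mathsf{SNR})$ and hence vanish in the limit, so the GDoF region equals the normalized high-power limit of $\mathcal{R}_o(0)$. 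It then suffices to evaluate, for each of the six bounds \eqref{ro0eq1}--\eqref{ro0eql}, the limit of $(\text{bound})/\log\mathsf{SNR}$; this yields the right-hand sides of \eqref{gdofeq1}--\eqref{gdofeq6}, the left-hand factors $\alpha_{ii}$ reflecting that $d_i$ is measured relative to $\mathsf{T}_i$'s direct-link scale, so that $R_i$ grows as $\alpha_{ii}d_i\log\mathsf{SNR}$.

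The core is a single lemma: for generic channel matrices (full rank with probability one, as noted after the GDoF definition), the GDoF of a two-user MIMO multiple-access term $\log\det(I_u+\rho_1 H_1H_1^\dagger+\rho_2H_2H_2^\dagger)$, with $u$ receive antennas and $H_1,H_2$ having $u_1,u_2$ columns at SNR exponents $a_1,a_2$, equals $f(u,(a_1,u_1),(a_2,u_2))$. I would prove this by a singular-value argument: writing the sum as $GG^\dagger$ with $G=[\sqrt{\rho_1}H_1\ \sqrt{\rho_2}H_2]$, the stronger user occupies $\min(u,u_{\text{strong}})$ receive dimensions at its level, and the weaker user fills the residual $\min((u-u_{\text{strong}})^+,u_{\text{weak}})$ dimensions at its level, which is precisely \eqref{f} (the opposite ordering of exponents being symmetric). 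The lower bound comes from restricting inputs to these dimensions and the matching upper bound from counting the singular values exceeding each SNR scale. Applying the lemma directly to the receiver-side MAC terms gives \eqref{gdofeq1}, \eqref{gdofeq2}, and the leading summands $f(N_2,(\alpha_{22},M_2),(\alpha_{12},M_1))$ and $f(N_1,(\alpha_{11},M_1),(\alpha_{21},M_2))$ of \eqref{gdofeq5}, \eqref{gdofeq6}.

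For the single-rate bounds \eqref{gdofeq3}, \eqref{gdofeq4} I would first collapse the two log-det terms of the corresponding $R_1$ (resp. $R_2$) bound among \eqref{ro0eq1}--\eqref{ro0eql} via the Schur-complement identity: their sum equals $\log\det(I_{N_1+N_2}+GG^\dagger)$ with the \emph{stacked} matrix $G=[\sqrt{\rho_{11}}H_{11};\sqrt{\rho_{12}}H_{12}]$, i.e. the genie-aided point-to-point mutual information from $\mathsf{T}_1$ to $(\mathsf{D}_1,\mathsf{D}_2)$. Using $\det(I+GG^\dagger)=\det(I+G^\dagger G)$ this becomes $\log\det(I_{M_1}+\rho_{11}H_{11}^\dagger H_{11}+\rho_{12}H_{12}^\dagger H_{12})$, which is again a MAC term, now with $M_1$ receive antennas and ``transmitters'' of $N_1,N_2$ antennas at exponents $\alpha_{11},\alpha_{12}$; the same lemma gives GDoF $f(M_1,(\alpha_{11},N_1),(\alpha_{12},N_2))$, and a short rearrangement shows this equals the right-hand side of \eqref{gdofeq3}, with the $(\alpha_{11}-\alpha_{12})^+$ term accounting for the dimensions captured at the higher of the two receive SNR levels. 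The Schur summands in \eqref{gdofeq5}, \eqref{gdofeq6} are this same quantity minus the detached factor $\log\det(I_{N_2}+\rho_{12}H_{12}H_{12}^\dagger)$ (resp. $\log\det(I_{N_1}+\rho_{21}H_{21}H_{21}^\dagger)$), whose GDoF is $\alpha_{12}\min(M_1,N_2)$ (resp. $\alpha_{21}\min(M_2,N_1)$); subtracting yields exactly the stated expressions. I expect the main obstacle to be making the singular-value counting in the lemma rigorous, namely showing that for full-rank matrices the eigenvalues of the scaled Gram matrices cluster at the predicted exponents with multiplicities matching the $\min(\cdot)$ and $(\cdot)^+$ combinatorics uniformly as $\mathsf{SNR}\to\infty$; genericity of $\overline{H}$ is what rules out degenerate cancellations and pins down these multiplicities.
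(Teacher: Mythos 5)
Your proposal is correct, and the overall frame (reduce to $\lim_{\mathsf{SNR}\to\infty}\mathcal{R}_o(0)/\log\mathsf{SNR}$ because the constant gaps in Theorem \ref{outer_inner_capacity_reciprocal} vanish after normalization, then evaluate the six bounds term by term using the MAC-type GDoF lemma) is exactly the paper's. Where you genuinely diverge is in handling the Schur-complement summand $\log\det\bigl(I_{N_1}+\rho_{11}H_{11}H_{11}^{\dagger}-\rho_{11}\rho_{12}H_{11}H_{12}^{\dagger}(I_{N_2}+\rho_{12}H_{12}H_{12}^{\dagger})^{-1}H_{12}H_{11}^{\dagger}\bigr)$ appearing in bounds \eqref{gdofeq3}--\eqref{gdofeq6}. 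The paper attacks it head-on with a dedicated lemma (Lemma \ref{lemma-dof}): an SVD of $H_{12}$, the diagonal identity of Lemma \ref{diagonal}, a split of $U_{12}$ into range and null-space blocks, and then an application of the MAC lemma to the resulting two-level effective channel. You instead fold the Schur term back together with its companion $\log\det(I_{N_2}+\rho_{12}H_{12}H_{12}^{\dagger})$ via the block-determinant formula and Sylvester's identity to get the single transmit-side MAC term $\log\det(I_{M_1}+\rho_{11}H_{11}^{\dagger}H_{11}+\rho_{12}H_{12}^{\dagger}H_{12})$, apply the same MAC lemma once to obtain $f(M_1,(\alpha_{11},N_1),(\alpha_{12},N_2))$, and recover the isolated Schur summand for \eqref{gdofeq5}--\eqref{gdofeq6} by subtracting $\alpha_{12}\min(M_1,N_2)$. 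This is precisely the manipulation the paper performs for a different purpose in Appendix \ref{Appendix2} (showing $I_3=I_1'$ for reciprocity), so your route is fully consistent with the paper's own identities; I also checked that $f(M_1,(\alpha_{11},N_1),(\alpha_{12},N_2))$ does equal the right-hand side of \eqref{gdofeq3} in both orderings of $\alpha_{11},\alpha_{12}$ (a short but necessary case analysis you correctly flag). What your approach buys is economy — one MAC lemma suffices and the reciprocity of the GDoF region (Corollary \ref{cor:reciprocity}) becomes transparent; what the paper's buys is a self-contained eigenvalue-level computation of the conditional covariance that does not rely on first re-deriving the reciprocity identity. The one open cost on your side is the rigorous singular-value counting behind the MAC lemma itself, which the paper sidesteps by citing it from \cite{Varanasi}; you could do the same.
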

\begin{proof}
According to Theorem \ref{outer_inner_capacity_reciprocal}, we can see that $GDoF=\lim_{\mathsf{SNR}\to \infty}\mathcal{R}_o(0)/\log \mathsf{SNR}$, which is evaluated in Appendix \ref{Appendix3} to get the result as in the statement of the theorem.
\end{proof}
Since the capacity region of the MIMO IC with feedback and the corresponding reciprocal IC with feedback are within constant gap, the GDoF region of the MIMO IC with feedback and that of the corresponding reciprocal IC with feedback are the same, as given in the next corollary.
\begin{corollary}
The GDoF region for the reciprocal IC with perfect feedback is given by the set of $(d_1,d_2)$ satisfying \eqref{gdofeq1}-\eqref{gdofeq6}.\label{cor:reciprocity}
\end{corollary}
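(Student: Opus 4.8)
The plan is to derive Corollary \ref{cor:reciprocity} as an immediate consequence of Theorem \ref{thm:reciprocity} together with Theorem \ref{thm_gdof}, rather than recomputing the GDoF region from scratch. The key observation is that a constant gap between two rate regions vanishes in the GDoF limit, because every constant (independent of $\mathsf{SNR}$) is $o(\log\mathsf{SNR})$ and is therefore washed out by the normalization $1/\log\mathsf{SNR}$ that defines the GDoF. Thus, two capacity regions that are within a constant number of bits must have identical GDoF regions.

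Concretely, I would proceed as follows. First, recall from Theorem \ref{thm:reciprocity}, specifically \eqref{rr3}--\eqref{rr4}, that $\mathbb{C}_{FB}$ and $\mathbb{C}_{FB}^R$ are contained in one another up to $\oplus$ and $\ominus$ shifts by boxes whose side lengths are fixed linear combinations of $M_1,M_2,N_1,N_2$, namely $N_1+N_2+M_1$, $N_1+N_2+M_2$, $M_1+M_2+N_1$, and $M_1+M_2+N_2$. All of these constants are independent of $\mathsf{SNR}$. Second, apply the definition of the GDoF region: $(d_1,d_2)$ lies in the GDoF region of a channel iff $(d_1\log\mathsf{SNR}-o(\log\mathsf{SNR}),\ d_2\log\mathsf{SNR}-o(\log\mathsf{SNR}))$ is inside the corresponding capacity region. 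Because shifting the capacity region by a fixed constant in each coordinate only perturbs the argument by an $o(\log\mathsf{SNR})$ amount, the set of admissible $(d_1,d_2)$ is unchanged. Hence the GDoF region computed from $\mathbb{C}_{FB}^R$ equals the GDoF region computed from $\mathbb{C}_{FB}$.

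Combining these two steps, the GDoF region of the reciprocal channel coincides with that of the original channel, which by Theorem \ref{thm_gdof} is exactly the set of $(d_1,d_2)$ satisfying \eqref{gdofeq1}--\eqref{gdofeq6}. This establishes the corollary. An alternative, self-contained route would be to directly compute $\lim_{\mathsf{SNR}\to\infty}\mathcal{R}_o^R(0)/\log\mathsf{SNR}$ for the reciprocal channel and verify it matches; but since Appendix \ref{Appendix2} already shows $\mathcal{R}_o(0)=\mathcal{R}_o^R(0)$, this reduces to the same computation already carried out in Appendix \ref{Appendix3} for Theorem \ref{thm_gdof}, so no new work is required.

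I do not anticipate a genuine obstacle here, as the argument is essentially a formal consequence of the constant-gap statement. The only point requiring a modicum of care is confirming that the GDoF expressions in \eqref{gdofeq1}--\eqref{gdofeq6} are themselves symmetric under the reciprocity map $(M_1,N_1,M_2,N_2)\mapsto(N_1,M_1,N_2,M_2)$ with the corresponding relabeling of the exponents $\alpha_{ij}$; this symmetry is exactly what the identity $\mathcal{R}_o(0)=\mathcal{R}_o^R(0)$ encodes, and it guarantees internal consistency between the two derivation routes. Given that, the corollary follows without any further computation.
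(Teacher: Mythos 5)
Your proposal is correct and matches the paper's own (one-sentence) justification: the paper likewise deduces the corollary from the constant gap between $\mathbb{C}_{FB}$ and $\mathbb{C}_{FB}^R$ established in Theorem \ref{thm:reciprocity}, noting that any $\mathsf{SNR}$-independent gap is $o(\log\mathsf{SNR})$ and hence leaves the GDoF region unchanged, so Theorem \ref{thm_gdof} gives the region \eqref{gdofeq1}--\eqref{gdofeq6} for the reciprocal channel as well.
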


We will now consider a special case of Theorem \ref{thm_gdof} where $M_1=M_2=M$, $N_1=N_2=N$, $\alpha_{11}=\alpha_{22}=1$, and $\alpha_{12}=\alpha_{21}=\alpha$. This MIMO IC is called a symmetric MIMO IC. We also define GDoF, $d$, as the supremum over all $d_i$ such that $(d_i,d_i)$ is in the GDoF region. The GDoF for the symmetric MIMO IC with feedback is given as follows.

\begin{corollary}
The GDoF for a two-user symmetric MIMO IC with feedback for $N\le M$ is given as follows:
\begin{eqnarray}
GDoF_{PF}=\left\{ \begin{array}{ll}
N-\frac{\alpha}{2} {\left(2N-M\right)}^+,& \text{ if }\alpha \le 1, \\
N(\frac{\alpha+1}{2})-\frac{1}{2}{\left(2N-M\right)}^+,& \text{ if } \alpha \ge 1. \end{array}
\right.
\end{eqnarray}
Since the expressions are symmetric in $N$ and $M$ by Corollary \ref{cor:reciprocity}, the GDoF for $M\le N$ follows by interchanging the roles of $M$ and $N$.
\end{corollary}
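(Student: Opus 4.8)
The plan is to specialize Theorem \ref{thm_gdof} to $M_1=M_2=M$, $N_1=N_2=N$, $\alpha_{11}=\alpha_{22}=1$, $\alpha_{12}=\alpha_{21}=\alpha$, set $d_1=d_2=d$, and identify $GDoF_{PF}$ as the largest $d$ for which $(d,d)$ satisfies all six constraints (\ref{gdofeq1})--(\ref{gdofeq6}); equivalently, as the minimum of the per-constraint upper bounds on $d$. Under the standing hypothesis $N\le M$ I would first simplify the dimension terms, using $\min(M,N)=N$, $(M-N)^+=M-N$ and $(N-M)^+=0$, and note that the only remaining dimension quantity that still needs a further case split is $m:=\min((M-N)^+,N)=\min(M-N,N)$, which equals $M-N$ when $M\le 2N$ and $N$ when $M>2N$. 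A useful observation is that, with these symmetric parameters, the four invocations of $f$ in (\ref{gdofeq1}), (\ref{gdofeq2}), (\ref{gdofeq5}), (\ref{gdofeq6}) all collapse to the single quantity $f(N,(1,M),(\alpha,M))$, which by (\ref{f}) equals $N$ when $\alpha\le 1$ and $N\alpha$ when $\alpha\ge 1$ (the second summand dropping out since $(N-M)^+=0$).

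The key structural simplification I would exploit is that, after substitution, the tail of the individual bound (\ref{gdofeq3}) beyond its leading $\alpha N$ term coincides exactly with the tail of the sum bound (\ref{gdofeq5}) beyond its leading $f$ term. Writing $T:=m+(1-\alpha)^+(N-m)$ for this common tail, the relevant bounds on $d$ become: from (\ref{gdofeq1})--(\ref{gdofeq2}), $A:=f$; from (\ref{gdofeq3})--(\ref{gdofeq4}), $B:=\alpha N+T$; and from (\ref{gdofeq5})--(\ref{gdofeq6}), $C:=\tfrac12(f+T)$. Thus $GDoF_{PF}=\min(A,B,C)$, and I would show that the sum bound $C$ is always the binding one. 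This reduces to two elementary inequalities: $C\le A$ is equivalent to $T\le f$, and $C\le B$ is equivalent to $f\le 2\alpha N+T$. Both hold for all $N\le M$ and $\alpha\ge 0$: for $\alpha\ge 1$ one has $T=m\le N\le \alpha N=f$, while for $\alpha\le 1$ a short computation gives $T=N-\alpha(N-m)\le N=f$; the inequality $C\le B$ follows similarly, since $f-T\le 2\alpha N$ reduces to $N+m\ge 0$ for $\alpha\le 1$ and is immediate for $\alpha\ge 1$.

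It then remains to evaluate $C=\tfrac12(f+T)$ explicitly. For $\alpha\le 1$, $f=N$ and $T=N-\alpha(N-m)$, so $C=N-\tfrac{\alpha}{2}(N-m)$; substituting $N-m=(2N-M)^+$ (namely $2N-M$ when $M\le 2N$ and $0$ when $M>2N$) yields $C=N-\tfrac{\alpha}{2}(2N-M)^+$. For $\alpha\ge 1$, $f=\alpha N$ and $T=m$, so $C=\tfrac12(\alpha N+m)$; using $m=N-(2N-M)^+$ gives $C=N\tfrac{\alpha+1}{2}-\tfrac12(2N-M)^+$. These are precisely the two cases of the claimed formula for $GDoF_{PF}$. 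Finally, the symmetry of the expression under interchanging $M$ and $N$ is justified by Corollary \ref{cor:reciprocity}: the reciprocal channel has the roles of $M$ and $N$ swapped and the same GDoF region, so the formula for $M\le N$ is obtained by that interchange.

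The main obstacle is organizational rather than conceptual: one must carry four overlapping case distinctions (the comparison of $\alpha$ with $1$, of $M$ with $2N$, and the two branches of $f$ and of $(1-\alpha)^+$) through the substitution without sign or $\min$ slips, and in particular verify that the two single-rate bounds $A$ and $B$ never undercut the sum bound $C$, so that no tighter active constraint is missed. Once the identity ``sum-tail $=$ individual-tail'' is isolated, the verification collapses to the two one-line inequalities above, and the rest is bookkeeping with the $(\cdot)^+$ operator.
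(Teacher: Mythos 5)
Your proposal is correct and follows essentially the same route as the paper: specialize Theorem \ref{thm_gdof} to the symmetric parameters, collapse the six constraints to the three distinct bounds $A$, $B$, $C$, and identify the sum-rate bound $C$ as the binding one in both regimes of $\alpha$, then evaluate it using $N-m=(2N-M)^+$. Your isolation of the common tail $T$ lets you verify $C=\min(A,B,C)$ via two one-line inequalities valid for all $\alpha\ge 0$, whereas the paper computes all three bounds separately in each case and simply asserts which is smallest, so your organization is if anything slightly cleaner on that point.
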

\begin{proof}
For the symmetric MIMO IC, we have
\begin{eqnarray}
f(N_i,\left({\alpha }_{ii},M_i\right),\left({\alpha }_{ji},M_j\right))&=&f(N,\left(1,M\right),\left({\alpha },M\right))\nonumber\\
&=& \max(1,\alpha)\min(M,N)+\min(1,\alpha)\min((N-M)^+,M).\label{fforsym}
\end{eqnarray}

We will split the proof for $N\le M$ in two cases.

\noindent {\bf Case 1 - $\alpha\le1$:}
We will go over all equations \eqref{gdofeq1}-\eqref{gdofeq6} and evaluate them for the symmetric case with $\alpha\le 1$.
Equations (\ref{gdofeq1}) and (\ref{gdofeq2}) can be simplified using \eqref{fforsym} as follows
\begin{eqnarray}
d&\le& \max(1,\alpha)\min(M,N)+\min(1,\alpha)\min((N-M)^+,M)\nonumber\\
&=&N. \label{simp1l1}
\end{eqnarray}
Equations (\ref{gdofeq3}) and (\ref{gdofeq4}) can be simplified as
\begin{eqnarray}
d&\le& {\alpha }{\min  \left(M,N\right)\ }+{\min  \left({\left(M-N\right)}^+,N\right)\ }+{\left(1-{\alpha }\right)}^+({\min  \left(M,N\right)\ }-{\min  \left({\left(M-N\right)}^+,N\right)\ })\nonumber\\
&=& \alpha N+\min((M-N),N)+{(1-{\alpha })}{N}-{(1-\alpha)}{\min  \left({\left(M-N\right)},N\right)\ }\nonumber\\
&=&N+\alpha\min((M-N),N)\nonumber\\
&=&N+\alpha(N-(2N-M)^+). \label{simp2l1}
\end{eqnarray}
Equations (\ref{gdofeq5}) and (\ref{gdofeq6}) can be simplified as
\begin{eqnarray}
d&\le& \frac{1}{2}(\max(1,\alpha)\min(M,N)+\min(1,\alpha)\min((N-M)^+,M)+{\min  \left({\left(M-N\right)}^+,N\right)\ }+\nonumber\\
&&{\left(1-{\alpha }\right)}^+({\min  \left(M,N\right)\ }-{\min  \left({\left(M-N\right)}^+,N\right)\ }))\nonumber\\
&=&\frac{1}{2}(N+(1-\alpha)N+\alpha\min((M-N),N))\nonumber\\
&=&N-\frac{1}{2}\alpha(N-(N-(2N-M)^+))\nonumber\\
&=&N-\frac{\alpha}{2}((2N-M)^+).  \label{simp3l1}
\end{eqnarray}

We note that the minimum of the right hand sides of \eqref{simp1l1}, \eqref{simp2l1}, and \eqref{simp3l1} would give us the GDoF. The minimum of these three terms is \eqref{simp3l1} which proves the result for $\alpha \le 1$.

\noindent {\bf Case 2 - $\alpha\ge1$:} In this case, equations (\ref{gdofeq1}) and (\ref{gdofeq2}) can be simplified as
\begin{eqnarray}
d&\le& \max(1,\alpha)\min(M,N)+\min(1,\alpha)\min((N-M)^+,M)\nonumber\\
&=&\alpha N. \label{simp1g1}
\end{eqnarray}
Equations (\ref{gdofeq3}) and (\ref{gdofeq4}) can be simplified as
\begin{eqnarray}
d&\le& {\alpha }{\min  \left(M,N\right)\ }+{\min  \left({\left(M-N\right)}^+,N\right)\ }+\nonumber\\
&&{\left(1-{\alpha }\right)}^+({\min  \left(M,N\right)\ }-{\min  \left({\left(M-N\right)}^+,N\right)\ })\nonumber\\
&=& \alpha N+\min((M-N),N). \label{simp2g1}
\end{eqnarray}
Equations (\ref{gdofeq5}) and (\ref{gdofeq6}) can be simplified as
\begin{eqnarray}
d&\le& \frac{1}{2}(\max(1,\alpha)\min(M,N)+\min(1,\alpha)\min((N-M)^+,M)+{\min  \left({\left(M-N\right)}^+,N\right)\ }+\nonumber\\
&&{\left(1-{\alpha }\right)}^+({\min  \left(M,N\right)\ }-{\min  \left({\left(M-N\right)}^+,N\right)\ }))\nonumber\\
&=&\frac{1}{2}(\alpha N+(N-(2N-M)^+))\nonumber\\
&=&N\frac{(\alpha+1)}{2}-\frac{1}{2}(2N-M)^+. \label{simp3g1}
\end{eqnarray}
We note that the minimum of the right hand sides of \eqref{simp1g1}, \eqref{simp2g1}, and \eqref{simp3g1} would give us the GDoF. The minimum of these three terms is \eqref{simp3g1} which proves the result for $\alpha \ge 1$.
\end{proof}

The authors of  \cite{gdof} found the GDoF for the two-user symmetric MIMO IC without feedback as follows for $N\le M$ (We can interchange the roles of $N$ and $M$ if $N>M$.)
\begin{eqnarray}
GDoF_{NF}=\left\{ \begin{array}{ll}
N-{\alpha}{\left(2N-M\right)}^+,&  \text{ if }0\le\alpha\le\frac{1}{2},\\
N-(1-{\alpha}){\left(2N-M\right)}^+,& \text{ if }\frac{1}{2}\le\alpha\le \frac{2}{3} ,\\
N-\frac{\alpha}{2}{\left(2N-M\right)}^+,& \text{ if }\frac{2}{3}\le\alpha\le1,\\
\min\{N,N(\frac{\alpha+1}{2})-\frac{1}{2}{\left(2N-M\right)}^+\},& \text{ if } 1 \le \alpha. \end{array}
\right.
\end{eqnarray}

We note that the GDoF with and without feedback are the same for $\frac{2}{3}\le\alpha \le 1$. Figure \ref{fig:subfigureExample} compares the GDoF for the two-user symmetric MIMO IC with and without feedback. In Figure \ref{fig:subfigureExample}(a), the ``W"-curve obtained without feedback delineates the very weak ($0\le\alpha\le\frac{1}{2}$), weak ($\frac{1}{2}\le\alpha\le\frac{2}{3}$), moderate ($\frac{2}{3}\le\alpha\le1$), strong ($1\le\alpha\le{3-{\frac{M}{N}}}$) and very strong ($3-{\frac{M}{N}}\le\alpha$) interference regimes. In the presence of feedback, the ``W"-curve improves to a ``V"-curve which delineates the weak ($0\le\alpha\le1$) and strong ($1\le\alpha$) interference regimes for all choices of $N$ and $M$. For $\frac{M}{2}<N\leq M$, we see that the GDoF with feedback is strictly greater than that without feedback for $0<\alpha< 2/3$ and for $\alpha > 3-M/N$. For $N\le M/2$,  we see that the GDoF with feedback is strictly greater than that without feedback for $\alpha > 2$. The GDoF improvement indicates an unbounded gap in the corresponding capacity regions as the SNR goes to infinity.

\begin{figure}[ht]
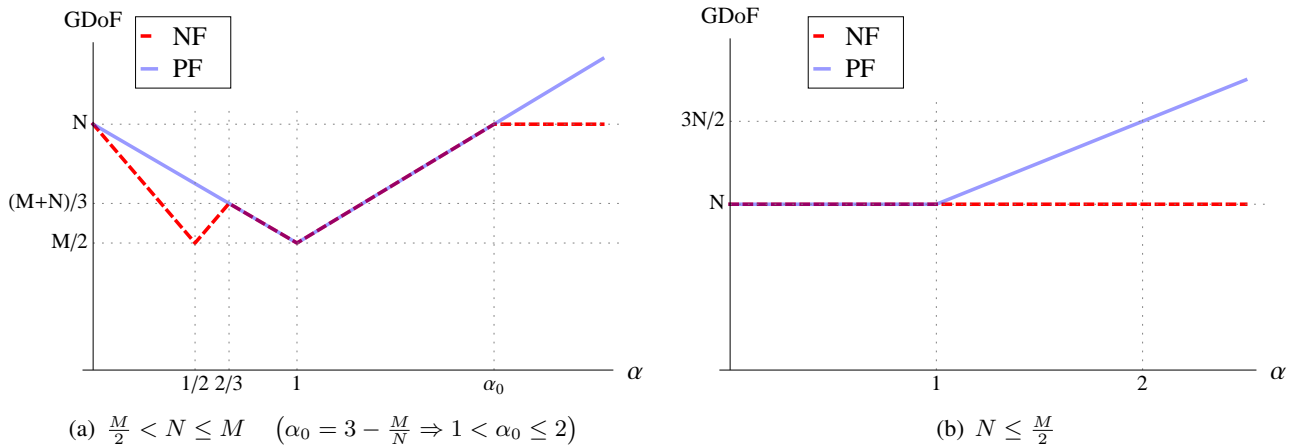

\centering
\subfigure[$\frac{M}{2}<N\leq M\quad\left(\alpha_0=3-\frac{M}{N}\Rightarrow 1<\alpha_0\leq2\right)$]{
	\includegraphics[width=8.5cm]{1.pdf}
    \label{fig:subfig1}
}
\subfigure[$N\leq \frac{M}{2}$]{
	\includegraphics[width=8.5cm]{2.pdf}
    \label{fig:subfig2}
}
\caption[Optional caption for list of figures]{GDoF for symmetric MIMO IC with perfect feedback (PF), and no-feedback (NF) for \subref{fig:subfig1} $\frac{M}{2}<N\leq M$, and \subref{fig:subfig2} $N\leq \frac{M}{2}$.}
\label{fig:subfigureExample}
\end{figure}
Interestingly, from Figure \ref{fig:subfigureExample}(b) we can see that if we increase $M$ when $N\leq\frac{M}{2}$, the GDoF does not change. This can be interpreted as that while $N\leq\frac{M}{2}$, $N$ act as a bottleneck and increasing $M$ does not increase the GDoF. As a special case consider a MISO IC for which we note that the GDoF is the same for all $M\ge 2$. Thus, increasing the transmit antennas beyond $2$ does not increase the GDoF. However, increasing the transmit antennas from $1$ to $2$ gives a strict improvement in GDoF for all $\alpha>0$. Similar result also holds for SIMO systems where increasing the receive antennas from $1$ to $2$ help increase GDoF while increasing the receive antennas beyond $2$ does not increase the GDoF.

%\subsubsection{GDOF of the MISO Interference Channel with Feedback}

%In this part we express the channel and its inner and outer bound gap in MISO case. We can see GDoF plot for each $M=1,2,3$ in Figure 2. As we see, it increases by M.

%Surprisingly, we see that for every number of antennas in transmitters, the obtained gap between inner and outer bound is 2 bits.

\section{Conclusions}
This paper gives the capacity region of the MIMO IC with feedback within $N_1+N_2+\max(N_1,N_2)$ bits. The achievability is based on the block Markov encoding, backward decoding, and Han-Kobayashi message-splitting. The capacity region for the MIMO IC with feedback is shown to be within a constant number of bits from the capacity region of the corresponding reciprocal IC. Further, the GDoF region for the general MIMO IC is characterized. It is found that for the symmetric IC with feedback, the GDoF form a ``V"-curve rather than the ``W"-curve without feedback.

{ The authors of \cite{Alireza} considered a SISO IC with two rate-limited feedback links. Further, the authors of \cite{Achal} considered nine canonical feedback models in the SISO IC, ranging from one feedback link to four feedback links in various configurations. Extension of this work for different feedback models proposed in \cite{Achal} for rate-limited feedback links is an important future work, and is still open. Further, the extension to the general $K$-user IC is also open.}

%Extensions of this work to consider limited capacity feedback is an important next step, and is still open. Further, Understanding the approximate capacity regions of these models for general MIMO channels is also open.

%In this paper, at first we obtained capacity region of MIMO two user interference channel with feedback within a constant gap. At first we obtained an outer bound using a correlation function and then giving an insight from that, we achieved an inner bound within $N_1+N_2$ bits.

%%\noindent Then, we achieved capacity region within a constant gap without correlation which is new even for SISO that \cite{Tse} achieved just by using correlation function and then proved the reciprocity property ni the inverse channel.

%\noindent Then, according to that region we achieved GDoF which shows a 'V' curve behavior compared to the 'W' curve for non-feedback in \cite{Varanasi} which was expected.
%\appendix
\begin{appendices}
\section{Proof of Outer Bound for Theorem \ref{outer_capacity}} \label{apdx_outer}

In this Appendix, we will show that $\mathbb{C}_{FB}\subseteq \mathcal{R}_o(Q)$ for some covariance matrix $Q=\mathbb{E}[X_{1}X_{2}^{\dagger }]$.

{ The set of upper bounds to the capacity region will be derived in two steps. First, the capacity region is outer-bounded by a region defined in terms of the differential entropy of the random variables associated with the signals. These outer-bounds use genie-aided information at the receivers. Second, we outer-bound this region to prove the outer-bound as described in the statement of Theorem \ref{outer_capacity}.}

The following result outer-bounds the capacity region of two-user MIMO IC with feedback.

\begin{lemma}
Let $S_i$ be defined as $S_i\triangleq \sqrt{{\rho }_{ij}}H_{ij}X_i+Z_j$. Then, the capacity region of a two-user MIMO IC with feedback is outerbounded by the region formed by $(R_1,R_2)$ satisfying
\begin{eqnarray}
R_1&\le& h(Y_1)-h(Z_1),\\
R_2&\le& h(Y_2)-h(Z_2),\\
R_1&\le& h\left(Y_2\mathrel{\left|\vphantom{Y_2 X_2}\right.\kern-\nulldelimiterspace}X_2\right)-h\left(Z_2\right)+h(Y_1|X_2,S_1)-h(Z_1),\\
R_2&\le& h\left(Y_1\mathrel{\left|\vphantom{Y_1 X_1}\right.\kern-\nulldelimiterspace}X_1\right)-h\left(Z_1\right)+h(Y_2|X_2,S_1)-h(Z_2),\\
R_1+R_2&\le& h\left(Y_1\mathrel{\left|\vphantom{Y_1 S_1,X_2}\right.\kern-\nulldelimiterspace}S_1,X_2\right)-h\left(Z_2\right)+h(Y_2)-h(Z_1),\\
R_1+R_2&\le& h\left(Y_2\mathrel{\left|\vphantom{Y_2 S_2,X_1}\right.\kern-\nulldelimiterspace}S_2,X_1\right)-h\left(Z_1\right)+h(Y_1)-h(Z_2).
\end{eqnarray}
\label{lemma_conditional}
\end{lemma}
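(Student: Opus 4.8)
The plan is to combine Fano's inequality with genie-aided side information and a single-letterization step, where the genie is exactly the signal $S_i$ defined in the statement. I would first dispatch the two single-rate bounds $R_i \le h(Y_i) - h(Z_i)$: starting from $n(R_1 - \epsilon_n) \le I(W_1; Y_1^n)$ and expanding $I(W_1; Y_1^n) = h(Y_1^n) - h(Y_1^n|W_1)$, I use the feedback structure $X_1[t] = f_{1t}(W_1, Y_1^{t-1})$ so that $h(Y_1[t]|Y_1^{t-1},W_1) = h(Y_1[t]|Y_1^{t-1},W_1,X_1[t]) \ge h(Z_1[t]) = h(Z_1)$, because once $X_1[t]$ is known what remains is interference-plus-noise whose entropy is at least that of the independent noise. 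Together with $h(Y_1^n) \le \sum_t h(Y_1[t])$ this yields the bound after single-letterization with a uniform time-sharing index.

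For the genie-aided bounds (the third through sixth inequalities) the key idea is to hand receiver $1$ the genie $S_1^n$. Using independence of the messages, $n(R_1 - \epsilon_n) \le I(W_1; Y_1^n) \le I(W_1; Y_1^n, S_1^n | W_2)$, which I split as $h(Y_1^n, S_1^n | W_2) - h(Y_1^n, S_1^n | W_1, W_2)$. The crucial feedback observation is that, given both messages together with $(Y_1^{t-1}, S_1^{t-1})$, both inputs $X_1[t]$ and $X_2[t]$ are deterministic: $X_1[t]$ follows from $(W_1, Y_1^{t-1})$, while $X_2[t]$ is recovered inductively from $(W_2, S_1^{t-1})$ using the identity $Y_2[s] = S_1[s] + \sqrt{\rho_{22}}H_{22}X_2[s]$ to rebuild $Y_2^{t-1}$ step by step. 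Consequently the subtracted term collapses to pure noise, $h(Y_1^n, S_1^n|W_1,W_2) = n(h(Z_1)+h(Z_2))$. For the positive term I use the same reconstruction to show that, given $W_2$, the sequences $S_1^n$ and $Y_2^n$ determine each other through a triangular invertible map, so $h(S_1^n|W_2) = h(Y_2^n|W_2)$, and that $X_2[t]$ is a function of $(W_2, S_1^{t-1})$, so that conditioning can be tightened to the single-letter forms $h(S_1[t]|X_2[t])$ and $h(Y_1[t]|S_1[t],X_2[t])$. Assembling these pieces and using $h(S_1|X_2) = h(Y_2|X_2)$ yields the third inequality; the fifth follows by additionally bounding $I(W_2;Y_2^n) \le h(Y_2^n) - n h(Z_2)$ and telescoping the $h(Y_2^n|W_2)$ terms against the bijection above.

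The remaining inequalities (the second, fourth, and sixth) are obtained by symmetry, interchanging the roles of the two users and handing receiver $2$ the genie $S_2^n$; in particular I would read the fourth inequality with conditioning $(X_1, S_2)$ in the second entropy term, the natural mirror of the third. Throughout, the final single-letterization is the standard device of introducing a time index uniform on $\{1,\dots,n\}$ so that the averaged per-letter entropies become the single-letter quantities in the statement.

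I expect the main obstacle to be the feedback bookkeeping in the negative entropy term, namely establishing $h(Y_1^n, S_1^n|W_1,W_2) = n(h(Z_1)+h(Z_2))$ together with the companion identity $h(S_1^n|W_2) = h(Y_2^n|W_2)$. Both hinge on the inductive reconstruction of the interfering input $X_2^n$ from the genie and the message $W_2$ via $Y_2[s] = S_1[s] + \sqrt{\rho_{22}}H_{22}X_2[s]$; without feedback these are immediate since each $X_i^n$ would be a function of $W_i$ alone, but with feedback the recursive dependence of every input on past channel outputs is precisely what makes the genie choice $S_i$ (rather than the raw interference) essential.
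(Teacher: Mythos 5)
Your proposal is correct and follows essentially the same route as the paper, which simply defers to the proof of Theorem 3 in the Suh--Tse feedback paper with SISO gains replaced by MIMO channel matrices: the genie $S_i^n$, the collapse $h(Y_1^n,S_1^n|W_1,W_2)=n(h(Z_1)+h(Z_2))$ via recursive reconstruction of both inputs, the bijection $h(S_1^n|W_2)=h(Y_2^n|W_2)$, and the telescoping for the sum-rate bound are exactly the steps of that argument. Your reading of the fourth inequality with conditioning $(X_1,S_2)$ is also the right one; the conditioning $(X_2,S_1)$ as printed in the lemma is evidently a typographical slip for the symmetric counterpart of the third bound.
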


\begin{proof}
The proof follows the same lines as the proof of Theorem 3 in \cite{Tse}, replacing SISO channel gains by MIMO channel gains and is thus omitted here.
\end{proof}

{ The rest of the section outer-bounds this region to get the outer bound in Theorem \ref{outer_capacity}. For this, we will introduce some useful Lemmas.}

The next result  outer-bounds the entropies and the conditional entropies of two random variables by their corresponding Gaussian random variables.

\begin{lemma}[\cite{Gaussian}]
Let $X$ and $Y$ be two random vectors, and let $X^G$ and $Y^G$ be Gaussian vectors with covariance matrices satisfying
\begin{eqnarray}
Cov\left[ \begin{array}{c}
X \\
Y \end{array}
\right]=Cov\left[ \begin{array}{c}
X^G \\
Y^G \end{array}
\right],
\end{eqnarray}
Then, we have
\begin{eqnarray}
h(Y)&\le& h(Y^G),\\
h\left(Y\mathrel{\left|\vphantom{Y X}\right.\kern-\nulldelimiterspace}X\right)&\le& h\left(Y^G\mathrel{\left|\vphantom{Y^G X^G}\right.\kern-\nulldelimiterspace}X^G\right).\label{gauss_cond}
\end{eqnarray}
\label{gauss_ob}
\end{lemma}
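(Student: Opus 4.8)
The plan is to obtain both inequalities from a single extremal principle: among all random vectors with prescribed second-order statistics, the (proper, circularly symmetric) Gaussian maximizes differential entropy. Since differential entropy and conditional differential entropy are invariant under deterministic translations, I would first reduce to the zero-mean case without loss of generality, and write $\Sigma_{YY},\Sigma_{XX},\Sigma_{YX},\Sigma_{XY}$ for the blocks of the common joint covariance.

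For the unconditional bound $h(Y)\le h(Y^G)$, I would invoke the non-negativity of relative entropy. Let $p$ be the density of $Y$ and let $\phi$ be the density of the zero-mean complex Gaussian $Y^G$ with $\mathrm{Cov}(Y^G)=\mathrm{Cov}(Y)=:\Sigma_{YY}$. Then
\[
0\le D(p\,\|\,\phi)=-h(Y)-\int p(y)\log\phi(y)\,dy .
\]
The decisive observation is that $-\log\phi(y)$ equals, up to an additive constant, the quadratic form $y^{\dagger}\Sigma_{YY}^{-1}y$, so $\int p\log\phi$ depends on $p$ only through its covariance $\Sigma_{YY}$. Consequently $\int p\log\phi=\int\phi\log\phi=-h(Y^G)$, and the displayed inequality rearranges to $h(Y)\le h(Y^G)$.

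For the conditional bound I would reduce it to the unconditional one through the linear minimum-mean-square-error residual. Define the innovation $\tilde Y:=Y-\Sigma_{YX}\Sigma_{XX}^{-1}X$ (using the pseudo-inverse if $\Sigma_{XX}$ is singular). Because $\tilde Y$ and $Y$ differ only by a deterministic function of $X$, translation invariance gives $h(Y\mid X)=h(\tilde Y\mid X)$, and conditioning reduces entropy, so $h(\tilde Y\mid X)\le h(\tilde Y)$. By construction $\tilde Y$ is uncorrelated with $X$ and has covariance equal to the Schur complement $\Sigma_{Y\mid X}:=\Sigma_{YY}-\Sigma_{YX}\Sigma_{XX}^{-1}\Sigma_{XY}$. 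Applying the unconditional bound to $\tilde Y$ yields $h(\tilde Y)\le h(G)$, where $G$ is Gaussian with covariance $\Sigma_{Y\mid X}$. Finally, since $(X^G,Y^G)$ is jointly Gaussian with the same joint covariance, the conditional law of $Y^G$ given $X^G$ is Gaussian with covariance exactly $\Sigma_{Y\mid X}$, so $h(Y^G\mid X^G)=h(G)$. Chaining these estimates delivers $h(Y\mid X)\le h(Y^G\mid X^G)$.

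The main obstacle is the conditional case: one must recognize that the correct comparison object is the Gaussian whose covariance is the Schur complement, and that this Schur complement simultaneously equals the residual covariance of the linear MMSE estimator and the (constant, value-independent) conditional covariance of the jointly Gaussian surrogate. The remaining work is technical — handling a possibly singular $\Sigma_{XX}$ via pseudo-inverses, and keeping the proper complex circularly symmetric normalization $h(G)=\log\det(\pi e\,\Sigma_{Y\mid X})$ consistent so that the cross-term $\int p\log\phi$ genuinely depends only on $\mathbb{E}[YY^{\dagger}]$.
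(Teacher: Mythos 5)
The paper does not prove this lemma at all: it is imported verbatim from the cited reference \cite{Gaussian}, so there is no in-paper argument to compare against. Your proof is correct and is essentially the standard one. The unconditional bound via $0\le D(p\,\|\,\phi)$ together with the observation that $\int p\log\phi = -\log\det(\pi\Sigma_{YY})-\mathrm{tr}\bigl(\Sigma_{YY}^{-1}\mathbb{E}[YY^{\dagger}]\bigr)$ depends on $p$ only through its covariance is exactly the maximum-entropy argument; and the reduction of the conditional bound through the innovation $\tilde Y=Y-\Sigma_{YX}\Sigma_{XX}^{-1}X$, using $h(Y\mid X)=h(\tilde Y\mid X)\le h(\tilde Y)$ and then matching $h(Y^G\mid X^G)$ to the Gaussian with the Schur-complement covariance $\Sigma_{YY}-\Sigma_{YX}\Sigma_{XX}^{-1}\Sigma_{XY}$, is the correct and decisive step. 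You also correctly flag the two technical points that need care in the complex setting used by the paper (circular symmetry so that the cross-term depends only on $\mathbb{E}[YY^{\dagger}]$, and a pseudo-inverse when $\Sigma_{XX}$ is singular). No gaps.
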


The next result gives the determinant of a block matrix, which will be used extensively in the sequel.

\begin{lemma}[\cite{det}]
For block matrix $M=\left[ \begin{array}{cc}
A & B \\
C & D \end{array}
\right]$ with matrices A, B, C, and D, we have:
\begin{eqnarray}
\det  M= \begin{cases}\det  A \det (D-CA^{-1}B), & \text{if A is invertible,}\label{eq1_block}\\
\det D \det (A-BD^{-1}C), & \text{if D is invertible.} \label{eq2_block}
\end{cases}
\end{eqnarray}\label{lem_block}
\end{lemma}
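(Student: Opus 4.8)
The plan is to prove both determinant identities by reducing the block matrix to block-triangular form via Schur-complement factorizations, and then to invoke the multiplicativity of the determinant together with the determinant formula for block-triangular matrices. Throughout I assume $A$ and $D$ are square, so that $\det A$, $\det D$, and the stated inverses make sense.

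First I would treat the case where $A$ is invertible. I would verify the factorization
\begin{equation}
\left[ \begin{array}{cc} A & B \\ C & D \end{array} \right]
=
\left[ \begin{array}{cc} I & 0 \\ CA^{-1} & I \end{array} \right]
\left[ \begin{array}{cc} A & B \\ 0 & D-CA^{-1}B \end{array} \right],
\end{equation}
which holds by direct block multiplication (the only entry requiring cancellation, the bottom-right block, yields $CA^{-1}B+(D-CA^{-1}B)=D$). Since the determinant is multiplicative, $\det M$ equals the product of the determinants of the two factors. Next I would evaluate each factor: the first is block lower-triangular with identity diagonal blocks, hence its determinant is $1$; the second is block upper-triangular, hence its determinant is $\det A \,\det(D-CA^{-1}B)$. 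This gives $\det M = \det A \,\det(D-CA^{-1}B)$, the first case. For the second case, where $D$ is invertible, I would use the mirror-image factorization
\begin{equation}
\left[ \begin{array}{cc} A & B \\ C & D \end{array} \right]
=
\left[ \begin{array}{cc} A-BD^{-1}C & B \\ 0 & D \end{array} \right]
\left[ \begin{array}{cc} I & 0 \\ D^{-1}C & I \end{array} \right],
\end{equation}
and conclude identically that $\det M = \det D \,\det(A-BD^{-1}C)$.

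The only nonroutine ingredient is the determinant formula for block-triangular matrices, namely $\det\left[ \begin{array}{cc} P & Q \\ 0 & R \end{array} \right]=\det P \,\det R$ for square diagonal blocks $P$ and $R$. I expect this to be the main (though still standard) obstacle; I would establish it either through the Leibniz permutation expansion, noting that only permutations respecting the block partition contribute a nonzero term, or by performing ordinary Gaussian elimination within the blocks and observing that the strictly lower-left zero block never interacts with the pivots of $P$. With this fact in hand, both identities follow immediately. Since the statement is a classical linear-algebra result (the Schur-complement determinant identity), it may alternatively be quoted directly from \cite{det} as the paper does.
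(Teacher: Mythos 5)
Your proposal is correct. Note that the paper does not actually prove this lemma; it is quoted verbatim from the cited reference \cite{det} as a known linear-algebra fact, so there is no in-paper argument to compare against. Your Schur-complement factorizations are the standard way to establish it: both block products check out by direct multiplication (the bottom-right block in the first case collapses to $CA^{-1}B+(D-CA^{-1}B)=D$, and symmetrically in the second), the unit block-triangular factors have determinant $1$, and the block-triangular determinant formula $\det\left[\begin{array}{cc} P & Q \\ 0 & R\end{array}\right]=\det P\,\det R$ — which you correctly flag as the only ingredient needing justification and for which the Leibniz-expansion argument suffices — finishes the proof. The implicit hypotheses you state (that $A$ and $D$ are square) are indeed needed and are satisfied everywhere the paper invokes the lemma.
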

Now, we introduce a lemma that is a key  result which will be used to upper-bound a conditional entropy term in this section and also to show an upper bound in Appendix \ref{Appendix4}.

\begin{lemma} \label{L}
Let $L(K,S)$ be defined as
\begin{eqnarray}
L\left(K, S\right)
\triangleq K-KS{(I_{N_2}+S^{\dagger }KS)}^{-1}S^{\dagger }K, \label{eqdefnl}
\end{eqnarray}
for some $M_1\times M_1$ p.s.d. Hermitian matrix $K$ and some $M_1\times N_2$ matrix $S$.
Then if $0 \preceq K_1\preceq K_2$ for some Hermitian matrices $K_1$ and $K_2$, we have
\begin{eqnarray}\label{tpl}
L\left(K_1, S\right)\preceq L\left(K_2, S\right)\label{ll}.
\end{eqnarray}
\end{lemma}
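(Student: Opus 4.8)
The plan is to recognize $L(K,S)$ as (the inverse of) a simple expression that is monotone in $K$, and then reduce the claim to the operator-antitonicity of matrix inversion. The key observation is that $L(K,S)$ is the Schur complement of the $(2,2)$ block in
\[
\begin{pmatrix} K & KS \\ S^{\dagger}K & I_{N_2}+S^{\dagger}KS \end{pmatrix},
\]
whose determinant is handled by Lemma \ref{lem_block}, and that this Schur complement can be rewritten via the Woodbury matrix-inversion identity. Note first that $I_{N_2}+S^{\dagger}KS$ is always invertible, being the identity plus a p.s.d.\ matrix, so $L(K,S)$ is always well defined.

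First I would treat the case where $K$ is strictly positive definite, hence invertible. Applying the Woodbury identity with $A=K^{-1}$, $U=S$, $C=I_{N_2}$, and $V=S^{\dagger}$ gives the reformulation
\[
(K^{-1}+SS^{\dagger})^{-1}=K-KS(I_{N_2}+S^{\dagger}KS)^{-1}S^{\dagger}K=L(K,S).
\]
This converts the non-obvious map $K\mapsto L(K,S)$ into $K\mapsto(K^{-1}+SS^{\dagger})^{-1}$, whose monotonicity is transparent.

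Second, I would invoke the standard fact that inversion is operator-antitone on the positive-definite cone: if $0\prec A\preceq B$ then $B^{-1}\preceq A^{-1}$. Assuming for the moment $0\prec K_1\preceq K_2$, antitonicity gives $K_2^{-1}\preceq K_1^{-1}$; adding $SS^{\dagger}$ preserves the ordering, so $K_2^{-1}+SS^{\dagger}\preceq K_1^{-1}+SS^{\dagger}$ (both positive definite); and a second application of antitonicity yields
\[
L(K_1,S)=(K_1^{-1}+SS^{\dagger})^{-1}\preceq(K_2^{-1}+SS^{\dagger})^{-1}=L(K_2,S),
\]
which is exactly \eqref{ll}.

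Finally I would remove the invertibility assumption by perturbation. For $\epsilon>0$ the matrices $K_i+\epsilon I_{M_1}$ are positive definite and still satisfy $K_1+\epsilon I_{M_1}\preceq K_2+\epsilon I_{M_1}$, so the previous step applies and gives $L(K_1+\epsilon I_{M_1},S)\preceq L(K_2+\epsilon I_{M_1},S)$. Since $I_{N_2}+S^{\dagger}KS$ is uniformly invertible, $L(\cdot,S)$ is continuous in its first argument, and letting $\epsilon\to 0$ preserves the p.s.d.\ inequality, establishing \eqref{ll} for all p.s.d.\ $K_1\preceq K_2$. The only genuine subtlety is this singular case, which is dispatched cleanly by continuity; the conceptual heart of the argument is the Woodbury reformulation $L(K,S)=(K^{-1}+SS^{\dagger})^{-1}$, which reduces the desired monotonicity to two applications of antitone inversion.
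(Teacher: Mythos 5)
Your proof is correct and follows essentially the same route as the paper's: the Woodbury reformulation $L(K+\epsilon I_{M_1},S)=((K+\epsilon I_{M_1})^{-1}+SS^{\dagger})^{-1}$, antitonicity of inversion to order the perturbed quantities, and a limit $\epsilon\to 0$ to reach the singular case. The only difference is cosmetic: where you close the limit by noting that $I_{N_2}+S^{\dagger}KS$ is always invertible so $L(\cdot,S)$ is continuous and the p.s.d.\ cone is closed, the paper reaches the same conclusion more laboriously by bounding the derivative of $F(\epsilon)$ at $\epsilon=0$ and invoking the Wielandt--Hoffman theorem; your version is cleaner but not a different argument.
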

\begin{proof}
We note that since $K$ is p.s.d., $K+\epsilon I_{M_1}$ is invertible for all $\epsilon>0$. Given $0\preceq K_1\preceq K_2$, let $F(\epsilon)\triangleq L(K_2+\epsilon I_{M_1},S) - L(K_1+\epsilon I_{M_1},S)$. We need to show that $F(0)\succeq 0$.

We first show that $F(\epsilon)\succeq 0$ for all $\epsilon>0$. From Woodbury matrix identity (Appendix C.4.3 of \cite{Boyd}), we have that if $A$ is invertible, $(A+BD)^{-1}= A^{-1}-A^{-1}B(I+DA^{-1}B)^{-1}DA^{-1}$. Thus,  we have $L(K+\epsilon I_{M_1},S)  = ((K+\epsilon I_{M_1})^{-1}+SS^\dagger)^{-1}$ by substituting $A$ as $(K+\epsilon I_{M_1})^{-1}$, $B$ as $S$ and $D$ as $S^\dagger$ in the above identity.

Thus, $F(\epsilon) = ((K_2+\epsilon I_{M_1})^{-1}+SS^\dagger)^{-1} - ((K_1+\epsilon I_{M_1})^{-1}+SS^\dagger)^{-1}$. Since $K_1$ and $K_2$ are Hermitian p.s.d. matrices with $K_1 \preceq K_2$, it easily follows that $F(\epsilon)\succeq 0$.

Having shown that $F(\epsilon)\succeq 0$ for all $\epsilon>0$, we will now prove the continuity of $F(\epsilon)$ at $\epsilon =0$. For this, we take the partial derivative of  $F(\epsilon)$ at $\epsilon = 0$ and show that it is not unbounded thus proving that $F(\epsilon)$ is continuous at $\epsilon = 0$. Thus, we have
\begin{eqnarray}
\frac{d F(\epsilon)}{d \epsilon}&=& \frac{d}{d \epsilon} L(K_2+\epsilon I_{M_1},S) - L(K_1+\epsilon I_{M_1},S)\nonumber\\
&=& \frac{d}{d \epsilon} L(K_2+\epsilon I_{M_1},S) - \frac{d}{d \epsilon}L(K_1+\epsilon I_{M_1},S).
\end{eqnarray}
Thus, it is enough to show that $\lim_{\epsilon \to 0} \frac{d}{d \epsilon}L(K_i+\epsilon I_{M_1},S)$ is bounded. We have
\begin{eqnarray}
\lim_{\epsilon \to 0} \frac{d}{d \epsilon}L(K_i+\epsilon I_{M_1},S)&=& \lim_{\epsilon \to 0} \frac{d}{d \epsilon}( K_i+\epsilon I_{M_1}-(K_i+\epsilon I_{M_1})S{(I_{N_2}+S^{\dagger }(K_i+\epsilon I_{M_1})S)}^{-1}S^{\dagger }(K_i+\epsilon I_{M_1}))\nonumber\\
&=& I_{M_1} - \lim_{\epsilon \to 0} \frac{d}{d \epsilon}((K_i+\epsilon I_{M_1})S{(I_{N_2}+S^{\dagger }(K_i+\epsilon I_{M_1})S)}^{-1}S^{\dagger }(K_i+\epsilon I_{M_1}))\nonumber\\
&=& I_{M_1} - S{(I_{N_2}+S^{\dagger }K_i S)}^{-1}S^{\dagger }K_i- K_i S{(I_{N_2}+S^{\dagger }K_i S)}^{-1}S^{\dagger }\nonumber\\&&+ K_i S{(I_{N_2}+S^{\dagger }K_i S)}^{-1} S^\dagger S {(I_{N_2}+S^{\dagger }K_i S)}^{-1}S^{\dagger }K_i,
\end{eqnarray}
which is bounded. Hence, $F(\epsilon)$ is continuous at $\epsilon = 0$. Further, since $K_1$ and $K_2$ are Hermitian, we see that $F(\epsilon)$ is Hermitian and thus normal. From the Wielandt-Hoffman theorem \cite{K}, we note that the $\mathbb{L}_2$ norm of the difference in eigen-values (ordered in a particular way) of two normal matrices is bounded by the Frobenium norm of the difference of the two matrices. This shows that since $F(\epsilon)\succeq 0$ and $F(\epsilon)-F(0) \to 0$ as $\epsilon \to 0$, we have that the eigen-values of $F(\epsilon)$ approach the eigen-values of $F(0)$ as $\epsilon \to 0$. Therefore, all the eigen-values of $F(0)$ are non-negative which proves that $F(0)$ is positive semi-definite thus proving the result.
\end{proof}

The next three Lemmas outer-bounds entropy and conditional entropies of some random variables.

\begin{lemma}
The entropy of the received signal at the $i^{\text{th}}$ receiver, $h(Y_i)$, is outer-bounded as follows
\begin{eqnarray}
h(Y_i)&\le& {\log  {\det  \left(I_{N_i}+{\rho }_{ii}H_{ii}H^{\dagger }_{ii}+{\rho }_{ji}H_{ji}H^{\dagger }_{ji}+\sqrt{{\rho }_{ii}{\rho }_{ji}}H_{ii}Q_{ij}H^{\dagger }_{ji}+\sqrt{{\rho }_{ii}{\rho }_{ji}}H_{ji}Q_{ij}^{\dagger }H^{\dagger }_{ii}\right)\ }\ }\nonumber\\
&&+{N_i}\log \left( \pi e\right),
\end{eqnarray}
for $i,j\in \{1,2\}$, $i\ne j$. \label{lemma_hyi}
\end{lemma}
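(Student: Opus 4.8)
The plan is to combine the Gaussian maximum-entropy bound of Lemma \ref{gauss_ob} with a direct computation of the covariance of $Y_i$ and the transmit-power constraints $0 \preceq Q_{ii} \preceq I_{M_i}$ established in Section II. First I would apply Lemma \ref{gauss_ob}: letting $Y_i^G$ be a circularly symmetric complex Gaussian vector with the same covariance as $Y_i$, we get $h(Y_i) \le h(Y_i^G) = N_i\log(\pi e) + \log\det(\mathrm{Cov}(Y_i))$, the last equality being the standard differential entropy of an $N_i$-dimensional complex Gaussian. This reduces the problem to bounding $\log\det(\mathrm{Cov}(Y_i))$.

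Next I would compute $\mathrm{Cov}(Y_i)$ directly from $Y_i = \sqrt{\rho_{ii}}H_{ii}X_i + \sqrt{\rho_{ji}}H_{ji}X_j + Z_i$. Using that $Z_i \sim \mathsf{CN}(0, I_{N_i})$ is independent of $X_i$ and $X_j$, together with $Q_{ii} = \mathbb{E}[X_iX_i^\dagger]$, $Q_{jj} = \mathbb{E}[X_jX_j^\dagger]$, $Q_{ij} = \mathbb{E}[X_iX_j^\dagger]$, and $Q_{ji} = Q_{ij}^\dagger$, this gives
\[
\mathrm{Cov}(Y_i) = I_{N_i} + \rho_{ii}H_{ii}Q_{ii}H_{ii}^\dagger + \rho_{ji}H_{ji}Q_{jj}H_{ji}^\dagger + \sqrt{\rho_{ii}\rho_{ji}}H_{ii}Q_{ij}H_{ji}^\dagger + \sqrt{\rho_{ii}\rho_{ji}}H_{ji}Q_{ij}^\dagger H_{ii}^\dagger.
\]

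The key step is then to pass from this exact covariance to the matrix $\Sigma$ appearing inside the determinant in the statement. Since $Q_{ii} \preceq I_{M_i}$ and $Q_{jj} \preceq I_{M_j}$, congruence monotonicity ($A \preceq B \Rightarrow HAH^\dagger \preceq HBH^\dagger$) yields $\rho_{ii}H_{ii}Q_{ii}H_{ii}^\dagger \preceq \rho_{ii}H_{ii}H_{ii}^\dagger$ and $\rho_{ji}H_{ji}Q_{jj}H_{ji}^\dagger \preceq \rho_{ji}H_{ji}H_{ji}^\dagger$, while the two cross-terms are left untouched. Hence $\mathrm{Cov}(Y_i) \preceq \Sigma$, and because both matrices are p.s.d. and $\log\det$ is monotone in the Loewner order, $\log\det(\mathrm{Cov}(Y_i)) \le \log\det\Sigma$. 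Combining this with the first step produces exactly the claimed bound.

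There is no deep obstacle here; the only point requiring care is that replacing $Q_{ii}$ and $Q_{jj}$ by the identity must raise the covariance in the Loewner order \emph{without} perturbing the already-fixed cross-covariance term. This is transparent once one writes the difference as $\Sigma - \mathrm{Cov}(Y_i) = \rho_{ii}H_{ii}(I_{M_i}-Q_{ii})H_{ii}^\dagger + \rho_{ji}H_{ji}(I_{M_j}-Q_{jj})H_{ji}^\dagger$, which is manifestly p.s.d. since each factor $I-Q_{ii}$, $I-Q_{jj}$ is p.s.d. by the power constraint.
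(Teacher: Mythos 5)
Your proposal is correct and follows essentially the same route as the paper's proof: apply Lemma \ref{gauss_ob} to replace $Y_i$ by its Gaussian counterpart, write out $\mathrm{Cov}(Y_i)$ in terms of $Q_{ii}$, $Q_{jj}$, $Q_{ij}$, and then use $Q_{ii}\preceq I_{M_i}$, $Q_{jj}\preceq I_{M_j}$ together with the monotonicity of $\log\det$ on the positive-definite cone. Your explicit display of the difference $\Sigma-\mathrm{Cov}(Y_i)$ as a sum of two manifestly p.s.d.\ terms is a slightly more careful justification of step $(b)$ in the paper, but the argument is the same.
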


\begin{proof}
\begin{eqnarray}
h(Y_i) &\stackrel{(a)}{\le}& h(Y^G_i)\nonumber\\
&=&\log  \det  \pi e\left(I_{N_i}+{\rho }_{ii}H_{ii}Q_{ii}H^{\dagger }_{ii}+{\rho }_{ji}H_{ji}Q_{jj}H^{\dagger }_{ji}+\sqrt{{\rho }_{ii}{\rho }_{ji}}H_{ii}Q_{ij}H^{\dagger }_{ji}\right.\nonumber\\
 & & \left. +  \sqrt{{\rho }_{ii}{\rho }_{ji}}H_{ji}Q_{ij}^{\dagger }H^{\dagger }_{ii}\right)\nonumber\\
&\stackrel{(b)}{\le}& {\log  {\det  \left(I_{N_i}+{\rho }_{ii}H_{ii}H^{\dagger }_{ii}+{\rho }_{ji}H_{ji}H^{\dagger }_{ji}+\sqrt{{\rho }_{ii}{\rho }_{ji}}H_{ii}Q_{ij}H^{\dagger }_{ji}+\sqrt{{\rho }_{ii}{\rho }_{ji}}H_{ji}Q_{ij}^{\dagger }H^{\dagger }_{ii}\right)\ }\ }\nonumber\\
&&+{N_i}\log \left( \pi e\right),
\end{eqnarray}
where (a) follows from Lemma \ref{gauss_ob}, and (b) follows from the fact that $\log \det (.)$ is a monotonically increasing function on the cone of positive definite matrices and we have $Q_{ii}\preceq I_{M_i}$ for $i\in \left\{1,2\right\}$.

Taking $\pi e$ out of the above determinant in the last part, gives the result as in the statement of the Lemma.
\end{proof}

\begin{lemma}\label{yixi}
The conditional entropy of the received signal at the $i^{\text{th}}$ receiver given the transmitted signal from the $i^{\text{th}}$ transmitter, $h(Y_i|X_i)$ is outer-bounded as follows
\begin{equation}
h\left(Y_i|X_i\right)\le {\log  {\det  \left(I_{N_i}+{\rho }_{ji}H_{ji}H^{\dagger }_{ji}-{\rho }_{ji}H_{ji}Q_{ij}^{\dagger }Q_{ij}H^{\dagger }_{ji}\right)}}+{N_i}\log \left( \pi e\right),
\end{equation}
where $Q_{ij}$ is the cross-covariance between $X_i$ and $X_j$ and $Q_{ii}$ is the covariance matrix for $X_i$.
\end{lemma}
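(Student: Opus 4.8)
The plan is to strip off the deterministic part of the received signal and then invoke Gaussian extremality. Fix $i$ and let $j\ne i$. Conditioned on $X_i$, the direct-link contribution $\sqrt{{\rho }_{ii}}H_{ii}X_i$ is a known constant, so that $h(Y_i|X_i)=h\big(\sqrt{{\rho }_{ji}}H_{ji}X_j+Z_i\,\big|\,X_i\big)$. I would then apply the conditional Gaussian bound \eqref{gauss_cond} of Lemma \ref{gauss_ob}: replacing the jointly distributed pair by Gaussian vectors with the same second-order statistics can only increase this conditional entropy, and for jointly Gaussian vectors the conditional entropy equals $\log\det(\pi e\,\Sigma)$, where $\Sigma$ is the corresponding conditional covariance.

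Since $Z_i$ is independent of $(X_i,X_j)$ with covariance $I_{N_i}$, a short Schur-complement computation identifies this conditional covariance as $\Sigma=I_{N_i}+{\rho }_{ji}H_{ji}\Sigma_{j|i}H^{\dagger }_{ji}$, where $\Sigma_{j|i}=Q_{jj}-Q^{\dagger }_{ij}Q^{-1}_{ii}Q_{ij}$ is the conditional covariance of $X_j$ given $X_i$. The crux is then the matrix inequality
\[
\Sigma_{j|i}\preceq I_{M_j}-Q^{\dagger }_{ij}Q_{ij}.
\]
This I would obtain from two facts recorded in the preliminaries: the power constraint gives $Q_{jj}\preceq I$, while $Q_{ii}\preceq I$ yields $Q^{-1}_{ii}\succeq I$ and hence $Q^{\dagger }_{ij}Q^{-1}_{ii}Q_{ij}\succeq Q^{\dagger }_{ij}Q_{ij}$; adding the two estimates gives the displayed ordering.

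Because $\log\det(\cdot)$ is monotone on the cone of positive-definite matrices (as already used in Lemma \ref{lemma_hyi}) and the map $A\mapsto H_{ji}AH^{\dagger }_{ji}$ preserves $\preceq$, I would conclude
\[
h(Y_i|X_i)\le\log\det\!\Big(\pi e\big(I_{N_i}+{\rho }_{ji}H_{ji}(I_{M_j}-Q^{\dagger }_{ij}Q_{ij})H^{\dagger }_{ji}\big)\Big),
\]
and pulling $\pi e$ out of the $N_i\times N_i$ determinant produces exactly the stated bound together with the additive $N_i\log(\pi e)$ term.

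The step I expect to be delicate is the possible singularity of $Q_{ii}$, which a priori makes both the Schur-complement formula for $\Sigma_{j|i}$ and the estimate $Q^{-1}_{ii}\succeq I$ ill-defined. I would resolve this by the same regularization-and-continuity device as in Lemma \ref{L}: replace $Q_{ii}$ by $Q_{ii}+\epsilon I_{M_i}$, which is invertible and satisfies $(Q_{ii}+\epsilon I_{M_i})^{-1}\succeq (1+\epsilon)^{-1}I$, establish the inequality for every $\epsilon>0$, and let $\epsilon\to 0$. Validity of the joint input covariance forces $\mathrm{range}(Q_{ij})\subseteq\mathrm{range}(Q_{ii})$, so the regularized terms converge to the pseudoinverse expression and the bound passes to the limit.
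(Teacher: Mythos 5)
Your proposal is correct and follows essentially the same route as the paper's proof: Gaussian maximization of the conditional entropy via Lemma \ref{gauss_ob}, reduction to the Schur complement $Q_{jj}-Q_{ij}^{\dagger}Q_{ii}^{-1}Q_{ij}$, and the bound $\Sigma_{j|i}\preceq I-Q_{ij}^{\dagger}Q_{ij}$ obtained from $Q_{jj}\preceq I$ together with $Q_{ii}^{-1}\succeq I$. The only differences are cosmetic---you strip off the deterministic direct-link term before Gaussianizing, whereas the paper carries the full joint covariance $K_{i1}$ of $(X_i,Y_i)$ through the block-determinant formula of Lemma \ref{lem_block}---and your $\epsilon$-regularization of a possibly singular $Q_{ii}$ is in fact more careful than the paper's appeal to invertibility with probability one.
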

\begin{proof}
Let
\begin{eqnarray}\label{K}
&&K_{i1}\triangleq \mathbb{E}\left[ \begin{array}{cc}
X_iX^{\dagger }_i & X_iY^{\dagger }_i \\
Y_iX^{\dagger }_i & Y_iY^{\dagger }_i \end{array}
\right]=\\
&&\left[ \begin{array}{cc}
Q_{ii} & {\sqrt{{\rho }_{ii}}Q}_{ii}H^{\dagger }_{ii}+\sqrt{{\rho }_{ji}}Q_{ij}H^{\dagger }_{ji}\nonumber\\
\sqrt{{\rho }_{ii}}H_{ii}Q_{ii}+\sqrt{{\rho }_{ji}}H_{ji}Q_{ij}^{\dagger } & \mathbb{E}[Y_iY^{\dagger }_i] \end{array}
\right].
\end{eqnarray}

where
\begin{eqnarray}
\mathbb{E}[Y_iY^{\dagger }_i]=
I_{N_i}+{\rho }_{ii}H_{ii}Q_{ii}H^{\dagger }_{ii}+{\rho }_{ji}H_{ji}Q_{jj}H^{\dagger }_{ji}+\sqrt{{\rho }_{ii}{\rho }_{ji}}H_{ii}Q_{ij}H^{\dagger }_{ji}+\sqrt{{\rho }_{ii}{\rho }_{ji}}H_{ji}Q_{ij}^{\dagger }H^{\dagger }_{ii},
\end{eqnarray}

and
\begin{eqnarray}
K_{i2}\triangleq \mathbb{E}[X_iX^{\dagger }_i]=Q_{ii}.
\end{eqnarray}
According to Lemma \ref{gauss_ob}, we get
\begin{eqnarray}
h(Y_i|X_i)&\le& h(Y^G_i|X^G_i)\nonumber\\
&=& h(X^G_i,Y^G_i)-h(X^G_i)\nonumber\\
&=& \log \det \pi e(K_{i1})-\log \det \pi e(K_{i2})\nonumber\\
&=& \log \det (K_{i1})-\log \det (K_{i2})+\log \det \pi e(I_{N_i}). \label{eq_sumyixi}
\end{eqnarray}

Due to the reason that $Q$'s elements are chosen from a continuous space, it is invertible with probability of one. In addition, according to Corollary $7.7.4(a)$ of \cite{Q}, if we have $Q_{ii}\preceq I_{M_i}$, $Q^{-1}_{ii}\succeq I_{M_i}$.
Using Lemma \ref{lem_block} with $M=K_{i1}$ and $A=K_{i2}$, we get
\begin{eqnarray}
\log \det K_{i1} &=& \log \left({\det \mathbb{E}(X_iX^{\dagger }_i) \det (\mathbb{E}(Y_iY^{\dagger }_i)-\mathbb{E}(Y_iX^{\dagger }_i)(\mathbb{E}(X_iX^{\dagger }_i))^{-1}\mathbb{E}(X_iY^{\dagger }_i))}\right)\nonumber\\
&=& \log {\det \left(\mathbb{E}(X_iX^{\dagger }_i)\right)+ \log \det \left(\mathbb{E}(Y_iY^{\dagger }_i)-\mathbb{E}(Y_iX^{\dagger }_i)(\mathbb{E}(X_iX^{\dagger }_i))^{-1}\mathbb{E}(X_iY^{\dagger }_i)\right)}\nonumber\\
&\stackrel{(a)}{=}& \log \det \left(Q_{ii}\right)+\log  \det \left(I_{N_i}+{\rho }_{ji}H_{ji}Q_{jj}H^{\dagger }_{ji}-{\rho }_{ji}H_{ji}Q_{ij}^{\dagger }Q^{-1}_{ii}Q_{ij}H^{\dagger }_{ji}\right)\nonumber\\
&\stackrel{(b)}{\le}& \log \det \left(Q_{ii}\right) + \log \det \left(I_{N_i}+{\rho }_{ji}H_{ji}H^{\dagger }_{ji}-{\rho }_{ji}H_{ji}Q_{ij}^{\dagger }Q_{ij}H^{\dagger }_{ji}\right),\label{eq_k1}
\end{eqnarray}
where $(a)$ is obtained by using (\ref{K}) and some simplifications, and $(b)$ follows from the fact that log det (.) is a monotonically increasing function on the cone of positive definite matrices and we have $Q_{ii}\preceq I_{M_i}$ and $Q^{-1}_{ii}\succeq I_{M_i}$ according to Corollary 7.7.4(a) of \cite{Q} for $i,j\in \left\{1,2\right\},\ i\ne j$.

%{\bf Note:} If $B\preceq I$, $ABA^{\dagger}\preceq AA^{\dagger}$.

Substituting \eqref{eq_k1} in \eqref{eq_sumyixi} gives the result as in the statement of the lemma.
\end{proof}

\begin{lemma}
The conditional entropy of the received signal at the $i^{\text{th}}$ receiver given $X_j$ and $S_i$, $h(Y_i|X_j,S_i)$ is outer-bounded as follows
\begin{eqnarray}
h\left(Y_i\mathrel{\left|\vphantom{Y_i X_j,S_i}\right.\kern-\nulldelimiterspace}X_j,S_i\right)&\le&
 \log  \det  \Bigg(I_{N_i}+{\rho }_{ii}H_{ii}H^{\dagger }_{ii}-\left[ \begin{array}{cc}
\sqrt{{\rho }_{ii}{\rho }_{ij}}H_{ii}H^{\dagger }_{ij} & \sqrt{{\rho }_{ii}}H_{ii}Q_{ij} \end{array}
\right]\nonumber\\
&&{\left[ \begin{array}{cc}
I_{N_j}+{\rho }_{ij}H_{ij}H^{\dagger }_{ij} & \sqrt{{\rho }_{ij}}H_{ij}Q_{ij} \\
\sqrt{{\rho }_{ij}}Q_{ij}^{\dagger }H^{\dagger }_{ij} & I_{M_j} \end{array}
\right]}^{-1}\left[ \begin{array}{c}
\sqrt{{\rho }_{ii}{\rho }_{ij}}H_{ij}H^{\dagger }_{ii} \\
\sqrt{{\rho }_{ii}}Q_{ij}^{\dagger }H^{\dagger }_{ii} \end{array}
\right]\Bigg)\nonumber\\
&&+{N_i}\log \left( \pi e\right).
\end{eqnarray}\label{yixisi}
\end{lemma}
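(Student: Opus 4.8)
The plan is to bound the conditional entropy by that of its Gaussian surrogate, reduce it to a single conditional covariance, and then invoke Lemma \ref{L} at the crucial monotonicity step. First I would apply Lemma \ref{gauss_ob} with observation $Y_i$ and conditioning pair $(X_j,S_i)$, giving $h(Y_i\mid X_j,S_i)\le h(Y_i^G\mid X_j^G,S_i^G)$, where the Gaussian surrogates share the true joint covariance. Since subtracting a deterministic function of the conditioning variables leaves a conditional entropy unchanged, I would replace $Y_i$ by $\tilde Y_i\triangleq Y_i-\sqrt{\rho_{ji}}H_{ji}X_j=\sqrt{\rho_{ii}}H_{ii}X_i+Z_i$; this is exactly what produces the clean outer block $I_{N_i}+\rho_{ii}H_{ii}H_{ii}^\dagger$ (rather than one carrying an $X_j$ contribution) in the stated bound. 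For jointly Gaussian vectors the conditional entropy equals $\log\det(\pi e\,\Sigma)+N_i\log(\pi e)$ up to the normalization already isolated in the earlier lemmas, with $\Sigma=\mathrm{Cov}(\tilde Y_i\mid S_i,X_j)$, so everything reduces to controlling $\Sigma$.

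Next I would compute $\Sigma$ by the chain rule for Gaussian conditioning: condition first on $X_j$, which leaves $X_i$ with conditional covariance $K\triangleq Q_{ii}-Q_{ij}Q_{jj}^{-1}Q_{ji}$, and then on $S_i$. Writing $S\triangleq\sqrt{\rho_{ij}}H_{ij}^\dagger$, the two relevant signals are $\tilde Y_i=\sqrt{\rho_{ii}}H_{ii}X_i+Z_i$ and $S_i=S^\dagger X_i+Z_j$, so the Schur complement gives $\Sigma=I_{N_i}+\rho_{ii}H_{ii}\big(K-KS(I_{N_j}+S^\dagger KS)^{-1}S^\dagger K\big)H_{ii}^\dagger=I_{N_i}+\rho_{ii}H_{ii}\,L(K,S)\,H_{ii}^\dagger$, with $L$ exactly the map defined in Lemma \ref{L}.

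The key step, and the main obstacle, is to replace $Q_{ii}$ and $Q_{jj}$ by the identity while retaining the cross-covariance $Q_{ij}$. Because $K$ sits inside the outer factor, the border factor, and the inverse simultaneously, naive monotonicity of $\log\det$ does not apply; this is precisely why Lemma \ref{L} is needed. I would first establish $0\preceq K\preceq\tilde K\triangleq I_{M_i}-Q_{ij}Q_{ij}^\dagger$: after rearrangement, $K\preceq\tilde K$ is equivalent to $Q_{ii}-I_{M_i}\preceq Q_{ij}(Q_{jj}^{-1}-I_{M_j})Q_{ji}$, whose left side is $\preceq0$ by $Q_{ii}\preceq I_{M_i}$ and whose right side is $\succeq0$ by $Q_{jj}^{-1}\succeq I_{M_j}$ (Corollary $7.7.4(a)$ of \cite{Q}), while $\tilde K\succeq0$ follows from $Q_{ij}Q_{ij}^\dagger\preceq I$. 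Lemma \ref{L} then yields $L(K,S)\preceq L(\tilde K,S)$, hence $\Sigma\preceq I_{N_i}+\rho_{ii}H_{ii}L(\tilde K,S)H_{ii}^\dagger$, and monotonicity of $\log\det$ on the positive semidefinite cone upper-bounds $h(Y_i\mid X_j,S_i)$ by $\log\det\big(I_{N_i}+\rho_{ii}H_{ii}L(\tilde K,S)H_{ii}^\dagger\big)+N_i\log(\pi e)$.

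Finally, I would reverse the chain computation: viewing $I_{N_i}+\rho_{ii}H_{ii}L(\tilde K,S)H_{ii}^\dagger$ as the conditional covariance of $\tilde Y_i$ given $(S_i,X_j)$ in the modified model with $Q_{ii}=I_{M_i}$ and $Q_{jj}=I_{M_j}$ (where the conditional covariance of $X_i$ given $X_j$ is indeed $I_{M_i}-Q_{ij}Q_{ji}=\tilde K$), and expanding via the block-determinant identity of Lemma \ref{lem_block}, rewrites the bound as the $2\times2$ Schur-complement expression in the statement, with middle matrix having diagonal blocks $I_{N_j}+\rho_{ij}H_{ij}H_{ij}^\dagger$ and $I_{M_j}$ and off-diagonal blocks $\sqrt{\rho_{ij}}H_{ij}Q_{ij}$, $\sqrt{\rho_{ij}}Q_{ij}^\dagger H_{ij}^\dagger$, and border vector $\big[\sqrt{\rho_{ii}\rho_{ij}}H_{ii}H_{ij}^\dagger\ \ \sqrt{\rho_{ii}}H_{ii}Q_{ij}\big]$. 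This last re-expression is a routine algebraic identity; the substantive content of the proof is the monotonicity argument carried through Lemma \ref{L}.
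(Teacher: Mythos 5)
Your proof is correct and rests on the same three pillars as the paper's argument --- the Gaussian maximum-entropy bound of Lemma \ref{gauss_ob}, a Schur-complement computation via Lemma \ref{lem_block}, and the monotonicity of $L(\cdot,S)$ from Lemma \ref{L} --- but the route through the middle is organized differently, and arguably more cleanly. The paper works with the joint covariances $K_{i3}=\mathrm{Cov}(Y_i',S_i,X_j)$ and $K_{i4}=\mathrm{Cov}(S_i,X_j)$, expresses $\log\det K_{i3}-\log\det K_{i4}$ as a Schur complement, and then eliminates $Q_{jj}$ and $Q_{ii}$ in two separate stages: first a rather terse determinant-monotonicity argument to set $Q_{jj}=I_{M_j}$ inside the block matrix, and only then the identification of the remaining Schur complement with $H_{ii}L(Q_{ii}-Q_{ij}Q_{ij}^{\dagger},S)H_{ii}^{\dagger}$ followed by Lemma \ref{L} applied to $K_1=Q_{ii}-Q_{ij}Q_{ij}^{\dagger}\preceq I_{M_i}-Q_{ij}Q_{ij}^{\dagger}=K_2$. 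You instead compute the exact conditional covariance by two-stage Gaussian conditioning, obtaining $I_{N_i}+\rho_{ii}H_{ii}L(K,S)H_{ii}^{\dagger}$ with $K=Q_{ii}-Q_{ij}Q_{jj}^{-1}Q_{ji}$, and a single application of Lemma \ref{L} with $K\preceq I_{M_i}-Q_{ij}Q_{ij}^{\dagger}$ absorbs both replacements at once; your verification of that ordering, splitting $\tilde K-K$ into $(I_{M_i}-Q_{ii})+Q_{ij}(Q_{jj}^{-1}-I_{M_j})Q_{ji}$ with each summand p.s.d., is sound, and your $K$ is manifestly p.s.d.\ as a conditional covariance (the paper never explicitly checks $Q_{ii}-Q_{ij}Q_{ij}^{\dagger}\succeq 0$, which Lemma \ref{L} formally requires). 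What your version buys is that $Q_{jj}$ enters only through an unambiguous p.s.d.\ ordering rather than through the paper's in-place substitution inside the inverted block matrix. Two points you should make explicit for completeness: the invertibility of $Q_{jj}$ (which the paper also invokes, holding with probability one, or handled by a pseudo-inverse or limiting argument), and the closing block-determinant identity equating $I_{N_i}+\rho_{ii}H_{ii}L(I_{M_i}-Q_{ij}Q_{ij}^{\dagger},S)H_{ii}^{\dagger}$ with the displayed $2\times 2$ Schur-complement form --- this is exactly the computation the paper records in its expansion of $f(S,Q_{ii})$ and again in Appendix \ref{Appendix4}, so calling it routine is fair, but it is the step that makes the stated bound match.
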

\begin{proof}
Let $K_{i3}$ and $K_{i4}$ be defined as follows
\begin{eqnarray}
K_{i3}&\triangleq& \mathbb{E}\left[\left[ \begin{array}{c}
\sqrt{{\rho }_{ii}}H_{ii}X_i+Z_i \\
\sqrt{{\rho }_{ij}}H_{ij}X_i+Z_j \\
X_j \end{array}
\right].{\left[ \begin{array}{c}
\sqrt{{\rho }_{ii}}H_{ii}X_i+Z_i \\
\sqrt{{\rho }_{ij}}H_{ij}X_i+Z_j \\
X_j \end{array}
\right]}^{\dagger }\right]\nonumber\\
&=&\left[ \begin{array}{ccc}
I_{N_i}+{\rho }_{ii}H_{ii}Q_{ii}H^{\dagger }_{ii} & \sqrt{{\rho }_{ii}{\rho }_{ij}}H_{ii}Q_{ii}H^{\dagger }_{ij} & \sqrt{{\rho }_{ii}}H_{ii}Q_{ij} \\
\sqrt{{\rho}_{ii}{\rho }_{ij}}H_{ij}Q_{ii}H^{\dagger }_{ii} & I_{N_j}+{\rho }_{ij}H_{ij}Q_{ii}H^{\dagger }_{ij} & \sqrt{{\rho }_{ij}}H_{ij}Q_{ij} \\
\sqrt{{\rho }_{ii}}Q_{ij}^{\dagger }H^{\dagger }_{ii} & \sqrt{{\rho }_{ij}}Q_{ij}^{\dagger }H^{\dagger }_{ij} & Q_{jj} \end{array}
\right],
\end{eqnarray}
and
\begin{eqnarray}
K_{i4}&\triangleq& \mathbb{E}\left[\left[ \begin{array}{c}
\sqrt{{\rho }_{ij}}H_{ij}X_i+Z_j \\
X_j \end{array}
\right]{\left[ \begin{array}{c}
\sqrt{{\rho }_{12}}H_{ij}X_i+Z_j \\
X_j \end{array}
\right]}^{\dagger }\right]\nonumber\\
&=&\left[ \begin{array}{cc}
I_{N_j}+{\rho }_{ij}H_{ij}Q_{ii}H^{\dagger }_{ij} & \sqrt{{\rho }_{ij}}H_{ij}Q_{ij} \\
\sqrt{{\rho }_{ij}}Q_{ij}^{\dagger }H^{\dagger }_{ij} & Q_{jj} \end{array}
\right].
\end{eqnarray}

Further, let $Y_i' =\sqrt{{\rho }_{ii}}H_{ii}X_i+Z_i$. Then,
\begin{eqnarray}
h(Y_i|X_j,S_i)&=&h\left(\sqrt{{\rho }_{ii}}H_{ii}X_i+\sqrt{{\rho }_{ji}}H_{ji}X_j+Z_i\mathrel{\left|\vphantom{\sqrt{{\rho }_{ii}}H_{ii}X_i+\sqrt{{\rho }_{ji}}H_{ji}X_j+Z_i X_j,\sqrt{{\rho }_{ij}}H_{ij}X_i+Z_j}\right.\kern-\nulldelimiterspace}X_j,\sqrt{{\rho }_{ij}}H_{ij}X_i+Z_j\right)\nonumber\\
&=&h\left(\sqrt{{\rho }_{ii}}H_{ii}X_i+Z_i\mathrel{\left|\vphantom{\sqrt{{\rho }_{ii}}H_{ii}X_i+Z_i X_j,\sqrt{{\rho }_{ij}}H_{ij}X_i+Z_j}\right.\kern-\nulldelimiterspace}X_j,\sqrt{{\rho }_{ij}}H_{ij}X_i+Z_j\right)\nonumber\\
&=& h(Y_i'|X_j,S_i)\nonumber\\
&\stackrel{(a)}{\le}& h(Y'^G_i|S^G_i,X^G_j)\nonumber\\
&=&h(Y'^G_i,S^G_i,X^G_j)-h(S^G_i,X^G_j)\nonumber\\
&=&\log \det \pi e(K_{i3})-\log \det \pi e(K_{i4})\nonumber\\
&=&\log \det (K_{i3})-\log \det (K_{i4})+{N_i}\log \left( \pi e\right),\label{yixjsi}
\end{eqnarray}
where (a) follows from Lemma \ref{gauss_ob} by taking the two vectors $S_i$ and $X_j$ of lengths $N_j$ and $M_j$, respectively, together as a single vector of length of $N_j+M_j$ and then, used Lemma \ref{gauss_ob}.

Substituting $M=K_{i3}$ and $D=K_{i4}$ in Lemma \ref{lem_block}, we get
\begin{eqnarray}
{\log  {\det  \left(K_{i3}\right)\ }\ }&=&{\log  {\det  \left(K_{i4}\right)\ }\ } + \log  \det  \Bigg(I_{N_i}+{\rho }_{ii}H_{ii}Q_{ii}H^{\dagger }_{ii}\nonumber\\
&&-\left[ \begin{array}{cc}
\sqrt{{\rho }_{ii}{\rho }_{ij}}H_{ii}Q_{ii}H^{\dagger }_{ij} & \sqrt{{\rho }_{ii}}H_{ii}Q_{ij} \end{array}
\right]{\left[ \left(K_{i4}\right)\
\right]}^{-1}\left[ \begin{array}{c}
\sqrt{{\rho }_{ii}{\rho }_{ij}}H_{ij}Q_{ii}H^{\dagger }_{ii} \\
\sqrt{{\rho }_{ii}}Q_{ij}^{\dagger }H^{\dagger }_{ii} \end{array}
\right]\Bigg)\nonumber\\
&=&{\log  {\det  \left(K_{i4}\right)\ }\ }+\log  \det  \Bigg(I_{N_i}+{\rho }_{ii}H_{ii}Q_{ii}H^{\dagger }_{ii}-\left[ \begin{array}{cc}
\sqrt{{\rho }_{ii}{\rho }_{ij}}H_{ii}Q_{ii}H^{\dagger }_{ij} & \sqrt{{\rho }_{ii}}H_{ii}Q_{ij} \end{array}
\right]\nonumber\\
&&{\left[ \begin{array}{cc}
I_{N_j}+{\rho }_{ij}H_{ij}Q_{ii}H^{\dagger }_{ij} & \sqrt{{\rho }_{ij}}H_{ij}Q_{ij} \\
\sqrt{{\rho }_{ij}}Q_{ij}^{\dagger }H^{\dagger }_{ij} & Q_{jj} \end{array}
\right]}^{-1}\left[ \begin{array}{c}
\sqrt{{\rho }_{ii}{\rho }_{ij}}H_{ij}Q_{ii}H^{\dagger }_{ii} \\
\sqrt{{\rho }_{ii}}Q_{ij}^{\dagger }H^{\dagger }_{ii} \end{array}\label{ki3}
\right]\Bigg).
\end{eqnarray}

Note that since $Q_{jj}\preceq I_{M_{j}}$, using Lemma \ref{lem_block} we can see that $Q_{jj}=I_{M_{j}}$ outer-bounds the determinant of
\begin{eqnarray}
{\left[ \begin{array}{cc}
I_{N_j}+{\rho }_{ij}H_{ij}Q_{ii}H^{\dagger }_{ij} & \sqrt{{\rho }_{ij}}H_{ij}Q_{ij} \\
\sqrt{{\rho }_{ij}}Q_{ij}^{\dagger }H^{\dagger }_{ij} & Q_{jj} \end{array}
\right]}.\nonumber
\end{eqnarray}

Since $B \preceq I_{M_{j}}$ implies $ABA^\dagger \preceq AA^\dagger$, we have that $Q_{jj}=I_{M_{j}}$ outer-bounds the expression of the right hand side of \eqref{ki3}. Thus,
\begin{eqnarray}
&&{\log  {\det  \left(K_{i3}\right)\ }\ }\le {\log  {\det  \left(K_{i4}\right)\ }\ }+ \log  \det  \Bigg(I_{N_i}+{\rho }_{ii}H_{ii}Q_{ii}H^{\dagger }_{ii}-\left[ \begin{array}{cc}
\sqrt{{\rho }_{ii}{\rho }_{ij}}H_{ii}Q_{ii}H^{\dagger }_{ij} & \sqrt{{\rho }_{ii}}H_{ii}Q_{ij} \end{array}
\right]\nonumber\\
&&{\left[ \begin{array}{cc}
I_{N_j}+{\rho }_{ij}H_{ij}Q_{ii}H^{\dagger }_{ij} & \sqrt{{\rho }_{ij}}H_{ij}Q_{ij} \\
\sqrt{{\rho }_{ij}}Q_{ij}^{\dagger }H^{\dagger }_{ij} & I_{M_j} \end{array}
\right]}^{-1}\left[ \begin{array}{c}
\sqrt{{\rho }_{ii}{\rho }_{ij}}H_{ij}Q_{ii}H^{\dagger }_{ii} \\
\sqrt{{\rho }_{ii}}Q_{ij}^{\dagger }H^{\dagger }_{ii} \end{array}
\right]\Bigg).\label{tmqii}
\end{eqnarray}
Next, we will show that $Q_{ii}=I_{M_{i}}$ maximizes \eqref{tmqii}.

Let us define $S\triangleq\sqrt {{\rho }_{ij}}H^{\dagger}_{ij}$, $W\triangleq Q_{ii}-Q_{ij}Q^{\dagger}_{ij}$, $E\triangleq (I_{N_j}+S^{\dagger}WS)^{-1}$ and
\begin{eqnarray}
f\left(S,Q_{ii}\right)\triangleq Q_{ii}-\left[ \begin{array}{cc}
Q_{ii}S & Q_{ij} \end{array}
\right]{\left[ \begin{array}{cc}
I_{N_j}+S^{\dagger }Q_{ii}S & S^{\dagger }Q_{ij} \\
Q^{\dagger }_{ij}S & I_{M_j} \end{array}
\right]}^{-1}\left[ \begin{array}{c}
S^{\dagger }Q_{ii} \\
Q^{\dagger }_{ij} \end{array}
\right].
\end{eqnarray}

We can check that
\begin{eqnarray}\label{product2}
{\left[ \begin{array}{cc}
I_{N_j}+S^{\dagger }Q_{ii}S & S^{\dagger }Q_{ij} \\
Q^{\dagger }_{ij}S & I_{M_j} \end{array}
\right]}
{\left[ \begin{array}{cc}
E & -ES^{\dagger}Q_{ij} \\
-Q_{ij}^{\dagger}SE & I+Q_{ij}^{\dagger}SES^{\dagger}Q_{ij} \end{array}
\right]}=I_{M_j+N_j}.
\end{eqnarray}
Hence
\begin{eqnarray}
&&f\left(S,Q_{ii}\right)=Q_{ii}-\left[ \begin{array}{cc}
Q_{ii}S & Q_{ij} \end{array}
\right]{\left[ \begin{array}{cc}
I_{N_j}+S^{\dagger }Q_{ii}S & S^{\dagger }Q_{ij} \\
Q^{\dagger }_{ij}S & I_{M_j} \end{array}
\right]}^{-1}\left[ \begin{array}{c}
S^{\dagger }Q_{ii} \\
Q^{\dagger }_{ij} \end{array}
\right]\nonumber\\
&&=Q_{ii}-\left[ \begin{array}{cc}
Q_{ii}S & Q_{ij} \end{array}
\right]{\left[ \begin{array}{cc}
E & -ES^{\dagger}Q_{ij} \\
-Q_{ij}^{\dagger}SE & I+Q_{ij}^{\dagger}SES^{\dagger}Q_{ij} \end{array}
\right]}\left[ \begin{array}{c}
S^{\dagger }Q \\
Q^{\dagger }_{ij} \end{array}
\right]\nonumber\\
&&=Q_{ii}-Q_{ii}SES^{\dagger }Q_{ii}
+Q_{ii}SES^{\dagger}Q_{ij}Q^{\dagger }_{ij}
+Q_{ij}Q_{ij}^{\dagger}SES^{\dagger }Q_{ii}
-Q_{ij}Q^{\dagger }_{ij}-\nonumber\\
&&Q_{ij}Q_{ij}^{\dagger}SES^{\dagger}Q_{ij}Q^{\dagger }_{ij}\nonumber\\
&&=Q_{ii}-Q_{ij}Q^{\dagger}_{ij}-(Q_{ii}-Q_{ij}Q^{\dagger}_{ij})SES^{\dagger}(Q_{ii}-Q_{ij}Q^{\dagger}_{ij})\nonumber\\
&&=Q_{ii}-Q_{ij}Q^{\dagger}_{ij}-(Q_{ii}-Q_{ij}Q^{\dagger}_{ij})S(I+S^{\dagger}(Q_{ii}-Q_{ij}Q^{\dagger}_{ij})S)^{-1}S^{\dagger}(Q_{ii}-Q_{ij}Q^{\dagger}_{ij})\nonumber\\
&&=W-WS(I_{N_{j}}+S^{\dagger}WS)^{-1}S^{\dagger}W.\label{opt}
\end{eqnarray}

We know that $W=Q_{ii}-Q_{ij}Q^{\dagger}_{ij}\preceq I_{M_{i}}-Q_{ij}Q^{\dagger}_{ij}$. So, according to Lemma \ref{L} with $K_1$ as $Q_{ii}-Q_{ij}Q^{\dagger}_{ij}$ and $K_2$ as $I_{M_{i}}-Q_{ij}Q^{\dagger}_{ij}$, we have  $f\left(S,Q_{ii}\right)\preceq f(S,I_{M_i})$. Thus, we use this outer-bound by replacing $Q_{ii}$ by $I$ to get
\begin{eqnarray}
&&{\log  {\det  \left(K_{i3}\right)\ }\ }-{\log  {\det  \left(K_{i4}\right)\ }\ }\nonumber\\
&\le& \log  \det  \Bigg(I_{N_i}+{\rho }_{ii}H_{ii}H^{\dagger }_{ii}-\left[ \begin{array}{cc}
\sqrt{{\rho }_{ii}{\rho }_{ij}}H_{ii}H^{\dagger }_{ij} & \sqrt{{\rho }_{ii}}H_{ii}Q_{ij} \end{array}
\right]\nonumber\\
&&{\left[ \begin{array}{cc}
I_{N_j}+{\rho }_{ij}H_{ij}H^{\dagger }_{ij} & \sqrt{{\rho }_{ij}}H_{ij}Q_{ij} \\
\sqrt{{\rho }_{ij}}Q_{ij}^{\dagger }H^{\dagger }_{ij} & I_{M_j} \end{array}
\right]}^{-1}\left[ \begin{array}{c}
\sqrt{{\rho }_{ii}{\rho }_{ij}}H_{ij}H^{\dagger }_{ii} \\
\sqrt{{\rho }_{ii}}Q_{ij}^{\dagger }H^{\dagger }_{ii} \end{array}
\right]\Bigg).
\end{eqnarray}
Substituting this in \eqref{yixjsi}, we get
\begin{eqnarray}
h\left(Y_i\mathrel{\left|\vphantom{Y_i X_j,S_i}\right.\kern-\nulldelimiterspace}X_j,S_i\right)&\le&{\log  {\det  \left(K_{i3}\right)\ }\ }-{\log  {\det  \left(K_{i4}\right)\ }\ }+{\log  {\det  \pi e\left(I_{N_i}\right)\ }\ }\nonumber\\
&\le& \log  \det  \Bigg(I_{N_i}+{\rho }_{ii}H_{ii}H^{\dagger }_{ii}-\left[ \begin{array}{cc} \sqrt{{\rho }_{ii}{\rho }_{ij}}H_{ii}H^{\dagger }_{ij} & \sqrt{{\rho }_{ii}}H_{ii}Q_{ij} \end{array} \right]\nonumber\\
&&{\left[ \begin{array}{cc}
I_{N_j}+{\rho }_{ij}H_{ij}H^{\dagger }_{ij} & \sqrt{{\rho }_{ij}}H_{ij}Q_{ij} \\
\sqrt{{\rho }_{ij}}Q_{ij}^{\dagger }H^{\dagger }_{ij} & I_{M_j} \end{array}
\right]}^{-1}\left[ \begin{array}{c}
\sqrt{{\rho }_{ii}{\rho }_{ij}}H_{ij}H^{\dagger }_{ii} \\
\sqrt{{\rho }_{ii}}Q_{ij}^{\dagger }H^{\dagger }_{ii} \end{array}
\right]\Bigg)\nonumber\\
&&+{N_i}\log \left( \pi e\right).
\end{eqnarray}
\end{proof}

{ The rest of the section considers the 6 terms in Lemma \ref{lemma_conditional} and outer-bounds each of them to get the terms in the outer-bound of Theorem \ref{outer_capacity}.}

\noindent {\bf First term:} For the first term in Lemma \ref{lemma_conditional},
\begin{eqnarray}
R_1&\le& h(Y_1)-h(Z_1)\nonumber\\
&\stackrel{(a)}{\le}& {\log  {\det  \left(I_{N_1}+{\rho }_{11}H_{11}H^{\dagger }_{11}+{\rho }_{21}H_{21}H^{\dagger }_{21}+\sqrt{{\rho }_{11}{\rho }_{21}}H_{11}Q_{12}H^{\dagger }_{21}+\sqrt{{\rho }_{11}{\rho }_{21}}H_{21}Q_{12}^{\dagger }H^{\dagger }_{11}\right)\ }\ }\nonumber\\
&&+{N_1}\log \left( \pi e\right)-h(Z_1)\nonumber\\
&\stackrel{(b)}{=}& {\log  {\det  \left(I_{N_1}+{\rho }_{11}H_{11}H^{\dagger }_{11}+{\rho }_{21}H_{21}H^{\dagger }_{21}+\sqrt{{\rho }_{11}{\rho }_{21}}H_{11}Q_{12}H^{\dagger }_{21}+\sqrt{{\rho }_{11}{\rho }_{21}}H_{21}Q_{12}^{\dagger }H^{\dagger }_{11}\right),\ }\ }
\end{eqnarray}
where $(a)$ follows from Lemma \ref{lemma_hyi} and $(b)$ follows from the fact that $h(Z_1)=\log \det  \left( \pi e I_{N_1}\right)$.

\noindent {\bf Second term:} The second bound is similar to the first bound by exchanging $1$ and $2$ in the indices.

\noindent {\bf Third term:} For the  third bound in Lemma \ref{lemma_conditional}, it is sufficient to replace upper bounds of $h\left(Y_2\mathrel{\left|\vphantom{Y_2 X_2}\right.\kern-\nulldelimiterspace}X_2\right)$ and $h(Y_1|X_2,S_1)$ from Lemma \ref{yixi} and Lemma \ref{yixisi} as follows
\begin{eqnarray}
R_1&\le& h\left(Y_2\mathrel{\left|\vphantom{Y_2 X_2}\right.\kern-\nulldelimiterspace}X_2\right)-h\left(Z_2\right)+h(Y_1|X_2,S_1)-h(Z_1)\nonumber\\
&\stackrel{(a)}{\le}& {\log  {\det  \left(I_{N_2}+{\rho }_{12}H_{12}H^{\dagger }_{12}-{\rho }_{12}H_{12}Q_{21}^{\dagger }Q_{21}H^{\dagger }_{12}\right)\ }\ }+{N_2}\log \left( \pi e\right)\nonumber\\
&&+\log  \det  \Bigg(I_{N_1}+{\rho }_{11}H_{11}H^{\dagger }_{11}-\left[ \begin{array}{cc} \sqrt{{\rho }_{11}{\rho }_{12}}H_{11}H^{\dagger }_{12} & \sqrt{{\rho }_{11}}H_{11}Q_{12} \end{array} \right]\nonumber\\
&&{\left[ \begin{array}{cc}
I_{N_2}+{\rho }_{12}H_{12}H^{\dagger }_{12} & \sqrt{{\rho }_{12}}H_{12}Q_{12} \\
\sqrt{{\rho }_{12}}Q_{12}^{\dagger }H^{\dagger }_{12} & I_{M_2} \end{array}
\right]}^{-1}\left[ \begin{array}{c}
\sqrt{{\rho }_{11}{\rho }_{12}}H_{12}H^{\dagger }_{11} \\
\sqrt{{\rho }_{11}}Q_{12}^{\dagger }H^{\dagger }_{11} \end{array}
\right]\Bigg)\nonumber\\
&&+{N_1}\log \left( \pi e\right)-h(Z_1)-h(Z_2)\nonumber\\
&\stackrel{(b)}{=}& {\log  {\det  \left(I_{N_2}+{\rho }_{12}H_{12}H^{\dagger }_{12}-{\rho }_{12}H_{12}Q_{21}^{\dagger }Q_{21}H^{\dagger }_{12}\right)\ }\ }\nonumber\\
&&+\log  \det  \Bigg(I_{N_1}+{\rho }_{11}H_{11}H^{\dagger }_{11}-\left[ \begin{array}{cc} \sqrt{{\rho }_{11}{\rho }_{12}}H_{11}H^{\dagger }_{12} & \sqrt{{\rho }_{11}}H_{11}Q_{12} \end{array} \right]\nonumber\\
&&{\left[ \begin{array}{cc}
I_{N_2}+{\rho }_{12}H_{12}H^{\dagger }_{12} & \sqrt{{\rho }_{12}}H_{12}Q_{12} \\
\sqrt{{\rho }_{12}}Q_{12}^{\dagger }H^{\dagger }_{12} & I_{M_2} \end{array}
\right]}^{-1}\left[ \begin{array}{c}
\sqrt{{\rho }_{11}{\rho }_{12}}H_{12}H^{\dagger }_{11} \\
\sqrt{{\rho }_{11}}Q_{12}^{\dagger }H^{\dagger }_{11} \end{array}
\right]\Bigg),
\end{eqnarray}
where $(a)$ is obtained by using Lemma \ref{yixi} and Lemma \ref{yixisi} and $(b)$ follows from the fact that $h(Z_i)=\log \det \left( \pi e I_{N_i}\right)$, for $i=1,2$.

\noindent {\bf Fourth term:} The fourth term is similar to the third term by exchanging $1$ and $2$ in the indices.

\noindent {\bf Fifth term:}
 According to the fifth bound in Lemma \ref{lemma_conditional}, it is sufficient to replace upper bounds of $h(Y_1|X_2,S_1)$ and $h(Y_2)$ from from Lemma \ref{yixisi} and Lemma \ref{lemma_hyi}, respectively, and get the fifth bound of Theorem \ref{outer_capacity} as follows
\begin{eqnarray}
R_1+R_2 &\le& h\left(Y_1\mathrel{\left|\vphantom{Y_1 S_1,X_2}\right.\kern-\nulldelimiterspace}S_1,X_2\right)-h\left(Z_2\right)+h(Y_2)-h(Z_1)\nonumber\\
&\stackrel{(a)}{\le}& {\log  {\det  \left(I_{N_2}+{\rho }_{22}H_{22}H^{\dagger }_{22}+{\rho }_{12}H_{12}H^{\dagger }_{12}+\sqrt{{\rho }_{22}{\rho }_{12}}H_{22}Q_{12}H^{\dagger }_{12}+\sqrt{{\rho }_{22}{\rho }_{12}}H_{12}Q_{12}^{\dagger }H^{\dagger }_{22}\right)\ }\ }\nonumber\\
&&+{N_2}\log \left( \pi e\right)\nonumber\\
&&+\log  \det  \Bigg(I_{N_1}+{\rho }_{11}H_{11}H^{\dagger }_{11}-\left[ \begin{array}{cc} \sqrt{{\rho }_{11}{\rho }_{12}}H_{11}H^{\dagger }_{12} & \sqrt{{\rho }_{11}}H_{11}Q_{12} \end{array} \right]\nonumber\\
&&{\left[ \begin{array}{cc}
I_{N_2}+{\rho }_{12}H_{12}H^{\dagger }_{12} & \sqrt{{\rho }_{12}}H_{12}Q_{12} \\
\sqrt{{\rho }_{12}}Q_{12}^{\dagger }H^{\dagger }_{12} & I_{M_2} \end{array}
\right]}^{-1}\left[ \begin{array}{c}
\sqrt{{\rho }_{11}{\rho }_{12}}H_{12}H^{\dagger }_{11} \\
\sqrt{{\rho }_{11}}Q_{12}^{\dagger }H^{\dagger }_{11} \end{array}
\right]\Bigg)\nonumber\\
&&+{N_1}\log \left( \pi e\right)-h(Z_1)-h(Z_2)\nonumber\\
&\stackrel{(b)}{=}& {\log  {\det  \left(I_{N_2}+{\rho }_{22}H_{22}H^{\dagger }_{22}+{\rho }_{12}H_{12}H^{\dagger }_{12}+\sqrt{{\rho }_{22}{\rho }_{12}}H_{22}Q_{12}H^{\dagger }_{12}+\sqrt{{\rho }_{22}{\rho }_{12}}H_{12}Q_{12}^{\dagger }H^{\dagger }_{22}\right)\ }\ }\nonumber\\
&&+\log  \det  \Bigg(I_{N_1}+{\rho }_{11}H_{11}H^{\dagger }_{11}-\left[ \begin{array}{cc} \sqrt{{\rho }_{11}{\rho }_{12}}H_{11}H^{\dagger }_{12} & \sqrt{{\rho }_{11}}H_{11}Q_{12} \end{array} \right]\nonumber\\
&&{\left[ \begin{array}{cc}
I_{N_2}+{\rho }_{12}H_{12}H^{\dagger }_{12} & \sqrt{{\rho }_{12}}H_{12}Q_{12} \\
\sqrt{{\rho }_{12}}Q_{12}^{\dagger }H^{\dagger }_{12} & I_{M_2} \end{array}
\right]}^{-1}\left[ \begin{array}{c}
\sqrt{{\rho }_{11}{\rho }_{12}}H_{12}H^{\dagger }_{11} \\
\sqrt{{\rho }_{11}}Q_{12}^{\dagger }H^{\dagger }_{11} \end{array}
\right]\Bigg),
\end{eqnarray}
where $(a)$ is obtained by using Lemma \ref{yixisi} and Lemma \ref{lemma_hyi} and $(b)$ follows from the fact that $h(Z_i)=\log \det \left( 2
\pi e I_{N_i}\right)$, for $i=1,2$.

\noindent {\bf Sixth term:} The sixth term is similar to the fifth term by exchanging $1$ and $2$ in the indices.

\section{Proof of Achievability for Theorem \ref{inner_capacity}} \label{apdx_inner}

{ In this section, we prove the achievability for Theorem \ref{inner_capacity}. More precisely, we will show the following.}

\begin{lemma} \label{ach}
For a given set of $(\overline{H},\overline{\rho })$, the feedback capacity region of a two-user MIMO Gaussian IC can achieve all rate pairs $(R_1,R_2)\in \mathbb{A}(\overline{H},\overline{\rho })$ such that
\begin{eqnarray}
R_1&\le& {\log  {\det  (I_{N_1}+{\rho }_{11}H_{11}H^{\dagger }_{11}+{\rho }_{21}H_{21}H^{\dagger }_{21})}}-N_1,\\
R_2&\le& {\log  {\det  (I_{N_2}+{\rho }_{22}H_{22}H^{\dagger }_{22}+{\rho }_{12}H_{12}H^{\dagger }_{12})}}-N_2,\\
R_1&\le& \log  \det  \left(I_{N_2}+{\rho }_{12}H_{12}H^{\dagger }_{12}\right)+\log  \det (I_{N_1}+{\rho }_{11}H_{11}H^{\dagger }_{11}-\nonumber\\
&& \sqrt{{\rho }_{11}{\rho }_{12}}H_{11}H^{\dagger }_{12}
({I_{N_2}+{\rho }_{12}H_{12}H^{\dagger }_{12}})^{-1}
\sqrt{{\rho }_{11}{\rho }_{12}}H_{12}H^{\dagger }_{11})-N_1-N_2,\\
R_2&\le& \log  \det  \left(I_{N_1}+{\rho }_{21}H_{21}H^{\dagger }_{21}\right)
+\log \det (I_{N_2}+{\rho }_{22}H_{22}H^{\dagger }_{22}-\nonumber\\
&&\sqrt{{\rho }_{22}{\rho }_{21}}H_{22}H^{\dagger }_{21}
({I_{N_1}+{\rho }_{21}H_{21}H^{\dagger }_{21}})^{-1}
\sqrt{{\rho }_{22}{\rho }_{21}}H_{21}H^{\dagger }_{22})-N_1-N_2,\\
R_1+R_2&\le& \log \det  \left(I_{N_2}+{\rho }_{22}H_{22}H^{\dagger }_{22}+{\rho }_{12}H_{12}H^{\dagger }_{12}\right)
+\log  \det (I_{N_1}+{\rho }_{11}H_{11}H^{\dagger }_{11}-\nonumber\\
&& \sqrt{{\rho }_{11}{\rho }_{12}}H_{11}H^{\dagger }_{12}
({I_{N_2}+{\rho }_{12}H_{12}H^{\dagger }_{12}})^{-1}
\sqrt{{\rho }_{11}{\rho }_{12}}H_{12}H^{\dagger }_{11})-N_1-N_2,\\
R_1+R_2&\le& \log  \det  \left(I_{N_1}+{\rho }_{11}H_{11}H^{\dagger }_{11}+{\rho }_{21}H_{21}H^{\dagger }_{21}\right)
+\log \det (I_{N_2}+{\rho }_{22}H_{22}H^{\dagger }_{22}-\nonumber\\
&&\sqrt{{\rho }_{22}{\rho }_{21}}H_{22}H^{\dagger }_{21}
({I_{N_1}+{\rho }_{21}H_{21}H^{\dagger }_{21}})^{-1}
\sqrt{{\rho }_{22}{\rho }_{21}}H_{21}H^{\dagger }_{22})-N_1-N_2.
\end{eqnarray}
\end{lemma}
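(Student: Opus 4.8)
The plan is to specialize the Han--Kobayashi achievable region for the two-user discrete memoryless IC with feedback of \cite{Tse} to the Gaussian setting, using the jointly Gaussian inputs prescribed in \eqref{ipi}--\eqref{iui}. Recall that in that scheme each transmitter splits its message into a public part $X_{iu}$ and a private part $X_{ip}$, block Markov encoding correlates the inputs across blocks through the feedback, and the receivers use backward decoding. The resulting region is a collection of six mutual-information constraints: two ``MAC-type'' bounds $R_i\le I(X_1,X_2;Y_i)$ at each receiver (where, after backward decoding, the full received signal becomes useful for $\mathsf{D}_i$), together with two single-rate bounds and two sum-rate bounds, each of which pairs a term decoded at the intended receiver with the public-message term decoded at the unintended receiver. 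First I would write down these six expressions explicitly and substitute $X_i=X_{ip}+X_{iu}$ with $X_{ip}\sim\mathsf{CN}(0,K_{X_{ip}})$ and $X_{iu}\sim\mathsf{CN}(0,K_{X_{iu}})$ independent, so that $Q_{ii}=I_{M_i}$ and the cross-covariance $Q_{12}=0$.

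Before evaluating the rates I would establish that the power split is admissible and that it places each private signal at the noise floor of the unintended receiver. Writing $S=\sqrt{\rho_{ij}}H_{ij}$, the Woodbury identity (as already used in the proof of Lemma \ref{L}) gives $K_{X_{ip}}=I_{M_i}-S^{\dagger}(I_{N_j}+SS^{\dagger})^{-1}S$, so $K_{X_{iu}}=S^{\dagger}(I_{N_j}+SS^{\dagger})^{-1}S\succeq 0$ is immediate since it has the form $A^{\dagger}BA$ with $B\succeq 0$, while $K_{X_{ip}}\succeq 0$ follows because the eigenvalues of $S^{\dagger}(I+SS^{\dagger})^{-1}S$ are $\sigma^2/(1+\sigma^2)<1$ in terms of the singular values $\sigma$ of $S$. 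The key structural fact is the noise-floor property: with $P=SS^{\dagger}$,
\begin{equation}
\rho_{ij}H_{ij}K_{X_{ip}}H^{\dagger}_{ij}=SK_{X_{ip}}S^{\dagger}=P-P(I_{N_j}+P)^{-1}P=P(I_{N_j}+P)^{-1}\preceq I_{N_j},
\end{equation}
so the received covariance of the private signal from $\mathsf{T}_i$ at $\mathsf{D}_j$ is at or below the identity.

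With these facts in hand I would evaluate each mutual-information term as a difference of differential entropies of Gaussian vectors and convert each into a $\log\det$ expression through the block-determinant identity of Lemma \ref{lem_block}. The terms decoded at the intended receiver reproduce exactly the $\log\det$ quantities appearing in \eqref{ro0eq1}--\eqref{ro0eql}, because the residual private interference from the opposite transmitter enters only through the effective noise covariance $I_{N_j}+\rho_{ij}H_{ij}K_{X_{ip}}H^{\dagger}_{ij}$. The noise-floor inequality then lets me replace this effective noise by $I_{N_j}$ at the cost of a subtracted term $\log\det(I_{N_j}+\rho_{ij}H_{ij}K_{X_{ip}}H^{\dagger}_{ij})\le\log\det(2I_{N_j})=N_j$ bits, which is precisely the origin of the $-N_1$, $-N_2$, and $-(N_1+N_2)$ penalties in the statement. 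Matching the remaining public-message mutual-information terms to the second $\log\det$ factors of the single-rate and sum-rate bounds of \eqref{ro0eq1}--\eqref{ro0eql} is then carried out with the same block-matrix manipulations used in Appendix \ref{apdx_outer}.

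The hard part will be the precise bookkeeping of the penalty terms: one must verify that every location where a private signal appears as interference at an unintended receiver contributes a \emph{bounded} determinant correction, and that these corrections total at most $N_1$, $N_2$, or $N_1+N_2$ in the corresponding constraint rather than accumulating further. The noise-floor inequality above is exactly what makes this possible, since it caps each such correction at a factor of two per receive antenna; the residual care lies only in tracking which private-interference term carries the penalty in each of the six bounds and in confirming that, after these substitutions, the achievable expressions coincide with $\mathcal{R}_o(0)\ominus([0,N_1+N_2]\times[0,N_1+N_2])$.
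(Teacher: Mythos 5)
Your proposal follows essentially the same route as the paper's Appendix B: specialize the discrete-memoryless feedback region of \cite{Tse} (with auxiliary variable $U=0$ and $U_i=X_{iu}$), use the power split \eqref{ipi}--\eqref{iui}, and absorb each residual private-interference term via the noise-floor bound $\rho_{ij}H_{ij}K_{X_{ip}}H_{ij}^{\dagger}\preceq I_{N_j}$ (your identity $P-P(I+P)^{-1}P=P(I+P)^{-1}$ is just a cleaner derivation of the paper's Lemma on $\det(\cdot)\le 2^{N_j}$). One small imprecision: the first two constraints are $R_i\le I(U_j,X_i;Y_i)$, not $I(X_1,X_2;Y_i)$ --- the unintended user's \emph{private} part is not decoded, which is exactly why those bounds carry a $-N_i$ penalty as your own subsequent accounting correctly assumes.
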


{ In order to prove this result, we will use the result in \cite{Tse} for a discrete memoryless channel. We will then give some Lemmas that would help in further inner-bounding these terms for a MIMO IC and finally go over each expression for the discrete memoryless channel to prove the result.}

%The rest of the section proves this result. First, we give a lemma from \cite{Tse} for SISO IC which could be extended to MIMO IC. To achieve its terms as a function of the channel parameters, we give some lemmas and then explain our power allocation. Then, achieve the bounds of the Lemma \ref{ach}.

\begin{lemma} \label{inner}
The feedback capacity region of the two-user discrete memoryless IC includes the set of $(R_1,R_2)$ such that \begin{eqnarray}
R_1&\le& I\left(U_2,X_1;Y_1\right),\\
R_2&\le& I\left(U_1,X_2;Y_2\right),\\
R_1&\le& I\left(U_1;Y_2|X_2\right)+I\left(X_1;Y_1\mathrel{\left|\vphantom{X_1;Y_1 U_1,U_2}\right.\kern-\nulldelimiterspace}U_1,U_2\right),\\
R_2&\le& I\left(U_2;Y_1|X_1\right)+I\left(X_2;Y_2\mathrel{\left|\vphantom{X_2;Y_2 U_1,U_2}\right.\kern-\nulldelimiterspace}U_1,U_2\right),\\
R_1+R_2&\le& I\left(X_1;Y_1\mathrel{\left|\vphantom{X_1;Y_1 U_1,U_2}\right.\kern-\nulldelimiterspace}U_1,U_2\right)+I\left(U_1,X_2;Y_2\right),\\
R_1+R_2&\le& I\left(X_2;Y_2\mathrel{\left|\vphantom{X_2;Y_2 U_1,U_2}\right.\kern-\nulldelimiterspace}U_1,U_2\right)+I\left(U_2,X_1;Y_1\right),
\end{eqnarray}
over all joint distributions $p(u_1)p(u_2)p(x_1|u_1)p(x_2|u_2)$.
\end{lemma}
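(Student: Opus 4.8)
The plan is to establish this region by the block-Markov coding scheme for the interference channel with feedback of \cite{Tse}, combining Han--Kobayashi message splitting with feedback-aided cooperation and backward decoding. First I would split each message $W_i$ into a common part $W_{ic}$ and a private part $W_{ip}$. The common part is represented by the auxiliary codeword $U_i$, and the channel input $X_i$ is generated by superposition on $U_i$ according to $p(x_i|u_i)$; the product form $p(u_1)p(u_2)p(x_1|u_1)p(x_2|u_2)$ simply records that the two users build their codebooks independently and that user $i$'s private signal is superimposed only on its own common codeword.

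Communication takes place over $B$ blocks, each of length $n$. The essential use of feedback is that, at the end of block $b$, transmitter $i$ recovers the common message of the interfering user from its own channel output (available through feedback), subtracting off the contribution of its own known input. In block $b+1$ this recovered common information is retransmitted cooperatively, which is what the block-Markov superposition structure encodes. I would first verify that each transmitter can indeed reconstruct the interferer's common message from its feedback signal with vanishing error, so that the two transmitters enter each block with a consistent understanding of the previously sent common messages.

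Decoding would be carried out by backward decoding: after all $B$ blocks are received, receiver $i$ decodes block $B$, then block $B-1$, and so on. At each step receiver $i$ jointly decodes, using joint typicality, its intended private message together with both common messages, exploiting that the common messages have already been resolved from the future block. Applying the packing lemma to the relevant error events yields the six inequalities in the statement: the bounds $R_1\le I(U_2,X_1;Y_1)$ and $R_2\le I(U_1,X_2;Y_2)$ come from decoding a user's input jointly with the interfering common message at its own receiver; the mixed bounds such as $R_1\le I(U_1;Y_2|X_2)+I(X_1;Y_1|U_1,U_2)$ come from decoding the common part of $W_1$ at the non-intended receiver and the private part at the intended receiver; and the sum-rate bounds come from the joint error events over both users' messages. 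Since the region is already stated in terms of the auxiliary variables, no Fourier--Motzkin elimination is needed; letting $n\to\infty$ and then $B\to\infty$ sends the error probability to zero while making the fractional rate loss $\frac{1}{B}$ from the block overhead vanish, which gives the claimed region.

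The main obstacle is the error analysis of the backward-decoding step together with the feedback-aided reconstruction: one must check that the interferer's common message recovered at each transmitter is consistent across blocks, and that correctly decoding the future block supplies exactly the side information needed so that the current block's typicality sets have the right cardinalities, producing precisely these six bounds and no additional penalty term. This is the part that requires the full discrete-memoryless argument of \cite{Tse}; the remaining steps are routine once the coding structure and its cross-block consistency are in place.
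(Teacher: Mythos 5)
Your proposal is correct and follows essentially the same route as the paper: the paper simply invokes Lemma 1 of \cite{Tse} with the time-sharing auxiliary $U=0$, and that cited lemma is proved by exactly the block-Markov/Han--Kobayashi/backward-decoding scheme you sketch. The only minor imprecision is attributing the term $I(U_1;Y_2|X_2)$ to decoding at the ``non-intended receiver''; since the conditioning on $X_2$ is only available at transmitter~2 via feedback, this is really the feedback-aided reconstruction step you correctly described earlier.
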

\begin{proof}
{ This result is a special case of Lemma 1 of \cite{Tse}, obtained by substituting the auxiliary variable $U=0$.}
\end{proof}
To achieve this rate region, the authors of \cite{Tse} developed an infinite-staged achievable scheme that employs  block Markov encoding, backward decoding, and Han-Kobayashi message splitting.

The rest of the section inner bounds this region to get the inner bound in Theorem \ref{inner_capacity}. For this, we will introduce some useful lemmas.

\begin{lemma}\label{power}
The following holds for any  $M_i\times N_j$ matrix $S$
\begin{eqnarray}
 S({I_{N_j}+S^{\dagger }S})^{-1}S^{\dagger }\succeq 0.
\end{eqnarray}
\end{lemma}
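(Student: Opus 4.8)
The plan is to recognize $S(I_{N_j}+S^{\dagger}S)^{-1}S^{\dagger}$ as a \emph{congruence transformation} applied to a positive semi-definite matrix, and to use the fact that congruence preserves positive semi-definiteness. The only preliminary point requiring care is that the inverse appearing in the expression is well defined, which amounts to checking that $I_{N_j}+S^{\dagger}S$ is invertible.

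First I would observe that $S^{\dagger}S$ is of the form $A^{\dagger}A$ and is therefore p.s.d.; this is the same elementary fact already invoked in the channel-model section for $Q_{ij}Q_{ij}^{\dagger}$. Consequently every eigenvalue of $S^{\dagger}S$ is non-negative, so every eigenvalue of $I_{N_j}+S^{\dagger}S$ is at least $1$. Hence $I_{N_j}+S^{\dagger}S$ is Hermitian positive definite, in particular invertible, so the expression in the statement is well defined. Since its eigenvalues are all positive and it is diagonalizable by a unitary matrix, its inverse has the same eigenvectors with reciprocal (still positive) eigenvalues, giving $(I_{N_j}+S^{\dagger}S)^{-1}\succ 0$, and in particular $(I_{N_j}+S^{\dagger}S)^{-1}\succeq 0$.

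Finally I would carry out the congruence step. Writing $M\triangleq (I_{N_j}+S^{\dagger}S)^{-1}$, which is Hermitian and p.s.d., the quadratic-form characterization of positive semi-definiteness makes the conclusion immediate: for any vector $v\in\mathbb{C}^{M_i\times 1}$ we have $v^{\dagger}SMS^{\dagger}v=(S^{\dagger}v)^{\dagger}M(S^{\dagger}v)\ge 0$, because $M\succeq 0$. Equivalently, factoring $M=BB^{\dagger}$ through its Hermitian square root $B=M^{1/2}$ gives $SMS^{\dagger}=(SB)(SB)^{\dagger}$, which is again of the form $AA^{\dagger}$ and hence p.s.d. Either route establishes $S(I_{N_j}+S^{\dagger}S)^{-1}S^{\dagger}\succeq 0$.

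There is essentially no substantive obstacle in this lemma; it is a structural fact that will presumably be used as a building block in the achievability arguments. The single step that deserves an explicit line is the invertibility of $I_{N_j}+S^{\dagger}S$ (so that the expression makes sense), after which the quadratic-form argument closes the proof in one step.
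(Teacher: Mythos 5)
Your proposal is correct and follows essentially the same route as the paper, which simply notes that the expression is of the form $AEA^{\dagger}$ with $A=S$ and $E=(I_{N_j}+S^{\dagger}S)^{-1}$ positive semi-definite; your write-up just makes explicit the invertibility of $I_{N_j}+S^{\dagger}S$ and the quadratic-form/square-root justification that the paper leaves implicit.
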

\begin{proof}
It holds since it can be written as $AEA^{\dagger}$ for $A=S$ and $E={(I_{N_j}+S^{\dagger }S)}^{-1}$, which is p.s.d. because $E$ is p.s.d..
\end{proof}

\begin{lemma}\label{private}
The following holds for any  $M_i\times N_j$ matrix $S$
\begin{eqnarray}
\det (I_{N_j}+S^{\dagger }S-
S^{\dagger }S
(I_{N_j}+S^{\dagger }S)^{-1}
S^{\dagger }S)
\le 2^{N_j}.\label{pri}
\end{eqnarray}
\end{lemma}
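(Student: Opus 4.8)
The plan is to reduce the matrix inequality to a scalar one by diagonalization. First I would set $A\triangleq S^{\dagger}S$, which is an $N_j\times N_j$ Hermitian positive semi-definite matrix; then the matrix inside the determinant is exactly $I_{N_j}+A-A(I_{N_j}+A)^{-1}A$. The crucial observation is that $I_{N_j}$, $A$, and $(I_{N_j}+A)^{-1}$ are all functions of the single matrix $A$, so they commute and are simultaneously diagonalizable by the same unitary matrix $U$ that diagonalizes $A$. Note that $I_{N_j}+A$ is invertible here, since its eigenvalues are $1+\lambda_k\ge 1>0$.

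Writing $A=U\,\mathrm{diag}(\lambda_1,\dots,\lambda_{N_j})\,U^{\dagger}$ with each $\lambda_k\ge 0$, the entire inner matrix is unitarily similar to the diagonal matrix whose $k$-th entry is $1+\lambda_k-\frac{\lambda_k^2}{1+\lambda_k}$. A one-line simplification gives
\[
1+\lambda_k-\frac{\lambda_k^2}{1+\lambda_k}=\frac{(1+\lambda_k)^2-\lambda_k^2}{1+\lambda_k}=\frac{1+2\lambda_k}{1+\lambda_k}=2-\frac{1}{1+\lambda_k}.
\]

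Since the determinant is invariant under unitary similarity and equals the product of the diagonal entries, I would conclude
\[
\det\!\big(I_{N_j}+S^{\dagger}S-S^{\dagger}S(I_{N_j}+S^{\dagger}S)^{-1}S^{\dagger}S\big)=\prod_{k=1}^{N_j}\Big(2-\frac{1}{1+\lambda_k}\Big).
\]
Because $\lambda_k\ge 0$ forces $0<2-\frac{1}{1+\lambda_k}\le 2$ for every $k$, the product is at most $2^{N_j}$, which is the claimed bound. The argument is entirely routine and presents no genuine obstacle; the only point needing a word of justification is the simultaneous diagonalization, which is immediate from the fact that every matrix appearing is a function of $A$ alone.
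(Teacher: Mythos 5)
Your proof is correct and follows essentially the same route as the paper: both reduce the matrix to $2I_{N_j}-(I_{N_j}+S^{\dagger}S)^{-1}$ (the paper by matrix-level algebra, you by diagonalizing first and simplifying each eigenvalue to $2-\tfrac{1}{1+\lambda_k}$) and then bound the product of eigenvalues by $2^{N_j}$.
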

\begin{proof}
Let us define $V\triangleq S^{\dagger }S$, we get
\begin{eqnarray}
&&{\det  (I_{N_j}+S^{\dagger }S-S^{\dagger }S{(I_{N_j}+S^{\dagger }S)}^{-1}S^{\dagger }S)\ }\nonumber\\
&=&{\det  (I_{N_j}+V-V{(I_{N_j}+V)}^{-1}V)\ }\nonumber\\
&=&{\det  (I_{N_j}+V-V{\left(I_{N_j}+V\right)}^{-1}(V+I_{N_j}-I_{N_j}))\ }\nonumber\\
&=&{\det  (I_{N_j}+V-V(I_{N_j}-{\left(I_{N_j}+V\right)}^{-1}))\ }\nonumber\\
&=&{\det  (I_{N_j}+V({\left(I_{N_j}+V\right)}^{-1}))\ }\nonumber\\
&=&{\det  (I_{N_j}+(-I_{N_j}+I_{N_j}+V)({\left(I_{N_j}+V\right)}^{-1}))\ }\nonumber\\
&=&{\det  \left(I_{N_j}+I_{N_j}-{\left(I_{N_j}+V\right)}^{-1}\right)\ }\nonumber\\
&\stackrel{(a)}{\le}& {\det  \left(2I_{N_j}\right)}\nonumber\\
&=&  2^{N_j},
\end{eqnarray}
where $(a)$  follows from the fact that $V=S^{\dagger }S$ is p.s.d., and its eigenvalues are non-negative. So, the eigenvalues of $I_{N_j}+V$ are greater than or equal to 1. As a result, eigenvalues of $(I_{N_j}+V)^{-1}$ are between 0 and 1, i.e. they satisfy $0\le\lambda_k\le1$. So
\begin{eqnarray}
\det (I_{N_j}+I_{N_j}-(I_{N_j}+V)^{-1})=(2-\lambda_1). ... .(2-\lambda_{N_j}) \le 2^{N_j},
\end{eqnarray}
which proves \eqref{pri}.
\end{proof}

As we said before, our achievability scheme has a power allocation according to \eqref{ipi} and \eqref{iui}. We note that this power allocation is feasible since $I_{M_i}-K_{X_{ip}}\succeq 0$ by Lemma \ref{power} substituting $\sqrt{{\rho }_{ij}}H^{\dagger }_{ij}$ into $S$.

We will now expand the achievability in Lemma \ref{inner} using  $U_i = X_{iu}$ for $i\in \{1,2\}$. Before expanding each term in Lemma \ref{inner}, we evaluate some entropies as follows.
\begin{eqnarray}\label{achyi}
&&h\left(Y_i\right)={\log  {\det (I_{N_i}+{\rho }_{ii}H_{ii}H^{\dagger }_{ii}+{\rho }_{ji}H_{ji}H^{\dagger }_{ji})\ }\ },
\end{eqnarray}
and
\begin{eqnarray}\label{achyixi}
h\left(Y_i|X_i\right)={\log  {\det \left(I_{N_i}+{\rho }_{ji}H_{ji}H^{\dagger }_{ji}\right)\ }\ }.
\end{eqnarray}
In addition, we have
\begin{eqnarray}\label{achyiu}
&&h\left(Y_i\mathrel{\left|\vphantom{Y_i U_i,U_j}\right.\kern-\nulldelimiterspace}U_i,U_j\right)\nonumber\\
&\ge& h(Y_i|U_i,U_j,X_j)\nonumber\\
&=&\log \det (I_{N_i}+{\rho }_{ii}H_{ii}K_{X_{ip}}H^{\dagger }_{ii})\nonumber\\
&=&\log  \det  (I_{N_i}+{\rho }_{ii}H_{ii}H^{\dagger }_{ii}-
\sqrt{{\rho }_{ii}{\rho }_{ij}}H_{ii}H^{\dagger }_{ij}
(I_{N_j}+{\rho }_{ij}H_{ij}H^{\dagger }_{ij})^{-1}
\sqrt{{\rho }_{ii}{\rho }_{ij}}H_{ij}H^{\dagger }_{ii}).
\end{eqnarray}
Moreover, we have
\begin{eqnarray}
h\left(Y_i\mathrel{\left|\vphantom{Y_i U_j,X_i}\right.\kern-\nulldelimiterspace}U_j,X_i\right)
&\le& {\log  {\det (I_{N_i}+{\rho }_{ji}H_{ji}K_{X_{jp}}H^{\dagger }_{ji})\ }\ }\nonumber\\
&\stackrel{(a)}{\le}& \log {\det  \left(2I_{N_i}\right)\ }\nonumber\\
&=& N_i,\label{U_B}
\end{eqnarray}
where $(a)$ follows from Lemma \ref{private} by substituting $\sqrt{{\rho }_{ji}}H_{ji}^{\dagger}$ in $S$. This shows that $h\left(Y_i\mathrel{\left|\vphantom{Y_i U_j,X_i}\right.\kern-\nulldelimiterspace}U_j,X_i\right)$ is upper-bounded by $N_i$.

In our achievability, $h\left(Y_i\mathrel{\left|\vphantom{Y_i U_j,X_i}\right.\kern-\nulldelimiterspace}U_j,X_i\right)$ appeared with a minus sign. So, without loss of generality we can replace it with its bound $N_i$ for the achievability.

The rest of the section considers the six terms in Lemma \ref{inner} and uses each of them to get the terms in the inner-bound of Lemma \ref{ach}.

\noindent {\bf First term:} For the first term in Lemma \ref{inner}, we have
\begin{eqnarray}
&&I\left(U_2,X_1;Y_1\right)\nonumber\\
&=&h\left(Y_1\right)-h(Y_1|U_2,X_1)\nonumber\\
&\stackrel{(a)}{=}&{\log  {\det (I_{N_1}+{\rho }_{11}H_{11}H^{\dagger }_{11}+{\rho }_{21}H_{21}H^{\dagger }_{21})}}-h(Y_1|U_2,X_1)\nonumber\\
&\stackrel{(b)}{\ge}&{\log  {\det (I_{N_1}+{\rho }_{11}H_{11}H^{\dagger }_{11}+{\rho }_{21}H_{21}H^{\dagger }_{21})}} -N_1,
\end{eqnarray}
where $(a)$ follows from \eqref{achyi} and $(b)$ follows from \eqref{U_B}.

\noindent {\bf Second term:} The second bound is similar to the first bound by exchanging $1$ and $2$ in the indices.

\noindent {\bf Third term:} For the third bound in Lemma \ref{inner}, we have
\begin{eqnarray}
&&I\left(U_1;Y_2|X_2\right)+I\left(X_1;Y_1\mathrel{\left|\vphantom{X_1;Y_1 U_1,U_2}\right.\kern-\nulldelimiterspace}U_1,U_2\right)\nonumber\\
&=&h\left(Y_2|X_2\right)-h\left(Y_2|U_1,X_2\right)+h\left(Y_1\mathrel{\left|\vphantom{Y_1 U_1,U_2}\right.\kern-\nulldelimiterspace}U_1,U_2\right)-h\left(Y_1\mathrel{\left|\vphantom{Y_1 U_1,U_2,X_1}\right.\kern-\nulldelimiterspace}U_1,U_2,X_1\right)\nonumber\\
&\ge& h\left(Y_2|X_2\right)-h\left(Y_2|{U_1,X}_2\right)+h(Y_1|U_1,U_2,X_2)-h\left(Y_1\mathrel{\left|\vphantom{Y_1 U_1,U_2,X_1}\right.\kern-\nulldelimiterspace}U_1,U_2,X_1\right)\nonumber\\
&\stackrel{(a)}{=}&{\log  {\det  \left(I_{N_2}+{\rho }_{12}H_{12}H^{\dagger }_{12}\right)}} +\log  \det  \Bigg(I_{N_1}+{\rho }_{11}H_{11}H^{\dagger }_{11}-
\sqrt{{\rho }_{11}{\rho }_{12}}H_{11}H^{\dagger }_{12}\nonumber\\
&&(I_{N_2}+{\rho }_{12}H_{12}H^{\dagger }_{12})^{-1}
\sqrt{{\rho }_{11}{\rho }_{12}}H_{12}H^{\dagger }_{11}\Bigg) -h\left(Y_2|U_1,X_2\right)-h\left(Y_1\mathrel{\left|\vphantom{Y_1 U_1,U_2,X_1}\right.\kern-\nulldelimiterspace}U_1,U_2,X_1\right)\nonumber\\
&\stackrel{(b)}{\ge}& {\log  {\det  \left(I_{N_2}+{\rho }_{12}H_{12}H^{\dagger }_{12}\right)}} +\log  \det  \Bigg(I_{N_1}+{\rho }_{11}H_{11}H^{\dagger }_{11}-
\sqrt{{\rho }_{11}{\rho }_{12}}H_{11}H^{\dagger }_{12}\nonumber\\
&&(I_{N_2}+{\rho }_{12}H_{12}H^{\dagger }_{12})^{-1}
\sqrt{{\rho }_{11}{\rho }_{12}}H_{12}H^{\dagger }_{11}\Bigg)- N_1 -N_2,
\end{eqnarray}
where $(a)$ is obtained from \eqref{achyixi} and \eqref{achyiu} and $(b)$ follows from (\ref{U_B}).

\noindent {\bf Fourth term:} The fourth term is similar to the third term by exchanging $1$ and $2$ in the indices.

\noindent {\bf Fifth term:} For the fifth bound in Lemma \ref{inner}, we have
\begin{eqnarray}
&&I\left(X_1;Y_1\mathrel{\left|\vphantom{X_1;Y_1 U_1,U_2}\right.\kern-\nulldelimiterspace}U_1,U_2\right)+I\left(U_1,X_2;Y_2\right)\nonumber\\
&=&h\left(Y_1\mathrel{\left|\vphantom{Y_1 U_1,U_2}\right.\kern-\nulldelimiterspace}U_1,U_2\right)-h\left(Y_1\mathrel{\left|\vphantom{Y_1 U_1,U_2,X_1}\right.\kern-\nulldelimiterspace}U_1,U_2,X_1\right)+h\left(Y_2\right)-h\left(Y_2|U_1,X_2\right)\\
&\ge& h(Y_1|U_1,U_2,X_2)-h\left(Y_1\mathrel{\left|\vphantom{Y_1 U_1,U_2,X_1}\right.\kern-\nulldelimiterspace}U_1,U_2,X_1\right)+h\left(Y_2\right)-h\left(Y_2|U_1,X_2\right)\\
&\stackrel{(a)}{=}&{\log  {\det  (I_{N_2}+{\rho }_{22}H_{22}H^{\dagger }_{22}+{\rho }_{12}H_{12}H^{\dagger }_{12})\ }\ }+\log  \det \Bigg(I_{N_1}+{\rho }_{11}H_{11}H^{\dagger }_{11}-
\sqrt{{\rho }_{11}{\rho }_{12}}H_{11}H^{\dagger }_{12}\nonumber\\
&&(I_{N_2}+{\rho }_{12}H_{12}H^{\dagger }_{12})^{-1}
\sqrt{{\rho }_{11}{\rho }_{12}}H_{12}H^{\dagger }_{11}\Bigg) -h\left(Y_2|{U_1,X}_2\right)-h\left(Y_1\mathrel{\left|\vphantom{Y_1 U_1,U_2,X_1}\right.\kern-\nulldelimiterspace}U_1,U_2,X_1\right)\\
&\stackrel{(b)}{\ge}&{\log  {\det  (I_{N_2}+{\rho }_{22}H_{22}H^{\dagger }_{22}+{\rho }_{12}H_{12}H^{\dagger }_{12})}}+\log  \det \Bigg(I_{N_1}+{\rho }_{11}H_{11}H^{\dagger }_{11}-
\sqrt{{\rho }_{11}{\rho }_{12}}H_{11}H^{\dagger }_{12}\nonumber\\
&&(I_{N_2}+{\rho }_{12}H_{12}H^{\dagger }_{12})^{-1}
\sqrt{{\rho }_{11}{\rho }_{12}}H_{12}H^{\dagger }_{11}\Bigg) - N_1 - N_2,
\end{eqnarray}

where $(a)$ is obtained from \eqref{achyi} and \eqref{achyiu}, and $(b)$ follows from \eqref{U_B}.

\noindent {\bf Sixth term:} The sixth term is similar to the fifth term by exchanging $1$ and $2$ in the indices.

\section{Proof of Outer Bound for Theorem 2}
\label{Appendix4}

{ In this section, we prove that covariance matrix $Q=0$ is approximately optimal for the capacity region of the MIMO IC with feedback.} As mentioned in Section \ref{main_results}, it is enough to prove that
\begin{eqnarray}
\mathcal{R}_o(Q)\subseteq \mathcal{R}_o(0)\oplus ([0,N_1]\times [0,N_2]),
\end{eqnarray}
for any covariance matrix $Q$.

Now, we give three important inequalities that would be used in the main proof.

Define $E\triangleq {(I_{N_2}+\sqrt{{\rho }_{ij}}H_{ij}(I-Q_{ij}Q_{ij}^{\dagger})\sqrt{{\rho }_{ij}}H_{ij}^\dagger)}^{-1}$). The first inequality is as follows
\begin{eqnarray}
&&I_{N_i}+{\rho }_{ii}H_{ii}H^{\dagger }_{ii}-\nonumber\\
&&\left[ \begin{array}{cc}
\sqrt{{\rho }_{ii}{\rho }_{ij}}H_{ii}H^{\dagger }_{ij} & \sqrt{{\rho }_{ii}}H_{ii}Q_{ij} \end{array}
\right]{\left[ \begin{array}{cc}
I_{N_j}+{\rho }_{ij}H_{ij}H^{\dagger }_{ij} & \sqrt{{\rho }_{ij}}H_{ij}Q_{ij} \\
\sqrt{{\rho }_{ij}}Q_{ij}^{\dagger }H^{\dagger }_{ij} & I_{M_j} \end{array}
\right]}^{-1}\left[ \begin{array}{c}
\sqrt{{\rho }_{ii}{\rho }_{ij}}H_{ij}H^{\dagger }_{ii} \\
\sqrt{{\rho }_{ii}}Q_{ij}^{\dagger }H^{\dagger }_{ii} \end{array}
\right]\nonumber\\
&=&I_{N_i}+{\rho }_{ii}H_{ii}\left(I_{M_i}-\right.\nonumber\\
&&\left.\left[ \begin{array}{cc}
\sqrt{{\rho }_{ij}}H^{\dagger }_{ij} & Q_{ij} \end{array}
\right]{\left[ \begin{array}{cc}
I_{N_j}+{\rho }_{ij}H_{ij}H^{\dagger }_{ij} & \sqrt{{\rho }_{ij}}H_{ij}Q_{ij} \\
\sqrt{{\rho }_{ij}}Q_{ij}^{\dagger }H^{\dagger }_{ij} & I_{M_j} \end{array}
\right]}^{-1}\left[ \begin{array}{c}
\sqrt{{\rho }_{ij}}H_{ij} \\
Q_{ij}^{\dagger } \end{array}
\right]\right)H^{\dagger }_{ii}\nonumber\\
&\stackrel{(a)}{=}&I_{N_i}+{\rho }_{ii}H_{ii}\left(I_{M_i}-\right.\nonumber\\
&&\left[ \begin{array}{cc}
\sqrt{{\rho }_{ij}}H^{\dagger }_{ij} & Q_{ij} \end{array}
\right]{\left[ \begin{array}{cc}
E & -E\sqrt{{\rho }_{ij}}H_{ij}Q_{ij} \\
-\sqrt{{\rho }_{ij}}Q_{ij}^{\dagger }H^{\dagger }_{ij}E & I_{M_j}+\sqrt{{\rho }_{ij}}Q_{ij}^{\dagger }H^{\dagger }_{ij}E\sqrt{{\rho }_{ij}}H_{ij}Q_{ij} \end{array}
\right]}\nonumber\\
&&\left.\left[ \begin{array}{c}
\sqrt{{\rho }_{ij}}H_{ij} \\
Q_{ij}^{\dagger } \end{array}
\right]\right)H^{\dagger }_{ii}\nonumber\\
&\stackrel{(b)}{=}&I_{N_i}+{\rho }_{ii}H_{ii}\left(I-Q_{ij}Q_{ij}^{\dagger}-(I-Q_{ij}Q_{ij}^{\dagger})\sqrt{{\rho }_{ij}}H_{ij}^\dagger E\sqrt{{\rho }_{ij}}H_{ij}(I-Q_{ij}Q_{ij}^{\dagger})\right)H^{\dagger }_{ii}\nonumber\\
&\stackrel{(c)}{=}&I_{N_i}+{\rho }_{ii}H_{ii}L\left(I-Q_{ij}Q_{ij}^{\dagger }, \sqrt{{\rho }_{ij}}H_{ij}^\dagger\right)H^{\dagger }_{ii}\nonumber\\
&\stackrel{(d)}{\le}&I_{N_i}+{\rho }_{ii}H_{ii}L\left(I, \sqrt{{\rho }_{ij}}H_{ij}^\dagger\right)H^{\dagger }_{ii},\label{m1}
\end{eqnarray}
where $L(K,S)$ is as in \eqref{eqdefnl}, $(a)$ follows since the inverse can be verified easily, $(b)$ follows from  finding the product of matrices, $(c)$ follows from the definition of $L(K,S)$ in \eqref{eqdefnl}, and (d) follows from Lemma \ref{L}. %Thus, using $Q_{ij}=0$ gives an outer-bound to the expression we started with.

The second inequality is as follows

\begin{eqnarray}
&&{\log  {\det  \left(I_{N_j}+{\rho }_{ij}H_{ij}H^{\dagger }_{ij}-{\rho }_{ij}H_{ij}Q_{ij}Q_{ij}^{\dagger }H^{\dagger }_{ij}\right)\ }\ }\nonumber\\
&\le& {\log  {\det  \left(I_{N_j}+{\rho }_{ij}H_{ij}H^{\dagger }_{ij}\right)\ }\ }.\label{m2}
\end{eqnarray}

The third inequality is as follows
\begin{eqnarray}
&&{\log  {\det  (I_{N_i}+{\rho }_{ii}H_{ii}H^{\dagger }_{ii}+{\rho }_{ji}H_{ji}H^{\dagger }_{ji}+\sqrt{{\rho }_{ii}{\rho }_{ji}}H_{ii}Q_{ij}H^{\dagger }_{ji}+\sqrt{{\rho }_{ii}{\rho }_{ji}}H_{ji}Q_{ij}^{\dagger }H^{\dagger }_{ii})\ }\ }\nonumber\\
&\stackrel{(a)}{\le}&{\log  {\det  (I_{N_i}+{\rho }_{ii}H_{ii}H^{\dagger }_{ii}+{\rho }_{ji}H_{ji}H^{\dagger }_{ji}+{{\rho }_{ii}}H_{ii}Q_{ii}H^{\dagger }_{ii}+{{\rho }_{ji}}H_{ji}Q_{jj}H^{\dagger }_{ji})\ }\ }\nonumber\\
&\stackrel{(b)}{\le}&{\log  {\det  (I_{N_i}+2{\rho }_{ii}H_{ii}H^{\dagger }_{ii}+2{\rho }_{ji}H_{ji}H^{\dagger }_{ji})\ }\ }\nonumber\\
&\le& {\log  {\det  (I_{N_i}+{\rho }_{ii}H_{ii}H^{\dagger }_{ii}+{\rho }_{ji}H_{ji}H^{\dagger }_{ji})}}+N_i,\label{m3}
\end{eqnarray}
where $(a)$ follows from $\left(A-B\right)\left(A^{\dagger }-B^{\dagger }\right)=AA^{\dagger }+BB^{\dagger }-AB^{\dagger }-BA^{\dagger }\succeq 0$ by substituting $\sqrt{{\rho }_{ii}}H_{ii}X_i$ and $\sqrt{{\rho }_{ji}}H_{ji}X_j$ in $A$ and $B$, respectively, $(b)$ follows from the fact that $I \succeq Q_{ii}$.

Thus, we proved that among these three expansions, the first two expansions we started with are  maximized by $Q_{ij}=0$ while  the third one is is outer-bounded by the corresponding expression with $Q_{ij}=0$ plus $N_1$.

Now, we consider each of the six expressions in the definition of the region $\mathcal{R}_o(Q)$ and outer-bound each expression to find the gap with $\mathcal{R}_o(0)$ being constant thus proving that $\mathcal{R}_o(Q) \subseteq \mathcal{R}_o(0)\oplus ([0,N_1]\times [0,N_2])$ which proves the result.

Let the right-hand sides of the six expressions in the definition of $\mathcal{R}_0(Q)$ in \eqref{roqeq1}-\eqref{roqeql} be labeled as $I_1(Q)$, $I_2(Q)$, $I_3(Q)$, $I_4(Q)$, $I_5(Q)$, and $I_6(Q)$ respectively. Then, the constant gap outer-bound is shown in the following Lemma.

\begin{lemma}
We have
\begin{eqnarray}
I_1(Q)&\le& I_1(0)+N_1,\label{r1}\\
I_2(Q)&\le& I_2(0)+N_2,\label{r2}\\
I_3(Q)&\le& I_3(0),\label{r3}\\
I_4(Q)&\le& I_4(0),\label{r4}\\
I_5(Q)&\le& I_5(0)+N_2,\label{r5}\\
I_6(Q)&\le& I_6(0)+N_1.\label{r6}
\end{eqnarray}

\end{lemma}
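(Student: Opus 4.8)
The plan is to prove each of the six inequalities (\ref{r1})--(\ref{r6}) by matching the corresponding term $I_k(Q)$ against the three already-established inequalities (\ref{m1}), (\ref{m2}), and (\ref{m3}), which respectively control the ``block-inverse'' conditional-entropy matrix, the ``private-subtraction'' single-determinant term, and the ``total-received-power'' single-determinant term. The whole argument is essentially a bookkeeping exercise: each of $I_1(Q),\dots,I_6(Q)$ in (\ref{roqeq1})--(\ref{roqeql}) is a sum of at most two $\log\det$ factors, and each such factor is of exactly one of the three types handled above, under an appropriate relabeling $i\leftrightarrow j$ of the indices.

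First I would dispose of (\ref{r1}) and (\ref{r2}). The single determinant defining $I_1(Q)$ in (\ref{roqeq1}) is precisely the left-hand side of (\ref{m3}) with $(i,j)=(1,2)$, whose right-hand side is $I_1(0)+N_1$; likewise $I_2(Q)$ matches (\ref{m3}) with $(i,j)=(2,1)$, giving $I_2(0)+N_2$. No further work is needed for these two.

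Next I would handle (\ref{r3}) and (\ref{r4}), which split into two factors each. For $I_3(Q)$ the first factor is bounded by (\ref{m2}) with $(i,j)=(1,2)$ (here $QQ^{\dagger}=Q_{12}Q_{12}^{\dagger}$), and the second factor is the block-inverse matrix governed by (\ref{m1}) with $(i,j)=(1,2)$. The key point is that the matrix on the right-hand side of (\ref{m1}), namely $I_{N_1}+\rho_{11}H_{11}L(I,\sqrt{\rho_{12}}H_{12}^{\dagger})H_{11}^{\dagger}$, expands via the definition (\ref{eqdefnl}) of $L$ into exactly the second-determinant matrix appearing in the $Q=0$ region (\ref{ro0eq1})--(\ref{ro0eql}); applying monotonicity of $\log\det$ on the cone of positive-definite matrices then converts the p.s.d. ordering in (\ref{m1}) into the scalar bound on the determinant. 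Summing the two factor bounds gives $I_3(Q)\le I_3(0)$, and $I_4(Q)\le I_4(0)$ follows identically after swapping indices (so that $Q^{\dagger}Q=Q_{21}Q_{21}^{\dagger}$).

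Finally I would treat the sum-rate terms (\ref{r5}) and (\ref{r6}). Here the first factor of $I_5(Q)$ is of the total-received-power type, identical to the $I_2(Q)$ determinant, so (\ref{m3}) with $(i,j)=(2,1)$ bounds it by its $Q=0$ value plus $N_2$; the second factor of $I_5(Q)$ is the same block-inverse matrix as in $I_3(Q)$, bounded via (\ref{m1}) with $(i,j)=(1,2)$ by its $Q=0$ value. Adding the two yields $I_5(Q)\le I_5(0)+N_2$, and symmetrically $I_6(Q)\le I_6(0)+N_1$ using (\ref{m3}) with $(i,j)=(1,2)$ together with (\ref{m1}) with $(i,j)=(2,1)$. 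The only genuinely substantive step anywhere is the one already carried out in (\ref{m1}); the main obstacle I anticipate is purely notational, namely verifying that the $L(I,\cdot)$ form produced by (\ref{m1}) coincides with the determinant appearing in $\mathcal{R}_o(0)$, and keeping the index substitutions $(i,j)$ consistent across the six terms.
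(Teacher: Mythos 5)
Your proposal is correct and follows essentially the same route as the paper's proof: the paper likewise bounds $I_1(Q),I_2(Q)$ directly via \eqref{m3}, bounds the two factors of $I_3(Q),I_4(Q)$ via \eqref{m2} and \eqref{m1} respectively, and bounds the two factors of $I_5(Q),I_6(Q)$ via \eqref{m3} and \eqref{m1}, with the same index substitutions you describe. Your explicit remarks on identifying $I_{N_1}+\rho_{11}H_{11}L(I,\sqrt{\rho_{12}}H_{12}^{\dagger})H_{11}^{\dagger}$ with the second determinant in $\mathcal{R}_o(0)$ and on invoking monotonicity of $\log\det$ to pass from the p.s.d.\ ordering in \eqref{m1} to the scalar bound are exactly the (implicit) steps the paper relies on.
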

\begin{proof}
We start with (\ref{r1}).
\begin{eqnarray}
I_1(Q)&=&{\log  {\det  (I_{N_1}+{\rho }_{11}H_{11}H^{\dagger }_{11}+{\rho }_{21}H_{21}H^{\dagger }_{21}+\sqrt{{\rho }_{11}{\rho }_{21}}H_{11}QH^{\dagger }_{21}+\sqrt{{\rho }_{11}{\rho }_{21}}H_{21}Q^{\dagger }H^{\dagger }_{11})\ }\ }\nonumber\\
&\stackrel{(a)}{\le}& {\log  {\det  (I_{N_1}+{\rho }_{11}H_{11}H^{\dagger }_{11}+{\rho }_{21}H_{21}H^{\dagger }_{21})\ }\ }+N_1\nonumber\\
&=& I_1(0)+N_1,
\end{eqnarray}
where $(a)$ follows from (\ref{m3}).

Proof of (\ref{r2}) is similar to (\ref{r1}) by exchanging $1$ and $2$ in the indices.

For the proof of (\ref{r3}) we have,
\begin{eqnarray}
I_3(Q)&=&
\log  \det  \left(I_{N_2}+{\rho }_{12}H_{12}H^{\dagger }_{12}-{\rho }_{12}H_{12}QQ^{\dagger }H^{\dagger }_{12}\right)+\log  \det \Bigg(I_{N_1}+{\rho }_{11}H_{11}H^{\dagger }_{11}-\nonumber\\
&&\left[ \begin{array}{cc}
\sqrt{{\rho }_{11}{\rho }_{12}}H_{11}H^{\dagger }_{12} & \sqrt{{\rho }_{11}}H_{11}Q \end{array}
\right]
{\left[ \begin{array}{cc}
I_{N_2}+{\rho }_{12}H_{12}H^{\dagger }_{12} & \sqrt{{\rho }_{12}}H_{12}Q \\
\sqrt{{\rho }_{12}}Q^{\dagger }H^{\dagger }_{12} & I_{M_2} \end{array}
\right]}^{-1}\nonumber\\
&&\left[ \begin{array}{c}
\sqrt{{\rho }_{11}{\rho }_{12}}H_{12}H^{\dagger }_{11} \\
\sqrt{{\rho }_{11}}Q^{\dagger }H^{\dagger }_{11} \end{array}
\right]\Bigg)\nonumber\\
&\stackrel{(a)}{\le}& \log  \det  \left(I_{N_2}+{\rho }_{12}H_{12}H^{\dagger }_{12}\right)+\log  \det \Bigg(I_{N_1}+{\rho }_{11}H_{11}H^{\dagger }_{11}-\nonumber\\
&&{\rho }_{11}{\rho }_{12}H_{11}H^{\dagger }_{12}
(I_{N_2}+{\rho }_{12}H_{12}H^{\dagger }_{12})^{-1}
H_{12}H^{\dagger }_{11}\Bigg)\nonumber\\
&=& I_3(0),
\end{eqnarray}
where $(a)$ follows since the first expression is outer-bounded as in  (\ref{m2}) and the outer-bound for the second expression can be shown on similar lines as  (\ref{m1}).

Proof of (\ref{r4}) is similar to (\ref{r3}) by exchanging $1$ and $2$ in the indices.

For the proof of (\ref{r5}) we have
\begin{eqnarray}
I_5(Q)&=& \log \det  \left(I_{N_2}+{\rho }_{22}H_{22}H^{\dagger }_{22}+{\rho }_{12}H_{12}H^{\dagger }_{12}+\sqrt{{\rho }_{22}{\rho }_{12}}H_{22}Q^{\dagger}H^{\dagger }_{12}+\sqrt{{\rho }_{22}{\rho }_{12}}H_{12}QH^{\dagger }_{22}\right)\nonumber\\
&&+\log\det  \Bigg(I_{N_1}+{\rho }_{11}H_{11}H^{\dagger }_{11}-\left[ \begin{array}{cc}
\sqrt{{\rho }_{11}{\rho }_{12}}H_{11}H^{\dagger }_{12} & \sqrt{{\rho }_{11}}H_{11}Q \end{array}
\right]\nonumber\\
&&{\left[ \begin{array}{cc}
I_{N_2}+{\rho }_{12}H_{12}H^{\dagger }_{12} & \sqrt{{\rho }_{12}}H_{12}Q \\
\sqrt{{\rho }_{12}}Q^{\dagger }H^{\dagger }_{12} & I_{M_2} \end{array}
\right]}^{-1}\left[ \begin{array}{c}
\sqrt{{\rho }_{11}{\rho }_{12}}H_{12}H^{\dagger }_{11} \\
\sqrt{{\rho }_{11}}Q^{\dagger }H^{\dagger }_{11} \end{array}
\right]\Bigg)\nonumber\\
&\stackrel{(a)}{\le}& {\log  {\det  (I_{N_2}+{\rho }_{22}H_{22}H^{\dagger }_{22}+{\rho }_{12}H_{12}H^{\dagger }_{12})}}+\log  \det \Bigg(I_{N_1}+{\rho }_{11}H_{11}H^{\dagger }_{11}-\nonumber\\
&&{\rho }_{11}{\rho }_{12}H_{11}H^{\dagger }_{12}
(I_{N_2}+{\rho }_{12}H_{12}H^{\dagger }_{12})^{-1}
H_{12}H^{\dagger }_{11}\Bigg)+N_2\nonumber\\
&=& I_5(0)+N_2,
\end{eqnarray}
where $(a)$ follows from (\ref{m3}) and using similar steps as in  (\ref{m1}).

Proof of (\ref{r6}) is similar to (\ref{r5}) by exchanging $1$ and $2$ in the indices.
\end{proof}

\section{Proof of Reciprocity in $\mathcal{R}_o(0)$}
\label{Appendix2}

In this section, we prove that replacing $\overline{H}$ and $\overline{\rho}$ by ${\overline{H}}^R$ and ${\overline{\rho}}^R$, respectively, and interchanging $M$ and $N$ for antennas at the nodes gives the same expressions in $\mathcal{R}_o(0)$.

We shall prove this in two steps. In the first step we shall prove
\begin{eqnarray}
\mathcal{R}_o(\overline{H},\overline{\rho})=\mathcal{R}_o({\overline{H}}^{'},{\overline{\rho}}^{R}),
\end{eqnarray}
where ${\overline{H}}^{'}=\{H^\dagger_{11},H^\dagger_{21},H^\dagger_{12},H^\dagger_{22}\}$ and in the second step we shall prove that
\begin{eqnarray}
\mathcal{R}_o({\overline{H}}^{'},\overline{\rho}^{R})=\mathcal{R}_o({\overline{H}}^{R},\overline{\rho}^{R}).
\end{eqnarray}
Clearly, the above two equalities prove the lemma.

Let the right-hand sides of the six expressions in the definition of $\mathcal{R}_0(0)$ in \eqref{ro0eq1}-\eqref{ro0eql} be labeled as $I_1$, $I_2$, $I_3$, $I_4$, $I_5$, and $I_6$ respectively.

{\bf First Step:} In this step, we prove that:
\begin{eqnarray}
I_1&=&I_3' \label{R1},\\
I_2&=&I_4' \label{R2},\\
I_3&=&I_1' \label{R3},\\
I_4&=&I_2' \label{R4},\\
I_5&=&I_6' \label{R5},\\
I_6&=&I_5' \label{R6},
\end{eqnarray}
where $I_k'$ is obtained from $I_k$ by interchanging $M$ and $N$, replacing $H_{ij}$ with $H_{ji}^\dagger$, and replacing $\rho_{ij}$ with $\rho_{ji}$.

Since $I_1$ and $I_3$ are both bounds for $R_1$,  $I_2$ and $I_4$ are both bounds for $R_2$, and $I_5$ and $I_6$ are both bounds for $R_1+R_2$, \eqref{R1}-\eqref{R6} will prove that $\mathcal{R}_o(\overline{H},\overline{\rho})=\mathcal{R}_o(\overline{H}^{'},\overline{\rho}^{R})$.

We start with proving (\ref{R1}). For simplicity we define $K\triangleq (I_{N_1}+{\rho }_{21}H_{21}H^{\dagger }_{21})^{-1}$, $K^{'}\triangleq (I_{M_1}+{\rho }_{21}H^{\dagger }_{21}H_{21})^{-1}$, and $L\triangleq {\rho }_{11}H_{11}H^{\dagger }_{11}$. We get
\begin{eqnarray}
I_1\label{rec1}
&=&{\log  {\det  (I_{N_1}+{\rho }_{11}H_{11}H^{\dagger }_{11}+{\rho }_{21}H_{21}H^{\dagger }_{21})}}\\
&=&{\log  {\det  (I_{N_1}+{\rho }_{21}H_{21}H^{\dagger }_{21})}}+
{\log  {\det  (I_{N_1}+K{\rho }_{11}H_{11}H^{\dagger }_{11})}}\nonumber\\
&=&{\log  {\det  (K^{-1})}}+
{\log  {\det  (I_{N_1}+KL)}}\nonumber\\
&\stackrel{(a)}{=}&{\log  {\det  (K^{-1})}}+
{\log  {\det  (I_{N_1}+LK)}}\nonumber\\
&=&{\log  {\det  (K^{-1})}}+
{\log  {\det  (I_{N_1}+LKI)}}\nonumber\\
&=&{\log  {\det  (K^{-1})}}+
{\log  {\det  (I_{N_1}+LK(I+{\rho }_{21}H_{21}H^{\dagger }_{21}-{\rho }_{21}H_{21}H^{\dagger }_{21}))}}\nonumber\\
&=&{\log  {\det  (K^{-1})}}+
{\log  {\det  (I_{N_1}+LK(I+{\rho }_{21}H_{21}H^{\dagger }_{21}-{\rho }_{21}H_{21}(I)H^{\dagger }_{21}))}}\nonumber\\
&=&{\log  {\det  (K^{-1})}}+
{\log  {\det  (I_{N_1}+LK(I+{\rho }_{21}H_{21}H^{\dagger }_{21}-{\rho }_{21}H_{21}({K^{'}}^{-1}K^{'})H^{\dagger }_{21}))}}\nonumber\\
&=&{\log  {\det  (K^{-1})}}+ \nonumber\\
&&{\log  {\det  (I_{N_1}+LK(I+{\rho }_{21}H_{21}H^{\dagger }_{21}-{\rho }_{21}H_{21}((I+{\rho }_{21}H^{\dagger }_{21}H_{21})K^{'})H^{\dagger }_{21}))}}\nonumber\\
&=&{\log  {\det  (K^{-1})}}+ \nonumber\\
&&{\log  {\det  (I_{N_1}+LK((I+{\rho }_{21}H_{21}H^{\dagger }_{21})-{\rho }_{21}((I+{\rho }_{21}H_{21}H^{\dagger }_{21})H_{21}K^{'})H^{\dagger }_{21}))}}\nonumber\\
&=&{\log  {\det  (K^{-1})}}+
{\log  {\det  (I_{N_1}+LK(K^{-1}-{\rho }_{21}K^{-1}H_{21}K^{'})H^{\dagger }_{21}))}}\nonumber\\
&=&{\log  {\det  (I+{\rho }_{21}H^{\dagger }_{21}H_{21})}}+
{\log  {\det  (I_{N_1}+L(I-{\rho }_{21}H_{21}K^{'})H^{\dagger }_{21}))}}\nonumber\\
&\stackrel{(b)}{=}&{\log  {\det  (I+{\rho }_{21}H^{\dagger }_{21}H_{21})}}+
{\log  {\det  (I+L-L{\rho }_{21}H_{21}
K^{'}H^{\dagger }_{21})}}\nonumber\\
&=&{\log  {\det  (I+{\rho }_{21}H^{\dagger }_{21}H_{21})}}+\nonumber\\
&&{\log  {\det  (I+{\rho }_{11}H_{11}H^{\dagger }_{11}-{\rho }_{11}{\rho }_{21}H_{11}H^{\dagger }_{11}H_{21}
(I+{\rho }_{21}H^{\dagger }_{21}H_{21})^{-1}H^{\dagger }_{21})}}\nonumber\\
&\stackrel{(c)}{=}&{\log  {\det  (I+{\rho }_{21}H^{\dagger }_{21}H_{21})}}+\nonumber\\
&&{\log  {\det  (I+{\rho }_{11}H^{\dagger }_{11}H_{11}-{\rho }_{11}{\rho }_{21}H^{\dagger }_{11}H_{21}
(I+{\rho }_{21}H^{\dagger }_{21}H_{21})^{-1}H^{\dagger }_{21}H_{11})}}\label{RR}\\
&=&I_3',
\end{eqnarray}
where $(a)$, $(b)$ and $(c)$ follow from Sylvester's determinant theorem \cite{det2}. (\ref{R2}) can be proved similarly due to symmetry. In addition, (\ref{R3}) and (\ref{R4}) can be obtained in the reverse direction similarly.

We move toward the proof of (\ref{R5}).
We should prove
\begin{eqnarray}
I_5&=&I_6',
\end{eqnarray}
where
\begin{eqnarray}
I_5
&=&{\log  {\det  (I_{N_2}+{\rho }_{22}H_{22}H^{\dagger }_{22}+{\rho }_{12}H_{12}H^{\dagger }_{12})}}+\nonumber\\
&&{\log  {\det  (I_{N_{1}}+{\rho }_{11}H_{11}H^{\dagger }_{11}-{\rho }_{11}{\rho }_{12}H_{11}H^{\dagger }_{12}
(I_{N_2}+{\rho }_{12}H_{12}H^{\dagger }_{12})^{-1}H_{12}H^{\dagger }_{11})}},
\end{eqnarray}
and
\begin{eqnarray}
I_6'
&=&{\log  {\det  (I_{M_1}+{\rho }_{11}H^{\dagger }_{11}H_{11}+{\rho }_{12}H^{\dagger }_{12}H_{12})}}+\nonumber\\
&&{\log  {\det  (I_{M{2}}+{\rho }_{22}H^{\dagger }_{22}H_{22}-{\rho }_{22}{\rho }_{12}H^{\dagger }_{22}H_{12}
(I_{M_{1}}+{\rho }_{12}H^{\dagger }_{12}H_{12})^{-1}H^{\dagger }_{12}H_{22})}}.
\end{eqnarray}
If we define
\begin{eqnarray}
a&\triangleq& {\log  {\det  (I_{N_2}+{\rho }_{22}H_{22}H^{\dagger }_{22}+{\rho }_{12}H_{12}H^{\dagger }_{12})}},\\
b&\triangleq& {\log  {\det  (I_{N_{1}}+{\rho }_{11}H_{11}H^{\dagger }_{11}-{\rho }_{11}{\rho }_{12}H_{11}H^{\dagger }_{12}
(I_{N_2}+{\rho }_{12}H_{12}H^{\dagger }_{12})^{-1}H_{12}H^{\dagger }_{11})}},\\
c&\triangleq& {\log  {\det  (I_{M_1}+{\rho }_{11}H^{\dagger }_{11}H_{11}+{\rho }_{12}H^{\dagger }_{12}H_{12})}},\\
d&\triangleq& {\log  {\det  (I_{M{2}}+{\rho }_{22}H^{\dagger }_{22}H_{22}-{\rho }_{22}{\rho }_{12}H^{\dagger }_{22}H_{12}
(I_{M_{1}}+{\rho }_{12}H^{\dagger }_{12}H_{12})^{-1}H^{\dagger }_{12}H_{22})}},
\end{eqnarray}
then, it is sufficient to prove $a+b=c+d$ or $a-d=c-b$.

Since \eqref{rec1} is equal to \eqref{RR}, we have
\begin{eqnarray}
&&{\log  {\det  (I_{N_1}+{\rho }_{11}H_{11}H^{\dagger }_{11}+{\rho }_{21}H_{21}H^{\dagger }_{21})}}-\nonumber\\
&&{\log  {\det  (I+{\rho }_{11}H^{\dagger }_{11}H_{11}-{\rho }_{11}{\rho }_{21}H^{\dagger }_{11}H_{21}
(I+{\rho }_{21}H^{\dagger }_{21}H_{21})^{-1}H^{\dagger }_{21}H_{11})}}=\nonumber\\
&&{\log  {\det  (I+{\rho }_{21}H^{\dagger }_{21}H_{21})}}.
\end{eqnarray}
Using similar method, we can see that
\begin{eqnarray}
a-d={\log  {\det  (I_{M_1}+{\rho }_{12}H^{\dagger }_{12}H_{12})}},
\end{eqnarray}
and
\begin{eqnarray}
c-b={\log  {\det  (I_{N_2}+{\rho }_{12}H_{12}H^{\dagger }_{12})}},
\end{eqnarray}
which according to Sylvester's determinant theorem \cite{det2}  are equal. This proves the $I_5=I_6'$.

(\ref{R6}) can be proved similar to the proof of \eqref{R5} due to symmetry.

{\bf Second Step:} It can be proved with a similar discussion as in Appendix E of \cite{Varanasi}. A brief sketch of the proof is given below for completeness.

Suppose $S$ is a p.s.d. matrix and $S^*$ represents its complex conjugate, i.e., the matrix obtained by replacing all its entries by the corresponding complex conjugates. Then, it is easy to see that
\begin{eqnarray}
{\log  {\det  (I+S)}}={\log  {\det  (I+S^*)}}.
\end{eqnarray}
However, note that all the terms in the different bounds of $\mathcal{R}_o(0)$ are of the form of ${\log  {\det  (I+S)}}$. This in turn proves that if we replace all the channel matrices of a two-user MIMO IC with feedback by their complex conjugates the set of upper bounds remain the same. From this fact, it easily follows that
\begin{eqnarray}
\mathcal{R}_o(\overline{H}^{'},\overline{\rho}^{R})=\mathcal{R}_o(\overline{H}^{R},\overline{\rho}^{R}).
\end{eqnarray}

\section{Proof of Theorem \ref{thm_gdof}}
\label{Appendix3}

In this section, we will find the limit of $\mathcal{R}_o(0)/\log \mathsf{SNR}$ as $\mathsf{SNR}\to \infty$ to get the result as in the statement of the Theorem \ref{thm_gdof} when ${{\rho}_{ij}}\sim{\mathsf{SNR}^{\alpha_{ij}}}$ (${{\rho}_{ij}}\sim{\mathsf{SNR}^{\alpha_{ij}}}$ represents that $ \lim_{\mathsf{SNR}\to\infty} \frac{\log {\rho}_{ij}}{\log \mathsf{SNR}} = \alpha_{ij}$). This follows from Theorem \ref{outer_inner_capacity_reciprocal} since the capacity region is inner and outer- bounded by  $\mathcal{R}_o(0)$ with constant gaps which would vanish for the degrees of freedom.

Before going over each of the terms in $\mathcal{R}_o(0)$ and finding its high SNR limit, we first give some Lemmas that will be used for the proof of the Theorem.

%At first we find an equivalent for each of the sentences in our bounds

\begin{lemma}[\cite{Varanasi}] Let $H_{ij}\in \mathbb{C}^{N_j\times M_i}$ be a full rank channel matrix. Then, the following holds
\begin{eqnarray}\label{dof3}
{\log  {\det  \left(I_{N_j}+{\rho }_{ij}H_{ij}H^{\dagger }_{ij}\right)\ }\ } = {\alpha }_{ij}{\min  \left(M_i,N_j\right)\ }{\log \mathsf{SNR}\ }+o({\log \mathsf{SNR}}),
\end{eqnarray}
where ${{\rho}_{ij}}\sim{\mathsf{SNR}^{\alpha_{ij}}}$.
\end{lemma}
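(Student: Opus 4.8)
The plan is to reduce the log-determinant to a sum over the eigenvalues of $H_{ij}H^{\dagger}_{ij}$ and then isolate the nonzero ones using the full-rank hypothesis. First I would note that $H_{ij}H^{\dagger}_{ij}$ is an $N_j\times N_j$ Hermitian positive-semidefinite matrix, so it is unitarily diagonalizable with nonnegative eigenvalues $\mu_1,\dots,\mu_{N_j}\ge 0$, giving $\det(I_{N_j}+\rho_{ij}H_{ij}H^{\dagger}_{ij})=\prod_{k=1}^{N_j}(1+\rho_{ij}\mu_k)$. Since $H_{ij}$ is full rank, $\operatorname{rank}(H_{ij}H^{\dagger}_{ij})=\operatorname{rank}(H_{ij})=\min(M_i,N_j)=:r$, so exactly $r$ of the $\mu_k$ are strictly positive --- call them $\lambda_1,\dots,\lambda_r>0$ --- and the remaining $N_j-r$ vanish. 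The vanishing eigenvalues contribute $\log(1+\rho_{ij}\cdot 0)=0$, hence $\log\det(I_{N_j}+\rho_{ij}H_{ij}H^{\dagger}_{ij})=\sum_{k=1}^{r}\log(1+\rho_{ij}\lambda_k)$.

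Next I would extract the leading behavior of each surviving term in the high-$\mathsf{SNR}$ limit. For each fixed $\lambda_k>0$, as $\rho_{ij}\to\infty$ one has $\log(1+\rho_{ij}\lambda_k)=\log\rho_{ij}+\log\lambda_k+\log(1+(\rho_{ij}\lambda_k)^{-1})=\log\rho_{ij}+\log\lambda_k+o(1)$. Summing over the $r$ nonzero eigenvalues yields $\log\det(I_{N_j}+\rho_{ij}H_{ij}H^{\dagger}_{ij})=r\log\rho_{ij}+\sum_{k=1}^{r}\log\lambda_k+o(1)$. The additive constant $\sum_{k=1}^{r}\log\lambda_k$ depends only on the fixed channel matrix $H_{ij}$ and is therefore $O(1)=o(\log\mathsf{SNR})$. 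Finally I would substitute the scaling hypothesis, which in the form $\lim_{\mathsf{SNR}\to\infty}\frac{\log\rho_{ij}}{\log\mathsf{SNR}}=\alpha_{ij}$ reads $\log\rho_{ij}=\alpha_{ij}\log\mathsf{SNR}+o(\log\mathsf{SNR})$, to conclude $\log\det(I_{N_j}+\rho_{ij}H_{ij}H^{\dagger}_{ij})=\alpha_{ij}\min(M_i,N_j)\log\mathsf{SNR}+o(\log\mathsf{SNR})$, as claimed.

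I expect the only delicate point to be the eigenvalue count: the full-rank hypothesis is essential to guarantee that precisely $\min(M_i,N_j)$ eigenvalues are bounded away from zero and thus each contribute one factor of $\log\rho_{ij}$, while the remaining eigenvalues are exactly zero and contribute nothing to the scaling. The degenerate case $\alpha_{ij}=0$ requires only a brief remark: if $\rho_{ij}$ stays bounded the left-hand side is $O(1)=o(\log\mathsf{SNR})$ and matches the right-hand side, and if $\rho_{ij}\to\infty$ with $\alpha_{ij}=0$ then $r\log\rho_{ij}=o(\log\mathsf{SNR})$ by hypothesis, so the same identity applies. Everything else is routine asymptotics of $\log(1+x)$, so I would not belabor those estimates.
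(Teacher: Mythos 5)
Your proof is correct. The paper itself offers no argument for this lemma---it is imported verbatim from \cite{Varanasi}---but the eigenvalue decomposition you give is the standard route and is complete: writing $\log\det(I_{N_j}+\rho_{ij}H_{ij}H_{ij}^{\dagger})=\sum_{k=1}^{r}\log(1+\rho_{ij}\lambda_k)$ with $r=\min(M_i,N_j)$ nonzero eigenvalues (this is exactly where the full-rank hypothesis enters), extracting $\log\rho_{ij}$ from each term, and substituting $\log\rho_{ij}=\alpha_{ij}\log\mathsf{SNR}+o(\log\mathsf{SNR})$ is all that is needed, and your handling of the degenerate case $\alpha_{ij}=0$ is a sensible extra precaution given that the paper allows $\alpha_{ij}\in\mathbb{R}^{+}$.
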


\begin{lemma}[\cite{Varanasi}]
Let $H_{ii}\in \mathbb{C}^{N_i\times M_i}$ and $H_{ji}\in \mathbb{C}^{N_i\times M_j}$ be two full rank channel matrices such that $[H_{ii} H_{ji}]$ is also full rank. Then, the following holds
\begin{eqnarray}
{\log  {\det  (I_{N_i}+{\rho }_{ii}H_{ii}H^{\dagger }_{ii}+{\rho }_{ji}H_{ji}H^{\dagger }_{ji})\ }\ } =f(N_i,\left({\alpha }_{ii},M_i\right),\left({\alpha }_{ji},M_j\right)){\log \mathsf{SNR}\ }+o({\log \mathsf{SNR}})
\end{eqnarray}\label{dof2}
where $f$ is defined in \eqref{f} and ${{\rho}_{ij}}\sim{\mathsf{SNR}^{\alpha_{ij}}}$.
\end{lemma}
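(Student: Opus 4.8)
The plan is to write the matrix inside the determinant as $I_{N_i}+GG^{\dagger}$ with the stacked $N_i\times(M_i+M_j)$ matrix $G\triangleq[\sqrt{\rho_{ii}}\,H_{ii}\ \ \sqrt{\rho_{ji}}\,H_{ji}]$, and to determine the leading $\log\mathsf{SNR}$ exponent of the scalar $\det(I_{N_i}+GG^{\dagger})$. Once I show that this determinant is a finite sum of nonnegative terms whose individual orders in $\mathsf{SNR}$ I can track, the logarithm of the sum will equal the largest of these orders times $\log\mathsf{SNR}$, up to $o(\log\mathsf{SNR})$, which reduces the whole lemma to a finite maximization.

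The first step is to apply the Cauchy--Binet identity, which yields the finite expansion
\begin{equation}
\det(I_{N_i}+G G^{\dagger}) = \sum_{\substack{S\subseteq\{1,\dots,M_i+M_j\}\\ |S|\le N_i}} \det\!\big((G_{:,S})^{\dagger} G_{:,S}\big),
\end{equation}
where $G_{:,S}$ collects the columns of $G$ indexed by $S$ and the empty set contributes $1$. Every summand is a Gram determinant, hence nonnegative, and is strictly positive exactly when the selected columns are linearly independent. If $S$ contains $a$ columns from the $H_{ii}$ block and $b$ columns from the $H_{ji}$ block, then by column multilinearity $\det((G_{:,S})^{\dagger}G_{:,S})$ is of order $\rho_{ii}^{a}\rho_{ji}^{b}\sim \mathsf{SNR}^{a\alpha_{ii}+b\alpha_{ji}}$, and it vanishes identically unless $a\le M_i$, $b\le M_j$ and $a+b\le N_i$ (the last because $G_{:,S}$ has rank at most $N_i$). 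Since the sum has a fixed, SNR-independent number of nonnegative terms, its leading exponent equals the largest exponent among them, so the exponent of $\det(I_{N_i}+GG^{\dagger})$ is $\max\{a\alpha_{ii}+b\alpha_{ji} : a\le M_i,\ b\le M_j,\ a+b\le N_i\}$. Assuming without loss of generality $\alpha_{ii}\ge\alpha_{ji}$ (the other case is symmetric and produces the second branch of $f$), this integer program is solved greedily by loading as many high-level columns as possible, $a^{\star}=\min(M_i,N_i)$, then filling the remaining receive dimensions, $b^{\star}=\min((N_i-M_i)^{+},M_j)$, giving the value $\min(M_i,N_i)\alpha_{ii}+\min((N_i-M_i)^{+},M_j)\alpha_{ji}=f(N_i,(\alpha_{ii},M_i),(\alpha_{ji},M_j))$ per the definition \eqref{f}.

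The upper bound, $\log\det(\cdot)\le f\log\mathsf{SNR}+o(\log\mathsf{SNR})$, is then immediate from the expansion, as each term grows no faster than $\mathsf{SNR}^{a\alpha_{ii}+b\alpha_{ji}}$ over feasible $(a,b)$ and there are finitely many terms. The step I expect to be the main obstacle is the matching lower bound: I must exhibit one admissible subset attaining the greedy optimum $(a^{\star},b^{\star})$ whose columns are linearly independent, so that the corresponding Gram determinant is genuinely of order $\mathsf{SNR}^{f}$ rather than accidentally zero. This is where the full-rank hypotheses enter. Because $H_{ii}$ is full rank I can select $a^{\star}=\min(M_i,N_i)$ independent columns from the first block; and when $N_i\ge M_i$, because $[H_{ii}\ H_{ji}]$ is full rank, the image of the second block modulo the column space of $H_{ii}$ has dimension $\operatorname{rank}[H_{ii}\ H_{ji}]-\operatorname{rank}(H_{ii})=\min((N_i-M_i)^{+},M_j)=b^{\star}$, supplying exactly $b^{\star}$ further independent columns, while for $N_i<M_i$ one has $b^{\star}=0$ and only needs the $a^{\star}=N_i$ independent columns already produced. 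This also matches the paper's standing observation that the channels are full rank with probability one, so the maximizing minor is nonzero. Carefully discharging this basis-extension claim across the regimes of $N_i$ versus $M_i,M_j$ is the only genuinely delicate bookkeeping in the argument.
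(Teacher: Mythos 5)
Your proof is correct, but note that the paper itself offers no proof of this lemma at all: it is imported verbatim from \cite{Varanasi}, and the related scaling results that the paper does prove (e.g.\ Lemma \ref{lemma-dof} in Appendix \ref{Appendix3}) are handled by a quite different technique, namely an SVD of one channel matrix followed by explicit manipulation of the resulting diagonal blocks, together with the chain-rule factorization $\det(I+A+B)=\det(I+B)\det\bigl(I+(I+B)^{-1}A\bigr)$. Your route --- writing the matrix as $I_{N_i}+GG^{\dagger}$ with $G=[\sqrt{\rho_{ii}}H_{ii}\ \ \sqrt{\rho_{ji}}H_{ji}]$ and expanding $\det(I_{N_i}+GG^{\dagger})$ as the sum of Gram determinants of column subsets via Cauchy--Binet --- is a genuinely different and self-contained argument. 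It buys you two things the SVD route does not give as cleanly: the upper bound is immediate (a fixed number of nonnegative terms, each trivially of order $\rho_{ii}^{a}\rho_{ji}^{b}$ with $(a,b)$ confined to the polytope $a\le M_i$, $b\le M_j$, $a+b\le N_i$), and the appearance of the function $f$ is explained structurally as the value of a greedy integer program rather than emerging from algebraic simplification. Your identification of where the rank hypotheses are actually needed --- only in the lower bound, to certify that the greedy-optimal subset $(a^{\star},b^{\star})$ has a nonvanishing Gram determinant via the basis-extension count $\operatorname{rank}[H_{ii}\ H_{ji}]-\operatorname{rank}(H_{ii})=b^{\star}$ --- is exactly right, and all the case bookkeeping ($N_i\ge M_i$ versus $N_i<M_i$, and the symmetric branch $\alpha_{ji}>\alpha_{ii}$ of $f$) checks out. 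The only cosmetic caveat is that $\rho_{ij}\sim\mathsf{SNR}^{\alpha_{ij}}$ means $\rho_{ij}=\mathsf{SNR}^{\alpha_{ij}+o(1)}$ rather than exact proportionality, but this only perturbs each exponent by $o(1)$ and is absorbed into the $o(\log\mathsf{SNR})$ term, so the conclusion stands.
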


\begin{lemma} Let $\Sigma\in \mathbb{C}^{N\times M}$ be a diagonal matrix with elements $\sigma_1,...,\sigma_m$ where $m=min(M,N)$ and $\Lambda\in \mathbb{C}^{m\times m}$ be a diagonal matrix with elements $|{\sigma_1|}^2,...,|{\sigma_m}|^2$, then
\begin{eqnarray}\label{diagonal}
&&{\Sigma }^{\dagger }\left[ \begin{array}{cc}
{(I_{m}+{\Lambda })}^{-1} & 0 \\
0 & I_{{(N-M)}^+} \end{array}
\right]{\Sigma }\nonumber\\
&=&\left[ \begin{array}{cc}
I_{m}-{(I_{m}+{\Lambda })}^{-1} & 0 \\
0 & 0_{(M-N)^+} \end{array}
\right].
\end{eqnarray}
\end{lemma}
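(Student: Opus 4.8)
The plan is to prove the identity by a direct block computation, splitting into the two cases $N\ge M$ and $N<M$ according to which of $(N-M)^+$ and $(M-N)^+$ vanishes. The algebraic heart of the argument is the elementary scalar identity $\tfrac{|\sigma_i|^2}{1+|\sigma_i|^2}=1-\tfrac{1}{1+|\sigma_i|^2}$, and essentially everything else is bookkeeping of block dimensions.

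First I would record the diagonal computation that is common to both cases. Writing $D\triangleq\mathrm{diag}(\sigma_1,\dots,\sigma_m)\in\mathbb{C}^{m\times m}$, we have $\Lambda=D^\dagger D=\mathrm{diag}(|\sigma_1|^2,\dots,|\sigma_m|^2)$, so that $(I_m+\Lambda)^{-1}=\mathrm{diag}\big(\tfrac{1}{1+|\sigma_i|^2}\big)$ and hence
\[
D^\dagger(I_m+\Lambda)^{-1}D
=\mathrm{diag}\Big(\tfrac{|\sigma_i|^2}{1+|\sigma_i|^2}\Big)
=I_m-(I_m+\Lambda)^{-1}.
\]
This equality is the only nontrivial content of the lemma; the remaining work is to show that the off-diagonal padding contributes nothing.

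Next I would treat the case $N\ge M$, where $m=M$, $(N-M)^+=N-M$, and $(M-N)^+=0$. A rectangular $N\times M$ diagonal matrix then has the form $\Sigma=\left[\begin{smallmatrix}D\\0\end{smallmatrix}\right]$ with $0$ the $(N-M)\times M$ zero block, so $\Sigma^\dagger=\left[\,D^\dagger\ \ 0\,\right]$. Multiplying out,
\[
\Sigma^\dagger
\begin{bmatrix} (I_m+\Lambda)^{-1} & 0\\ 0 & I_{N-M}\end{bmatrix}
\Sigma
= D^\dagger(I_m+\Lambda)^{-1}D,
\]
because the zero rows and columns of $\Sigma$ annihilate the $I_{N-M}$ block entirely. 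By the diagonal computation this equals $I_m-(I_m+\Lambda)^{-1}$, which is exactly the right-hand side since here $0_{(M-N)^+}$ is the empty block. Symmetrically, for $N<M$ we have $m=N$, $(N-M)^+=0$, $(M-N)^+=M-N$, and $\Sigma=\left[\,D\ \ 0\,\right]$ with $0$ the $N\times(M-N)$ zero block, so the middle factor is simply $(I_m+\Lambda)^{-1}$ and
\[
\Sigma^\dagger(I_m+\Lambda)^{-1}\Sigma
=\begin{bmatrix} D^\dagger(I_m+\Lambda)^{-1}D & 0\\ 0 & 0_{M-N}\end{bmatrix},
\]
which again matches the right-hand side with its trailing $0_{(M-N)^+}$ block.

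There is no genuine obstacle here; the computation is routine. The only thing to be careful about is keeping the $(\cdot)^+$ conventions consistent across the two cases, in particular noting that the identity block $I_{(N-M)^+}$ in the middle factor never contributes (it is always padded against the zero rows of $\Sigma$) and that the trailing zero block $0_{(M-N)^+}$ on the right-hand side arises precisely from the zero columns of $\Sigma$ when $N<M$.
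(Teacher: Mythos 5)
Your proof is correct and follows essentially the same route as the paper's: a direct block computation split into the cases $N\ge M$ and $N<M$, with the scalar identity $\tfrac{|\sigma_i|^2}{1+|\sigma_i|^2}=1-\tfrac{1}{1+|\sigma_i|^2}$ doing all the work. The only difference is cosmetic — you factor out the square diagonal block $D$ once rather than writing the full padded matrices in each case.
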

\begin{proof}
We will split the proof in two cases, depending on whether $M\ge N$ or $M<N$.

\noindent {\bf Case 1 - $M\ge N$: } In this case, we have
\begin{eqnarray}
&&{\Sigma }^{\dagger }\left[ \begin{array}{cc}
{(I_{m}+{\Lambda })}^{-1} & 0 \\
0 & I_{{(N-M)}^+} \end{array}
\right]{\Sigma }\nonumber\\
&=&\left[ \begin{array}{ccc}
\sigma_1^* & 0 & 0 \\
0 & \ddots & 0 \\
0 & 0 & \sigma_m^* \\
0 & 0 & 0 \end{array}
\right]
\left[ \begin{array}{ccc}
\frac{1}{1+|\sigma_1|^2} & 0 & 0 \\
0 & \ddots & 0 \\
0 & 0 & \frac{1}{1+|\sigma_m|^2} \end{array}
\right]
\left[ \begin{array}{cccc}
\sigma_1 & 0 & 0 & 0 \\
0 & \ddots & 0 &0 \\
0 & 0 & \sigma_m & 0 \end{array}
\right]\nonumber\\
&=&
\left[ \begin{array}{cccc}
\frac{|\sigma_1|^2}{1+|\sigma_1|^2} & 0 & 0 & 0 \\
0 & \ddots & 0 &0 \\
0 & 0 & \frac{|\sigma_m|^2}{1+|\sigma_m|^2} & 0 \\
0 & 0 & 0 & 0_{(M-N)^+} \end{array}
\right]\nonumber\\
&=&
\left[ \begin{array}{cccc}
1-\frac{1}{1+|\sigma_1|^2} & 0 & 0 & 0 \\
0 & \ddots & 0 &0 \\
0 & 0 & 1-\frac{1}{1+|\sigma_m|^2} & 0 \\
0 & 0 & 0 & 0_{(M-N)^+} \end{array}
\right]\nonumber\\
&=&\left[ \begin{array}{cc}
I_{m}-{(I_{m}+{\Lambda })}^{-1} & 0 \\
0 & 0_{(M-N)^+} \end{array}
\right].
\end{eqnarray}

\noindent {\bf Case 2 - $M <N$: } In this case, we have
\begin{eqnarray}
&&{\Sigma }^{\dagger }\left[ \begin{array}{cc}
{(I_{m}+{\Lambda })}^{-1} & 0 \\
0 & I_{{(N-M)}^+} \end{array}
\right]{\Sigma }\nonumber\\
&=&\left[ \begin{array}{cccc}
\sigma_1^* & 0 & 0 & 0 \\
0 & \ddots & 0 &0 \\
0 & 0 & \sigma_m^* & 0 \end{array}
\right]
\left[ \begin{array}{cccc}
\frac{1}{1+|\sigma_1|^2} & 0 & 0 & 0 \\
0 & \ddots & 0 & 0 \\
0 & 0 & \frac{1}{1+|\sigma_m|^2} & 0 \\
0 & 0 & 0 & I_{(N-M)^+} \end{array}
\right]
\left[ \begin{array}{ccc}
\sigma_1 & 0 & 0 \\
0 & \ddots & 0 \\
0 & 0 & \sigma_m \\
0 & 0 & 0 \end{array}
\right]\nonumber\\
&=&
\left[ \begin{array}{cccc}
\frac{|\sigma_1|^2}{1+|\sigma_1|^2} & 0 & 0 & 0 \\
0 & \ddots & 0 &0 \\
0 & 0 & \frac{|\sigma_m|^2}{1+|\sigma_m|^2} & 0 \\
0 & 0 & 0 & 0_{(M-N)^+} \end{array}
\right]\nonumber\\
&=&
\left[ \begin{array}{cccc}
1-\frac{1}{1+|\sigma_1|^2} & 0 & 0 & 0 \\
0 & \ddots & 0 &0 \\
0 & 0 & 1-\frac{1}{1+|\sigma_m|^2} & 0 \\
0 & 0 & 0 & 0_{(M-N)^+} \end{array}
\right]\nonumber\\
&=&\left[ \begin{array}{cc}
I_{m}-{(I_{m}+{\Lambda })}^{-1} & 0 \\
0 & 0_{(M-N)^+} \end{array}
\right].
\end{eqnarray}
\end{proof}

\begin{lemma}\label{lemma-dof}
Let $H_{ii}\in \mathbb{C}^{N_i\times M_i}$ and $H_{ij}\in \mathbb{C}^{N_i\times M_j}$ be two channel matrices with each entry independently chosen from $\mathsf{CN}(0,1)$. Then, the following holds with probability $1$ (over the randomness of channel matrices).
\begin{eqnarray}
&&\log \det (I_{N_i}+{\rho }_{ii}H_{ii}H^{\dagger }_{ii}-\sqrt{{\rho }_{ii}{\rho }_{ij}}H_{ii}H^{\dagger }_{ij}{(I_{N_j}+{\rho }_{ij}H_{ij}H^{\dagger }_{ij})}^{-1}
\sqrt{{\rho }_{ii}{\rho }_{ij}}H_{ij}H^{\dagger }_{ii})\nonumber\\
&&=\left[{\alpha }_{ii}{\min  \left({\left(M_i-N_j\right)}^+,N_i\right)\ }+{\left({\alpha }_{ii}-{\alpha }_{ij}\right)}^+\left({\min  \left(M_i,N_i\right)\ }-{\min  \left({\left(M_i-N_j\right)}^+,N_i\right)\ }\right)\right]{\log \mathsf{SNR}\ }\nonumber\\
&&+o({\log  \mathsf{SNR}}).
\end{eqnarray}
where ${{\rho}_{ij}}\sim{\mathsf{SNR}^{\alpha_{ij}}}$.
\end{lemma}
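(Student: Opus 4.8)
The plan is to identify the matrix inside the determinant as a Schur complement, so that the whole quantity can be written as a difference of two log-determinants whose GDoF are already supplied by the two Gaussian-DoF evaluations \eqref{dof3} and \eqref{dof2} above. Write $E$ for the left-hand side of the claim, and recall $H_{ij}\in\mathbb{C}^{N_j\times M_i}$, hence $H_{ij}^\dagger\in\mathbb{C}^{M_i\times N_j}$ and $H_{ii}^\dagger\in\mathbb{C}^{M_i\times N_i}$. First I would introduce
$$M\triangleq\begin{bmatrix} I_{N_j}+\rho_{ij}H_{ij}H_{ij}^\dagger & \sqrt{\rho_{ii}\rho_{ij}}\,H_{ij}H_{ii}^\dagger \\ \sqrt{\rho_{ii}\rho_{ij}}\,H_{ii}H_{ij}^\dagger & I_{N_i}+\rho_{ii}H_{ii}H_{ii}^\dagger\end{bmatrix}=I_{N_i+N_j}+GG^\dagger,\qquad G\triangleq\begin{bmatrix}\sqrt{\rho_{ij}}\,H_{ij}\\ \sqrt{\rho_{ii}}\,H_{ii}\end{bmatrix}.$$
Since $A\triangleq I_{N_j}+\rho_{ij}H_{ij}H_{ij}^\dagger$ is always invertible, applying Lemma \ref{lem_block} to $M$ with this $A$ shows that $\det M=\det A\cdot\det(D-CA^{-1}B)$, where the Schur complement $D-CA^{-1}B$ is exactly the matrix appearing inside the determinant of $E$. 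Therefore $E=\log\det M-\log\det(I_{N_j}+\rho_{ij}H_{ij}H_{ij}^\dagger)$.

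Next I would rewrite $\log\det M$ in a form covered by \eqref{dof2}. By Sylvester's determinant identity \cite{det2}, $\det(I_{N_i+N_j}+GG^\dagger)=\det(I_{M_i}+G^\dagger G)$, and a direct multiplication gives $G^\dagger G=\rho_{ij}H_{ij}^\dagger H_{ij}+\rho_{ii}H_{ii}^\dagger H_{ii}$, so that
$$\log\det M=\log\det\bigl(I_{M_i}+\rho_{ij}H_{ij}^\dagger H_{ij}+\rho_{ii}H_{ii}^\dagger H_{ii}\bigr).$$
I then apply the GDoF identity \eqref{dof2} to this expression, viewing $H_{ij}^\dagger$ and $H_{ii}^\dagger$ as the two channel matrices (of column dimensions $N_j$ and $N_i$); since the conjugate transpose of an i.i.d.\ Gaussian matrix is again i.i.d.\ Gaussian, the required full-rank conditions hold with probability $1$. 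This gives $\log\det M=f(M_i,(\alpha_{ij},N_j),(\alpha_{ii},N_i))\log\mathsf{SNR}+o(\log\mathsf{SNR})$, while the GDoF identity \eqref{dof3} gives $\log\det(I_{N_j}+\rho_{ij}H_{ij}H_{ij}^\dagger)=\alpha_{ij}\min(M_i,N_j)\log\mathsf{SNR}+o(\log\mathsf{SNR})$. Subtracting,
$$E=\Bigl[f\bigl(M_i,(\alpha_{ij},N_j),(\alpha_{ii},N_i)\bigr)-\alpha_{ij}\min(M_i,N_j)\Bigr]\log\mathsf{SNR}+o(\log\mathsf{SNR}).$$

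Finally I would check that the bracketed coefficient equals the claimed expression by unfolding $f$ from \eqref{f} and splitting on the sign of $\alpha_{ii}-\alpha_{ij}$. If $\alpha_{ij}\ge\alpha_{ii}$, the first branch of $f$ equals $\alpha_{ij}\min(M_i,N_j)+\alpha_{ii}\min((M_i-N_j)^+,N_i)$, so the subtraction leaves exactly $\alpha_{ii}\min((M_i-N_j)^+,N_i)$, which matches the claim because $(\alpha_{ii}-\alpha_{ij})^+=0$. If $\alpha_{ij}<\alpha_{ii}$, the second branch equals $\alpha_{ii}\min(M_i,N_i)+\alpha_{ij}\min((M_i-N_i)^+,N_j)$, and matching against the claim reduces to the elementary identity $\min(a,b)+\min((a-b)^+,c)=\min(a,b+c)$, which applied with $(a,b,c)=(M_i,N_i,N_j)$ and $(a,b,c)=(M_i,N_j,N_i)$ yields $\min((M_i-N_i)^+,N_j)-\min(M_i,N_j)=\min((M_i-N_j)^+,N_i)-\min(M_i,N_i)$ and hence the stated form.

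I expect the main obstacle to be this last reconciliation rather than the matrix algebra: the function $f$ is piecewise and one must carry the $(\cdot)^+$ operations correctly through the case split, with the combinatorial $\min$-identity being what collapses the strong-interference branch onto the advertised formula. A more computational alternative, matching the diagonal lemma that precedes the statement, is to take an SVD $H_{ij}=U\Sigma V^\dagger$, use that lemma to obtain $K_{X_{ip}}=V\,\mathrm{diag}\bigl((I+\rho_{ij}\Lambda)^{-1},I_{(M_i-N_j)^+}\bigr)V^\dagger$, absorb the unitary $V$ into the independent Gaussian $H_{ii}$ (which preserves its distribution), and then count the GDoF directly from the per-direction exponents $\alpha_{ii}$ (with multiplicity $(M_i-N_j)^+$) and $\alpha_{ii}-\alpha_{ij}$ (with multiplicity $\min(M_i,N_j)$); that route, however, must handle possibly negative effective exponents through the $(\cdot)^+$ in $f$, a complication the block-matrix argument sidesteps.
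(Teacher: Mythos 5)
Your proposal is correct, and it takes a genuinely different route from the paper. The paper proves this lemma by taking the SVD $H_{ij}=V_{ij}\Sigma_{ij}U_{ij}^\dagger$, using the preceding diagonal-matrix lemma to rewrite $I_{M_i}-\rho_{ij}H_{ij}^\dagger(I+\rho_{ij}H_{ij}H_{ij}^\dagger)^{-1}H_{ij}$ as $U_{ij}\,\mathrm{diag}\bigl((I+\mathsf{SNR}^{\alpha_{ij}}\Lambda_{ij})^{-1},\,I_{(M_i-N_j)^+}\bigr)U_{ij}^\dagger$, splitting $U_{ij}=[U_{ij1}\ U_{ij2}]$ into range and null-space blocks, and then applying the lemma corresponding to your \eqref{dof2} to the transformed matrices $H_{ii}U_{ij2}$ and $H_{ii}U_{ij1}\Lambda_{ij}^{-1/2}$ with effective exponents $\alpha_{ii}$ and $(\alpha_{ii}-\alpha_{ij})^+$ — exactly the ``computational alternative'' you sketch in your closing paragraph, including the complication of sandwiching $\mathsf{SNR}^{-\alpha_{ij}}I_{m_{ij}}+\Lambda_{ij}$ between $\Lambda_{ij}$ and $I+\Lambda_{ij}$ to justify the limit. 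Your main argument instead recognizes the target matrix as the Schur complement of $I_{N_j}+\rho_{ij}H_{ij}H_{ij}^\dagger$ in $I_{N_i+N_j}+GG^\dagger$, invokes Lemma \ref{lem_block} and Sylvester's identity to reduce everything to $\log\det(I_{M_i}+\rho_{ij}H_{ij}^\dagger H_{ij}+\rho_{ii}H_{ii}^\dagger H_{ii})-\log\det(I_{N_j}+\rho_{ij}H_{ij}H_{ij}^\dagger)$, and finishes with the $\min(a,b)+\min((a-b)^+,c)=\min(a,b+c)$ identity; I checked the case split and it closes correctly in both branches (the full-rank hypotheses of \eqref{dof2} hold almost surely for the transposed Gaussian matrices, as you note). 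What your route buys is a cleaner limit: the asymptotics are entirely delegated to the already-established evaluations \eqref{dof3} and \eqref{dof2}, avoiding the paper's somewhat delicate substitution of $\Lambda_{ij}$ for $\mathsf{SNR}^{-\alpha_{ij}}I+\Lambda_{ij}$; the cost is that the combinatorial reconciliation of $f$ is pushed to the end rather than emerging directly. What the paper's route buys is structural insight — it exhibits explicitly which $(M_i-N_j)^+$ directions retain exponent $\alpha_{ii}$ and which $\min(M_i,N_j)$ directions are attenuated to $(\alpha_{ii}-\alpha_{ij})^+$, mirroring the private/public power split used in the achievability.
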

\begin{proof}
Let the singular value decomposition (SVD) of the channel matrix $H_{ij}$ be given by $H_{ij}=V_{ij}$ ${\Sigma }_{ij}U^{\dagger }_{ij}$, where $V_{ij}\in U^{N_j\times N_j}$ and $U_{ij}\in U^{M_i\times M_i}$ are unitary matrices and ${\Sigma }_{ij}\in U^{N_j\times M_i}$ is a rectangular matrix containing the singular values along its diagonal. Using the SVD of the matrix $H_{ij}$ we get
\begin{eqnarray}
&&I_{N_i}+{\rho }_{ii}H_{ii}H^{\dagger }_{ii}-
\sqrt{{\rho }_{ii}{\rho }_{ij}}H_{ii}H^{\dagger }_{ij}(I_{N_j}+{\rho }_{ij}H_{ij}H^{\dagger }_{ij})^{-1}
\sqrt{{\rho }_{ii}{\rho }_{ij}}H_{ij}H^{\dagger }_{ii}\nonumber\\
&=&I_{N_i}+{\rho }_{ii}H_{ii}\left(I_{M_i}-{\rho }_{ij}H^{\dagger }_{ij}{\left(I_{N_j}+{\rho }_{ij}H_{ij}H^{\dagger }_{ij}\right)}^{-1}H_{ij}\right)H^{\dagger }_{ii}\\
&\stackrel{(a)}{=}&I_{N_i}+{\rho }_{ii}H_{ii}(I_{M_i}-{\rho }_{ij}H^{\dagger }_{ij}V_{ij}\left[ \begin{array}{cc}
{(I_{m_{ij}}+{\mathsf{SNR}}^{{\alpha }_{ij}}{\Lambda }_{ij})}^{-1} & 0 \\
0 & I_{{(N_j-M_i)}^+} \end{array}
\right]V^{\dagger }_{ij}H_{ij})H^{\dagger }_{ii}\nonumber\\
&=&I_{N_i}+{\rho }_{ii}H_{ii}(I_{M_i}-{\rho }_{ij}U_{ij}{\Sigma }^{\dagger }_{ij}\left[ \begin{array}{cc}
{(I_{m_{ij}}+{\mathsf{SNR}}^{{\alpha }_{ij}}{\Lambda }_{ij})}^{-1} & 0 \\
0 & I_{{(N_j-M_i)}^+} \end{array}
\right]{\Sigma }_{ij}U^{\dagger }_{ij})H^{\dagger }_{ii}\nonumber\\
&=&I_{N_i}+{\rho }_{ii}H_{ii}U_{ij}(I_{M_i}-{\mathsf{SNR}}^{{\alpha }_{ij}}{\Sigma }^{\dagger }_{ij}\left[ \begin{array}{cc}
{(I_{m_{ij}}+{\mathsf{SNR}}^{{\alpha }_{ij}}{\Lambda }_{ij})}^{-1} & 0 \\
0 & I_{{(N_j-M_i)}^+} \end{array}
\right]{\Sigma }_{ij})U^{\dagger }_{ij}H^{\dagger }_{ii}\nonumber\\
&\stackrel{(b)}{=}&I_{N_i}+{\rho }_{ii}H_{ii}U_{ij}(I_{M_i}-\left[ \begin{array}{cc}
I_{m_{ij}}-{(I_{m_{ij}}+{\mathsf{SNR}}^{{\alpha }_{ij}}{\Lambda }_{ij})}^{-1} & 0 \\
0 & 0_{(M_i-N_j)^+} \end{array}
\right])U^{\dagger }_{ij}H^{\dagger }_{ii}\nonumber\\
&=&I_{N_i}+{\mathsf{SNR}}^{{\alpha }_{ii}}H_{ii}U_{ij}\left[ \begin{array}{cc}
{(I_{m_{ij}}+{\mathsf{SNR}}^{{\alpha }_{ij}}{\Lambda }_{ij})}^{-1} & 0 \\
0 & I_{{(M_i-N_j)}^+} \end{array}
\right]U^{\dagger }_{ij}H^{\dagger }_{ii},
\end{eqnarray}
where $(a)$ results from SVD of the matrix $H_{ij}$ and $(b)$ follows from Lemma \ref{diagonal}.

Let us  decompose $U_{ij}\in U^{M_i\times M_i}$ into two parts,  $U_{ij1}$ and $U_{ij2}$ such that $U_{ij}=[U_{ij1}\ U_{ij2}]$, where $U_{ij1}\in U^{M_i\times {\min  \{M_i,N_j\}\ }}$ and $U_{ij2}\in U^{M_i\times (M_i-N_j)^+}$. Then, we get

\begin{eqnarray}
&&\log \det (I_{N_i}+{\rho }_{ii}H_{ii}H^{\dagger }_{ii}-
\sqrt{{\rho }_{ii}{\rho }_{ij}}H_{ii}H^{\dagger }_{ij}(I_{N_j}+{\rho }_{ij}H_{ij}H^{\dagger }_{ij})^{-1}
\sqrt{{\rho }_{ii}{\rho }_{ij}}H_{ij}H^{\dagger }_{ii})\nonumber\\
&=&\log \det (I_{N_i}+{\mathsf{SNR}}^{{\alpha }_{ii}}H_{ii}(U_{ij}\left[ \begin{array}{cc}
{(I_{m_{ij}}+{\mathsf{SNR}}^{{\alpha }_{ij}}{\Lambda }_{ij})}^{-1} & 0 \\
0 & I_{{(M_i-N_j)}^+} \end{array}
\right]U^{\dagger }_{ij})H^{\dagger }_{ii})\nonumber\\
&=&\log \det (I_{N_i}+H_{ii}({\mathsf{SNR}}^{{\alpha }_{ii}}U_{ij1}{(I_{m_{ij}}+{\mathsf{SNR}}^{{\alpha }_{ij}}{\Lambda }_{ij})}^{-1}U^{\dagger }_{ij1}+{\mathsf{SNR}}^{{\alpha }_{ii}}(U_{ij2}U^{\dagger }_{ij2}))H^{\dagger }_{ii})\nonumber\\
&=&\log \det (I_{N_i}+{\mathsf{SNR}}^{{\alpha }_{ii}}H_{ii}U_{ij2}U^{\dagger }_{ij2}H^{\dagger }_{ii}+{\mathsf{SNR} }^{{{\alpha }_{ii}}-{{{\alpha }_{ij}}}}H_{ii}U_{ij1}{({\mathsf{SNR} }^{{-\alpha }_{ij}}I_{m_{ij}}+{\Lambda }_{ij})}^{-1}U^{\dagger }_{ij1}H^{\dagger }_{ii},
\end{eqnarray}
where $m_{ij}=\min(M_i,N_j)$, $\Lambda_{ij}$ is a diagonal matrix containing the non-zero eigenvalues of $H_{ij}H_{ij}^{\dagger}$.

We note that ${\Lambda }_{ij}$ is invertible and when $\mathsf{SNR}$ is large, we can bound ${\mathsf{SNR} }^{{-\alpha }_{ij}}I_{m_{ij}}+{\Lambda }_{ij}$ from above and below as,  ${\Lambda }_{ij}\preceq{\mathsf{SNR} }^{{-\alpha }_{ij}}I_{m_{ij}}+{\Lambda }_{ij}\preceq I+{\Lambda }_{ij}$. We will only pursue the direction where ${\mathsf{SNR} }^{{-\alpha }_{ij}}I_{m_{ij}}+{\Lambda }_{ij}\succeq {\Lambda }_{ij}$ and can see that both the directions produce the same result and thus replacing the inner and outer bound by equality. In what follows, even though ${\mathsf{SNR} }^{{-\alpha }_{ij}}I_{m_{ij}}+{\Lambda }_{ij}\succeq {\Lambda }_{ij}$, we will substitute ${\mathsf{SNR} }^{{-\alpha }_{ij}}I_{m_{ij}}+{\Lambda }_{ij} = {\Lambda }_{ij}$ since by the inner and outer-bounding approach, it can be seen that the limit will be exactly the same thus not causing any difference in the result. Thus, we have

\begin{eqnarray}
&&\log \det (I_{N_i}+{\rho }_{ii}H_{ii}H^{\dagger }_{ii}-
\sqrt{{\rho }_{ii}{\rho }_{ij}}H_{ii}H^{\dagger }_{ij}(I_{N_j}+{\rho }_{ij}H_{ij}H^{\dagger }_{ij})^{-1}
\sqrt{{\rho }_{ii}{\rho }_{ij}}H_{ij}H^{\dagger }_{ii})\nonumber\\
&=&\log \det (I_{N_i}+{\mathsf{SNR}}^{{\alpha }_{ii}}H_{ii}U_{ij2}U^{\dagger }_{ij2}H^{\dagger }_{ii}+{\mathsf{SNR} }^{({{\alpha }_{ii}}-{{{\alpha }_{ij}}})}H_{ii}U_{ij1}{({\Lambda }_{ij})}^{-1}U^{\dagger }_{ij1}H^{\dagger }_{ii} + o({\log \mathsf{SNR}})\nonumber\\
&\stackrel{(a)}{=}&\log \det (I_{N_i}+{\mathsf{SNR}}^{{\alpha }_{ii}}H_{ii}U_{ij2}U^{\dagger }_{ij2}H^{\dagger }_{ii}+{\mathsf{SNR} }^{({{\alpha }_{ii}}-{{{\alpha }_{ij}}})^+}H_{ii}U_{ij1}{({\Lambda }_{ij})}^{-1}U^{\dagger }_{ij1}H^{\dagger }_{ii}\nonumber\\
&\stackrel{(b)}{=}&f(N_i,({\alpha }_{ii},(M_i-N_j)^+),(({\alpha }_{ii}-{\alpha }_{ij})^+,min(M_i,N_j))){\log \mathsf{SNR}\ }+o({\log \mathsf{SNR}\ })\nonumber\\
&=&[{\alpha }_{ii}{\min  \left({\left(M_i-N_j\right)}^+,N_i\right)\ }+{\left({\alpha }_{ii}-{\alpha }_{ij}\right)}^+\min((N_i-(M_i-N_j)^+)^+,N_j,M_i)
{\log \mathsf{SNR}\ }\nonumber\\
&&+o({\log \mathsf{SNR}\ })\nonumber\\
&\stackrel{(c)}{=}&\left[{\alpha }_{ii}{\min  \left({\left(M_i-N_j\right)}^+,N_i\right)\ }+{\left({\alpha }_{ii}-{\alpha }_{ij}\right)}^+\left({\min  \left(M_i,N_i\right)\ }-{\min  \left({\left(M_i-N_j\right)}^+,N_i\right)\ }\right)\right]{\log \mathsf{SNR}\ }\nonumber\\
&&+o({\log \mathsf{SNR}\ }),
\end{eqnarray}
where $(a)$ follows from the fact that if ${({{\alpha }_{ii}}-{{{\alpha }_{ij}}})}$ is less than zero we have
\begin{eqnarray}
{\mathsf{SNR} }^{({{\alpha }_{ii}}-{{{\alpha }_{ij}}})^+}H_{ii}U_{ij1}{({\Lambda }_{ij})}^{-1}U^{\dagger }_{ij1}H^{\dagger }_{ii}=o({\log \mathsf{SNR}}),
\end{eqnarray}
$(b)$ follows from Lemma \ref{dof2} and that $H_{ii}U_{ij1}$, $H_{ii}U_{ij1}\Lambda_{ij}^{-1/2}$ and $H_{ii}[U_{ij2}\ U_{ij1}\Lambda_{ij}^{-1/2}]$ are all full rank with probability 1; $(c)$ follows from some simple manipulations.
\end{proof}

The rest of the section considers the 6 terms in $\mathcal{R}_o(0)$ in \eqref{ro0eq1}-\eqref{ro0eql}, and finds the GDoF region for the MIMO IC with feedback.

\noindent {\bf First term:} According to the first bound in $\mathcal{R}_o(0)$, we have
\begin{eqnarray}
&&{\log  {\det  (I_{N_1}+{\rho }_{11}H_{11}H^{\dagger }_{11}+{\rho }_{21}H_{21}H^{\dagger }_{21})\ }\ }\nonumber\\
&\stackrel{(a)}{=}&f((N_1,\left({\alpha }_{11},M_1\right),\left({\alpha }_{21},M_2\right))){\log  \mathsf{SNR}\ }+o({\log  \log  \mathsf{SNR}\ }),
\end{eqnarray}
where $(a)$ is obtained from (\ref{dof2}). Now, dividing both sides by $\log  \mathsf{SNR}$, we get the first GDoF expression.

\noindent {\bf Second term:} The second bound is similar to the first bound by exchanging $1$ and $2$ in the indices.

\noindent {\bf Third term:} According to the third bound in $\mathcal{R}_o(0)$, we have
\begin{eqnarray}
&&\log  \det  \left(I_{N_2}+{\rho }_{12}H_{12}H^{\dagger }_{12}\right)+\log  \det (I_{N_1}+{\rho }_{11}H_{11}H^{\dagger }_{11}-\nonumber\\
&&\sqrt{{\rho }_{11}{\rho }_{12}}H_{11}H^{\dagger }_{12}(I_{N_2}+{\rho }_{12}H_{12}H^{\dagger }_{12})^{-1}
\sqrt{{\rho }_{11}{\rho }_{12}}H_{12}H^{\dagger }_{11})\nonumber\\
&\stackrel{(a)}{=}&{\alpha }_{12}{\min  \left(M_1,N_2\right)\ }+{\alpha }_{11}{\min  \left({\left(M_1-N_2\right)}^+,N_1\right)\ }+\nonumber\\
&&{\left({\alpha }_{11}-{\alpha }_{12}\right)}^+{\min  \left(M_1,N_1\right)\ }-{\min  \left({\left(M_1-N_2\right)}^+,N_1\right)\ }
+o({\log \mathsf{SNR}\ }),
\end{eqnarray}
where $(a)$ is obtained from Lemma \ref{dof3} and Lemma \ref{lemma-dof}. Now, dividing both sides by $\log \mathsf{SNR}$, the third GDoF bound results.

\noindent {\bf Fourth term:} The fourth term is similar to the third term by exchanging $1$ and $2$ in the indices.

\noindent {\bf Fifth term:} According to the fifth bound in $R_o(0)$, we have
\begin{eqnarray}
&&\log \det  \left(I_{N_2}+{\rho }_{22}H_{22}H^{\dagger }_{22}+{\rho }_{12}H_{12}H^{\dagger }_{12}\right)\nonumber\\
&&+\log\det (I_{N_1}+{\rho }_{11}H_{11}H^{\dagger }_{11}-
\sqrt{{\rho }_{11}{\rho }_{12}}H_{11}H^{\dagger }_{12}
(I_{N_2}+{\rho }_{12}H_{12}H^{\dagger }_{12})^{-1}
\sqrt{{\rho }_{11}{\rho }_{12}}H_{12}H^{\dagger }_{11})\nonumber\\
&\stackrel{(a)}{=}& f\left(\left(N_2,\left({\alpha }_{22},M_2\right),\left({\alpha }_{12},M_1\right)\right)\right)+{\alpha }_{11}{\min  \left({\left(M_1-N_2\right)}^+,N_1\right)\ }+\nonumber\\
&&{\left({\alpha }_{11}-{\alpha }_{12}\right)}^+\left({\min  \left(M_1,N_1\right)\ }-{\min  \left({\left(M_1-N_2\right)}^+,N_1\right)\ }\right)
+o({\log \mathsf{SNR}\ }),
\end{eqnarray}
where $(a)$ is obtained from Lemma \ref{dof2} and Lemma \ref{lemma-dof}. Now, dividing both sides by $\log \mathsf{SNR}$, the fifth GDoF bound results.

\noindent {\bf Sixth term:} The sixth term is similar to the fifth term by exchanging $1$ and $2$ in the indices.

\end{appendices}

\bibliographystyle{IEEETran}
\bibliography{bib}

\end{document}